\newcommand*{\email}[1]{\href{mailto:#1}{\nolinkurl{#1}}}
\DeclareMathOperator{\sgn}{sgn}
\newcommand{\Simplex}{\ensuremath{\Delta}}
\newcommand{\eps}{\ensuremath{\varepsilon}}
\newcommand{\NN}{\ensuremath{\mathbb{N}}}
\newcommand{\QQ}{\ensuremath{\mathbb{Q}}}
\newcommand{\RR}{\ensuremath{\mathbb{R}}}
\newcommand{\zo}{\ensuremath{\{0,1\}}}
\newcommand{\PTIME}{\ensuremath{\mathrm{P}}}
\newcommand{\NP}{\ensuremath{\mathrm{NP}}}
\newcommand{\PSPACE}{\ensuremath{\mathrm{PSPACE}}}
\newcommand{\cETR}{\ensuremath{\exists\RR}}
\newcommand{\FP}{\ensuremath{\mathrm{FP}}}
\newcommand{\FNP}{\ensuremath{\mathrm{FNP}}}
\newcommand{\TFNP}{\ensuremath{\mathrm{TFNP}}}
\newcommand{\PPA}{\ensuremath{\mathrm{PPA}}}
\newcommand{\PPAD}{\ensuremath{\mathrm{PPAD}}}
\newcommand{\LinearFIXP}{\ensuremath{\mathrm{LinearFIXP}}}
\newcommand{\FIXP}{\ensuremath{\mathrm{FIXP}}}
\newcommand{\FIXPA}{\ensuremath{\mathrm{FIXP}_a}}
\newcommand{\LinearBU}{\ensuremath{\mathrm{LinearBU}}}
\newcommand{\BU}{\ensuremath{\mathrm{BU}}}
\newcommand{\BUA}{\ensuremath{\mathrm{BU}_a}}
\newcommand{\BBU}{\ensuremath{\mathrm{BBU}}}
\newcommand{\BBUA}{\ensuremath{\mathrm{BBU}_a}}
\newcommand{\CH}{\ensuremath{\mathrm{CH}}}
\newcommand{\CHA}{\ensuremath{\mathrm{CH}_a}}
\newcommand{\BCH}{\ensuremath{\mathrm{BCH}}}
\newcommand{\Sel}{\ensuremath{\operatorname{Sel}}}
\newcommand{\deltaSel}{\ensuremath{\Sel_\delta}}
\newcommand{\ETR}{\textsc{ETR}}
\newcommand{\PosSLP}{\textrm{PosSLP}}
\newcommand{\BP}{\ensuremath{\operatorname{BP}}}
\newcommand{\Sol}{\ensuremath{\operatorname{Sol}}}
\newcommand{\MulC}{\ensuremath{\ast\zeta}}
\newcommand{\sqrtk}[1]{\ensuremath{\sqrt[\uproot{2}k]{#1}}}
\newcommand{\BasisPlusMinusTimes}{\ensuremath{\{+,-,\ast\}}}
\newcommand{\BasisPlusMinusTimesDiv}{\ensuremath{\{+,-,\ast,\div\}}}
\newcommand{\BasisPlusMulCMax}{\ensuremath{\{+,\MulC,\max\}}}
\newcommand{\BasisPlusMinusMulCMaxMin}{\ensuremath{\{+,-,\MulC,\max,\min\}}}
\newcommand{\BasisPlusTimesMax}{\ensuremath{\{+,\ast,\max\}}}
\newcommand{\BasisPlusMinusTimesMaxMin}{\ensuremath{\{+,-,\ast,\max,\min\}}}
\newcommand{\BasisTPlusTMinusTimesMaxMin}{\ensuremath{\{+_{T[0,1]},-_{T[0,1]},\ast,\max,\min\}}}
\newcommand{\BasisPlusTimesDivMax}{\ensuremath{\{+,\ast,\div,\max\}}}
\newcommand{\BasisPlusMinusTimesDivMaxMin}{\ensuremath{\{+,-,\ast,\div,\max,\min\}}}
\newcommand{\BasisPlusMinusTimesDivMaxMinRoots}{\ensuremath{\{+,-,\ast,\div,\max,\min,\sqrtk~\}}}
\newcommand{\NE}{\ensuremath{\mathrm{NE}}}
\newcommand{\boundary}{\ensuremath{\partial}}
\newtheorem{theorem}{Theorem}
\newtheorem{proposition}{Proposition}
\newtheorem{lemma}{Lemma}
\newtheorem{definition}{Definition}
\newtheorem{observation}{Observation}
\begin{document}
\title{Strong Approximate Consensus Halving\\ and the Borsuk-Ulam Theorem}

\author[2]{Eleni Batziou}
\author[1]{Kristoffer Arnsfelt Hansen}
\author[1]{Kasper Høgh}
\affil[1]{Aarhus University\authorcr\email{arnsfelt@cs.au.dk}\\\email{kh@cs.au.dk}}
\affil[2]{Technical University of Munich\authorcr\email{batziou@in.tum.de}}

\maketitle

\begin{abstract}
  In the consensus halving problem we are given $n$ agents with
  valuations over the interval~$[0,1]$. The goal is to divide the
  interval into at most $n+1$ pieces (by placing at most $n$ cuts),
  which may be combined to give a partition of~$[0,1]$ into two sets
  valued equally by all agents. The existence of a solution may be
  established by the Borsuk-Ulam theorem.  We consider the task of
  computing an approximation of an exact solution of the consensus
  halving problem, where the valuations are given by distribution
  functions computed by algebraic circuits. Here approximation refers
  to computing a point that $\eps$-close to an exact solution, also
  called \emph{strong} approximation. We show that this task is
  polynomial time equivalent to computing an approximation to an exact
  solution of the Borsuk-Ulam search problem defined by a continuous
  function that is computed by an algebraic circuit.

  The Borsuk-Ulam search problem is the defining problem of the
  complexity class~$\BU$. We introduce a new complexity class~$\BBU$
  to also capture an alternative formulation of the Borsuk-Ulam
  theorem from a computational point of view. We investigate their
  relationship and prove several structural results for these classes
  as well as for the complexity class $\FIXP$.
\end{abstract}

\section{Introduction}
Many computational problems, e.g.\ linear and semidefinite
programming, are most naturally expressed using real numbers. When the
model of computation is discrete, these problems must be recast as
discrete problems. In the case of linear programming this causes no
problems.  Namely, when the input is given as rational numbers and an
optimal solution exists, a rational valued optimal solution exists and
may be computed in polynomial time. For semidefinite programming
however, it may be the case that all optimal solutions are
irrational. For dealing with such cases we may instead consider the
\emph{weak optimization} problem as defined by Grötschel, Lovász and
Schrijver~\cite{GrotschelLS1988}: Given $\eps>0$, the task is to
compute a rational-valued vector $x$ that is $\eps$-close to the set
of feasible solutions and has objective value $\eps$-close to
optimal. Assuming we are also given, as an additional input, a
strictly feasible solution and a bound on the magnitude of the
coordinates of an optimal solution, the weak optimization problem may
be solved in polynomial time using the ellipsoid
algorithm~\cite{GrotschelLS1988}. Let us note however that without
additional assumptions, even the complexity of the basic
\emph{existence problem} of semidefinite feasibility is unknown. In
fact, the problem is likely to be computationally very
hard~\cite{DAM:TarasovV2008}. More precisely, it is hard for the
problem $\PosSLP$, which is the fundamental problem of deciding
whether an integer given by a division free arithmetic circuit is
positive~\cite{SICOMP:AllenderBKM2009}.

In this paper we consider real valued search problems, where existence
of a solution is guaranteed by topological existence theorems such as
the Brouwer fixed point theorem and the Borsuk-Ulam theorem. This
means that the search problems are \emph{total}, thereby fundamentally
differentiating them from general search problems where, as described
above, even the existence problem may be computational hard. We are
mainly interested in the \emph{approximation} problem: given $\eps>0$,
the task is to compute a rational-valued vector $x$ that is
$\eps$-close to the set of solutions.

Recall that the Brouwer fixed point theorem states every continuous
function $f: B^n \rightarrow B^n$, where $B^n$ is the unit $n$-ball,
has a fixed point, i.e.\ there is $x \in B^n$ such that
$f(x)=x$~\cite{MA:Brouwer1911}. The Borsuk-Ulam theorem states that
every continuous function $f\colon S^n \rightarrow \RR^n$, where $S^n$
is the unit $n$-sphere in $\RR^{n+1}$, maps a pair of antipodal points
of $S^n$ to the same point in~$\RR^n$, i.e. there is $x\in S^n$ such
that $f(x)=f(-x)$~\cite{FM:Borsuk1933}. The Brouwer fixed point
theorem is of course not restricted to apply to the domain $B^n$, but
applies to any domain that is homeomorphic to~$B^n$. Similarly the
Borsuk-Ulam theorem applies to any domain homeomorphic to~$S^n$ by an
antipode-preserving homeomorphism. It is well-known that the
Borsuk-Ulam theorem generalizes the Brouwer fixed point theorem, in the
sense that the Brouwer fixed point theorem is easy to prove using the
Borsuk-Ulam theorem~\cite{AMM:Su1997,AMM:Volovikov08}.

The Brouwer fixed point theorem and the Borsuk-Ulam theorem naturally
define corresponding real valued search problems, and thereby also
corresponding approximation problems. In addition, the statements of
the theorems naturally leads to another notion of approximation. For
the case of the Brouwer fixed point theorem we may look for an
\emph{almost} fixed point, i.e.\ $x \in B^n$ such that $f(x)$ is
$\eps$-close to $x$, and for the case of the Borsuk-Ulam theorem we
look for a pair of antipodal points that \emph{almost} map to the same
point, i.e.\ $x \in S^n$ such that $f(x)$ and $f(-x)$ are
$\eps$-close. Following~\cite{SICOMP:EtessamiY10}, we shall refer to
this notion of approximation as \emph{weak} approximation and to make
the distinction clear we refer to the former (and general) notion of
approximation as \emph{strong} approximation. In the setting of weak
approximation in relation to the Borsuk-Ulam theorem we assume that
$f$ has co-domain $B^n$.

In their seminal work, Etessami and
Yannakakis~\cite{SICOMP:EtessamiY10} introduced the complexity class
$\FIXP$ to capture the computational complexity of the real-valued
search problems associated with the Brouwer fixed point theorem, and
proved that the problem of finding a Nash equilibrium in a given
3-player game in strategic form is \FIXP-complete. In order to have a
notion of completeness, the class $\FIXP$ is defined to be closed
under reductions. The type of reductions chosen by Etessami and
Yannakakis, SL-reductions, consists of mapping between sets of
solutions by a composition of a \emph{projection} reduction followed
by individual affine transformation applied to each coordinate.

Etessami and Yannakakis consider different ways to cast real valued
search problems as discrete search problems. In addition to the
approximation problem, these are the \emph{partial computation}
problem where the task is to compute a solution to a given number of
bits of precision and \emph{decision} problems, where the task is to
evaluate a sign condition of the set of solutions given the promise
that either all solutions satisfy the condition or none of them do.
Of these we shall only consider the approximation problem. The class
$\FIXPA$ denotes the class of discrete search problems corresponding to
strong approximation of Brouwer fixed points and is defined to be
closed under polynomial time reductions. Etessami and Yannakakis also
prove that the problem \PosSLP\ reduce to the problem of approximating
a Nash equilibrium, thereby showing that $\FIXPA$ likely contains
search problems that are computationally very hard.

While the notion of SL-reductions is very restricted, it is sufficient
for proving completeness of the problem of finding Nash
equilibrium. Likewise, SL-reductions are sufficient for showing that
$\FIXP$ is robust with respect to the choice of domain for the Brouwer
function.

Another important reason for using SL-reductions is that they
immediately imply polynomial time reductions between the corresponding
decision and approximation problems (the partial computation problem
is more fragile and requires additional assumptions,
cf.~\cite{SICOMP:EtessamiY10}). As we are mainly interested in the
approximation problem more expressive notions of reducibility can be
considered, while maintaining the property that reducibility implies
polynomial time reducibility between the corresponding approximation
problems. A sufficient condition for this is that the mapping of
solutions is \emph{polynomially continuous} and polynomial time
computable.

\subsection{The Borsuk-Ulam Theorem}
Deligkas, Fearnley, Melissourgos, and
Spirakis~\cite{JCSS:DeligkasFMS21} recently introduced a complexity
class $\BU$ to capture, in an analogy to $\FIXP$, the computational
complexity of the real-valued search problems associated with the
Borsuk-Ulam theorem.

The Borsuk-Ulam theorem has a number of equivalent statements that are
also easy to derive from each other.  A function $f$ defined on the
unit sphere $S^n$ is \emph{odd} if $f(x)=-f(-x)$ for all $x \in
S^n$. Note that the boundary $\boundary B^n$ of the unit $n$-ball
$B^n$ is identical to $S^{n-1}$. We thus say that a function $f$
defined on $B^n$ is odd on $\boundary B^n$ if $f$ is odd when
restricted to $S^{n-1}$. We present the simple proof of the known fact
that the different formulations can be derived from each other, for
the purpose of discussing equivalence from a computational point of
view.
\begin{theorem}[Borsuk-Ulam]
\label{THM:Borsuk-Ulam}
  The following statements hold:
  \begin{enumerate}[(1)]
  \item If $f\colon S^n \rightarrow \RR^n$ is continuous there exists
    $x \in S^n$ such that $f(x)=f(-x)$.
  \item If $g\colon S^n \rightarrow \RR^n$ is continuous and odd there
    exists $x \in S^n$ such that $g(x)=0$.
  \item If $h\colon B^n \rightarrow \RR^n$ is continuous and odd on
    $\boundary B^n$ there exists $x \in B^n$ such that \mbox{$h(x)=0$}.
  \end{enumerate}
\end{theorem}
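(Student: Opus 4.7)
The plan is to take statement~(1) as the classical Borsuk-Ulam theorem of Borsuk~\cite{FM:Borsuk1933} and derive (2) and (3) from it using only continuity and the antipodal structure. Both reductions are short; the oddness hypothesis is exactly what links the three formulations, so there is no hidden topological content to reprove beyond what~(1) already supplies.

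For (1)$\Rightarrow$(2), given a continuous odd $g\colon S^n \to \RR^n$, I would apply (1) directly to $g$: there exists $x\in S^n$ with $g(x)=g(-x)$. Combined with $g(-x)=-g(x)$ this gives $g(x)=-g(x)$, forcing $g(x)=\allzeros$.

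For (1)$\Rightarrow$(3), I would route through (2). Given continuous $h\colon B^n \to \RR^n$ that is odd on $\boundary B^n$, I would identify $B^n$ with the closed upper hemisphere $S^n_+$ via a homeomorphism $\phi\colon B^n \to S^n_+$ that restricts to the identity on $\boundary B^n = S^{n-1}$, and then extend $h$ to the whole sphere by
\begin{equation*}
\tilde g(z) = \begin{cases} h(\phi^{-1}(z)), & z\in S^n_+,\\ -h(\phi^{-1}(-z)), & z\in S^n_-. \end{cases}
\end{equation*}
By construction $\tilde g$ is odd on $S^n$, so (2) produces some $z\in S^n$ with $\tilde g(z)=\allzeros$; after replacing $z$ by $-z$ if necessary one may assume $z\in S^n_+$, and then $y=\phi^{-1}(z)\in B^n$ satisfies $h(y)=\allzeros$.

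The only delicate point is that the two branches defining $\tilde g$ agree on the equator $S^{n-1}$, which is precisely where the oddness of $h$ on $\boundary B^n$ is used: for $z\in S^{n-1}$ the map $\phi^{-1}$ acts as the identity, so the two formulas evaluate to $h(z)$ and $-h(-z)$ respectively, and these coincide by oddness. Continuity of $\tilde g$ then follows from the pasting lemma, and no further verification is required. The whole argument is routine; its value for the paper is that it makes the correspondences explicit, since the same maps $g\mapsto g|_{S^{n-1}}$, $f\mapsto f-f\circ(-\mathrm{id})$, and $h\mapsto\tilde g$ will be re-used as the syntactic transformations between algebraic circuits when the authors discuss equivalence of the corresponding approximation problems.
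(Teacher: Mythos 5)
Your proof is correct and follows essentially the same route as the paper: (1)$\Rightarrow$(2) by applying (1) to $g$ and invoking oddness, and (2)$\Rightarrow$(3) by gluing $h$ across the two hemispheres via a homeomorphism $B^n \cong S^n_+$ that is the identity on the equator (the paper's version simply makes $\phi^{-1}$ the explicit orthogonal projection $\pi$). The only difference is one of scope: the paper also records the reverse implications, (2)$\Rightarrow$(1) via $g(x)=f(x)-f(-x)$ and (3)$\Rightarrow$(2) via $h(x)=g\bigl(x,(1-\norm{x}_2^2)^{1/2}\bigr)$, because both directions of these solution mappings are reused later when comparing the classes $\BU$ and $\BBU$.
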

\begin{proof}[Proof of equivalence]
  Given $f$ we may define $g(x)=f(x)-f(-x)$. Clearly $g$ is odd and we
  have $g(x)=0$ if and only if $f(x)=f(-x)$, which shows that (2)
  implies (1). Conversely, given $g$ we simply let $f=g$. If
  $f(x)=f(-x)$, then since $g$ is odd we have
  $f(x)=g(x)=-g(-x)=-f(-x)=-f(x)$ and hence $g(x)=f(x)=0$, which
  therefore shows (1) implies (2).

  We may view $S^n$ as two hemispheres, each homeomorphic to $B^n$,
  which are glued together along their equators. Let
  $\pi \colon S^n \rightarrow B^n$ be the orthogonal projection
  defined by $\pi(x_1,\dots,x_{n+1})=(x_1,\dots,x_n)$. Then given $h$
  we may define
  \[
    g(x) = \begin{cases} h(\pi(x)) & \text{ if } x_{n+1}\geq 0 \\
      -h(-\pi(x)) & \text{ if } x_{n+1}\leq 0 \end{cases} \enspace .
  \]
  The assumption that $h$ is odd on $\boundary B^n$ makes $g$ a
  well-defined continuous odd function. We have $g(x)=0$ if and only
  if $h(x)=0$, which shows that (2) implies (3). Conversely, given $g$
  we define $h$ by
  $h(x)=g\left(x,(1-\norm{x}_2^2)^{\tfrac{1}{2}}\right)$. Then $h$ is
  continuous and odd on $\boundary B^n$, since $x \in \boundary B^n$
  if and only if $\norm{x}^2_2=1$. Clearly if $h(x)=0$ we may let
  $y=(x,(1-\norm{x}_2^2)^{\tfrac{1}{2}})$ and have $g(y)=0$. On the
  other hand, when $g(y)=0$ we may define $x=(y_1,\dots,y_n)$ if
  $y_{n+1}\geq 0$ and $x=(-y_1,\dots,-y_n)$ if $y_{n+1} < 0$, and we
  have $h(x)=0$. Together this shows that (3) implies (2).
\end{proof}

The class $\BU$ defined in~\cite{JCSS:DeligkasFMS21} corresponds to
first formulation of the above theorem. We may clearly consider the
second formulation equivalent to the first also from a computational
point of view. In particular, when translating between formulations,
the set of solutions is unchanged. Note that this set of solutions has
the property that all solutions come in pairs: when $x$ is a solution
then $-x$ is a solution as well. For the third formulation of the
theorem this property only holds for solutions on the
boundary~$\boundary B^n$.

In contrast, while the mapping of solutions of the third formulation
to the second (and first) formulation given above is continuous this
is not the case in the other direction. More precisely, consider
$y \in S^n$ such that $g(y)=0$. For a solution strictly contained in
the upper hemisphere, the orthogonal projection to the first~$n$
coordinates produces $x \in B^n$ such that $h(x)=0$. For a
solution~$y$ strictly contained in the lower hemisphere, the
projection is instead applied to the antipodal solution~$-y$.

To clarify this issue from a computational point of view we introduce
a new class $\BBU$ of real valued search problems corresponding to the
third formulation of Theorem~\ref{THM:Borsuk-Ulam}, and it will follow
from definitions that $\BU \subseteq \BBU$. In the context of strong
approximation however, the corresponding classes of discrete search
problems $\BUA$ and $\BBUA$ will be shown to coincide. The idea is
that given an approximation to $y \in S^n$, where $g(y)=0$, that is
sufficiently close to the equator of $S^n$, there is no harm in
\emph{incorrectly} deciding to which hemisphere $y$ belongs, since
solutions $x \in \boundary B^n$ for which $h(x)=0$ also come in pairs.

For the class $\BU$, the notion of SL-reductions is clearly too
restrictive to allow a reasonable comparison to $\FIXP$. Closing the
class $\BU$ by SL-reductions, the solutions would still come in pairs,
thereby imposing strong conditions on the set of solutions. On the
other hand the reductions should also not be \emph{too} strong. In
particular it would be desirable that $\FIXP$ would be still be closed
under the chosen notion of reductions. This issue is not discussed
in~\cite{JCSS:DeligkasFMS21}. We shall therefore propose a suitable
notion of reductions for both $\BU$ and $\BBU$.
 
\subsection{Consensus Halving}
The Consensus halving problem is a classical problem of \emph{fair
  division}~\cite{MSS:SimmonsS03}. We are given a set of $n$ bounded
and continuous measures $\mu_1,\dots,\mu_n$ defined on the interval
$A=[0,1]$.  The goal is to partition the interval $A$ into at most
$n+1$ intervals, i.e.\ by placing at most $n$ \emph{cuts}, such that
unions of these intervals form another partition $A=A^+ \cup A^-$ of
$A$ satisfying $\mu_i(A^+)=\mu_i(A^-)$ for every $i$. We may think of
the intervals being assigned a \emph{label} from the set $\{+,-\}$,
and $A^+$ is precisely the union of the intervals labeled by~$+$. Such
a partition is also known as a consensus halving. Using the Borsuk-Ulam
theorem, Simmons and Su~\cite{MSS:SimmonsS03} proved that a consensus
halving using at most $n$ cuts always exists. Simmons and Su represent
a division of $A$ as a point $x$ on the unit $n$-sphere $S^n_1$ with
respect to the $\ell_1$-norm. The point $x$ is viewed as representing
a division into \emph{precisely} $n+1$ intervals, where some intervals
are possibly empty. More precisely, the $i$-th interval has length
$\abs{x_i}$, and intervals of length~0 may simply be discarded. The
intervals of positive length are then labeled according
to~$\sgn(x_i)$. Note that for any $x$, the antipode $-x$ represent the
division where the sets $A^+$ and $A^-$ are exchanged. This naturally
leads to a formulation using the Borsuk-Ulam
theorem~\cite{MSS:SimmonsS03}. Namely we may consider the function
$F \colon S^n_1 \rightarrow \RR^n$ given by $F(x)_i= \mu_i(A^+)$, and
note that any $x \in S^n_1$ for which $F(x)=F(-x)$ represent a
consensus halving.

We are interesting in the simple setting of additive measures, where we
have corresponding density functions $f_1,\dots,f_n$ such that
$\mu_i(B) = \int_B f_i(x)\, dx$. To cast the consensus halving problem
as a real valued search problem we follow~\cite{JCSS:DeligkasFMS21}
and assume that the measures $\mu_1,\dots,\mu_n$ are given by the
distribution functions $F_1,\dots,F_n$ defined by
$\int_0^x f_i(x)\, dx$. An instance of the consensus halving problem
is then given as a list of algebraic circuits computing these
distribution functions.

Corresponding to the different formulations of the Borsuk-Ulam theorem
as a real valued search problem with domain $S^n$ or $B^n$ we get two
different formulations of the consensus halving problem. We denote
these by $\CH$ and $\BCH$ respectively. Deligkas~et~al.\ proved
membership of $\CH$ in $\BU$ following the proof of Simmons and Su,
and proved hardness of $\CH$ for $\FIXP$. Combining these, it follows
that $\FIXP \subseteq \BU$.

\subsection{Strong versus Weak Approximation}
The difference between weak and strong approximation was studied in
detail in the general context of the Brouwer fixed point theorem by
Etessami and Yannakakis. A central example is the problem of finding a
Nash equilibrium (NE). An important notion of approximation of a NE is
the notion of an $\eps$-NE. Computing an $\eps$-NE of a given
strategic form game $\Gamma$ is polynomial time equivalent to
computing a weak $\eps'$-approximation to a fixed point the Nash's
Brouwer function $F_\Gamma$ associated to
$\Gamma$~\cite[Proposition~2.3]{SICOMP:EtessamiY10}. In turn,
computing a weak $\eps'$-approximation to a fixed point of $F_\Gamma$
polynomial time reduces to computing a strong $\eps''$-approximation
to a fixed point of
$F_\Gamma$~\cite[Proposition~2.2]{SICOMP:EtessamiY10}, since the
function $F_\Gamma$ is polynomially continuous and polynomial time
computable. In general however an $\eps$-NE might be far from any
actual NE, unless $\eps$ is inverse doubly exponentially small as a function
of the size of the game~\cite[Corollary~3.8]{SICOMP:EtessamiY10}.

For the problem of consensus halving we can illustrate the difference
between weak and strong approximation by a simple example. We shall
refer to a weak $\eps$-approximation of a consensus halving as simply
an $\eps$-consensus halving. Consider a single agent whose measure
$\mu$ is on the interval $[0,1]$ is given by the following density
\[
  f(x) = \begin{cases}
    (1+\eps)/\eps & \text{ if } 0\leq x < \eps/2\\
    0 & \text{ if } \eps/2 \leq x < 1-\eps/2\\
    (1-\eps)/\eps & \text{ if } 1-\eps/2 \leq x \leq 1
  \end{cases}
\]
We have $\mu([0,1])=1$ and since $\mu$ is a step function, the
corresponding distribution function $F$ is piecewise linear. The
unique consensus halving is obtained by placing a cut at the point
$\eps/2-\eps^2/(2+2\eps)$. Placing a cut at any point
$t \in [\eps/2-\eps^2/(1+\eps),1-\eps/2]$ results in an
$\eps$-consensus halving, i.e.\ such that
$\abs{\mu([0,t])-\mu([t,1])}\leq \eps$. Thus an $\eps$-consensus
halving might be very far from an actual consensus halving. Note also
that placing a cut at any point $t \in [0,3\eps/2-\eps^2/(2+2\eps)]$
is a strong $\eps$-approximation, which illustrates that a strong
approximation is not necessarily a weak approximation. On the other
hand, a strong $(\eps^2/2)$-approximation is also an $\eps$-consensus
halving.

The Brouwer fixed point theorem and the Borsuk-Ulam theorem can both
be proved starting from combinatorial analogoues of the two
theorems, namely from Sperner's lemma and Tucker's lemma,
respectively. The proofs of these two lemmas are constructive, but
using them to derive the Brouwer fixed point theorem and the
Borsuk-Ulam theorem involve a nonconstructive limit argument. Let us
in passing note that while Sperner's lemma, like the Borsuk-Ulam
theorem, has several different formulations, it is usually formulated
as the combinatorial analogue of the third formulation of
Theorem~\ref{THM:Borsuk-Ulam}.

Sperner's and Tucker's lemma give rise to total $\NP$ search
problems. These turn out to be complete for the complexity classes
$\PPAD$ and $\PPA$ introduced in seminal work by
Papadimitriou~\cite{JCSS:Papadimitriou1994}. Papadimitriou proved
$\PPAD$-completeness of the problem given by Sperner's lemma as well
as membership in $\PPA$ of the problem given by Tucker's lemma, while
$\PPA$-completeness of the latter problem was proved recently by
Aisenberg, Bonet, and Buss~\cite{JCSS:AisenbergBB20}. These results
also imply that the classes $\PPAD$ and $\PPA$ corresponds to the
problems of computing weak approximations to Brouwer fixed points and
to Borsuk-Ulam points.

The computational complexity of the problems of computing an $\eps$-NE
and of computing an $\eps$-consensus halving was settled in
breakthroughs of two lines of research. Computing an $\eps$-NE was
shown to be $\PPAD$-complete by Daskalakis, Goldberg and
Papadimitriou~\cite{SICOMP:DaskalakisGP2009} and Cheng and
Deng~\cite{FOCS:ChenDeng2006}. Computing an $\eps$-consensus halving
was shown to be $\PPA$-complete by Filos-Ratsikas and
Goldberg~\cite{STOC:Filos-RatsikasG18,STOC:Filos-RatsikasG19}.

\subsection{Our Results}
Our main result is that the problem of strong approximation of
consensus halving is equivalent to strong approximation of the
Borsuk-Ulam theorem.
\begin{theorem}
  \label{THM:CH-BUa-complete}
  The strong approximation problem for $\CH$ is $\BUA$-complete.
\end{theorem}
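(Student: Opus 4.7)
The plan proceeds in two parts. First, for membership of the strong approximation problem for $\CH$ in $\BUA$, I would refine the inclusion $\CH \in \BU$ of Deligkas et al.\ into a polynomially continuous reduction. Their reduction sends a consensus halving instance with distribution functions $F_1,\dots,F_n$ to the Borsuk-Ulam function $F\colon S^n_1 \to \RR^n$ defined by $F(x)_i = \mu_i(A^+)$, where $x \in S^n_1$ encodes the cuts and labels exactly as in Simmons-Su. Since consensus halvings are precisely the points satisfying $F(x)=F(-x)$, approximate consensus halvings correspond to approximate Borsuk-Ulam points under essentially the identity map on the domain. The one technicality is reconciling the $\ell_1$-sphere $S^n_1$ with the $\ell_2$-sphere $S^n$ used in the definition of $\BU$; this is handled by the antipode-preserving radial homeomorphism $x \mapsto x/\Norm{x}_2$, which is polynomially continuous on the relevant locus.

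For hardness, given a $\BU$ instance $f\colon S^n \to \RR^n$ computed by an algebraic circuit, I would first convert to the equivalent $\BBU$ formulation $h\colon B^n \to \RR^n$ odd on $\boundary B^n$ using Theorem~\ref{THM:Borsuk-Ulam}. The core step is then to construct a consensus halving instance whose solutions encode the zero-set of $h$. I would partition the interval $[0,1]$ into regions, using \enquote{coordinate} agents whose balance forces $n$ designated cuts to lie at positions encoding a vector $x \in B^n$, and \enquote{computation} agents whose density functions simulate the algebraic circuit for $h$ so that the balance $\mu_i(A^+)-\mu_i(A^-)$ equals a suitably scaled $h_i(x)$. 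The framework is reminiscent of the $\PPA$-hardness reductions of Filos-Ratsikas and Goldberg, but the requirements are stronger since we seek $\BUA$-hardness rather than $\PPA$-hardness.

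The main obstacle lies in the hardness direction, namely ensuring that the gadget simulation preserves strong approximation. The weak-approximation reductions for $\PPA$-completeness exploit the fact, illustrated by the explicit density example earlier in the paper, that an $\eps$-consensus halving can lie arbitrarily far from any exact solution; intermediate gadget values are therefore only required to be correct up to a small halving error. In the strong setting the situation is rigid: a cut placed $\eps$-close to its intended position must translate to intermediate circuit values computed only $\poly(\eps)$-off, and the final balance must yield $h_i(x)$ to sufficient accuracy that the recovered $x \in B^n$ lies $\poly(\eps)$-close to a true zero of $h$. Designing gadgets with this polynomial-continuity property, while simulating an arbitrary algebraic circuit, is the central technical challenge.
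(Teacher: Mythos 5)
Your membership direction is essentially the paper's: $\CH \in \BU$ already follows from the form of Definition~\ref{DEF:CH}, and the $\ell_1$-versus-$\ell_\infty$ discrepancy is handled by the domain-change propositions of Section~\ref{SEC:BU-BBU}, which are indeed polynomially continuous (they are SL-reductions or PL-reductions). That part is fine, if briefer than the paper.

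The hardness direction is where the proposal stops short, and the gap is genuine. You correctly identify that one should reduce from the $\BBU$ formulation and build a consensus-halving instance with coordinate-encoding agents and circuit-simulation agents, and you correctly identify ``preserving strong approximation'' as the central problem. But you then frame that problem as needing a cut placed $\eps$-close to its intended position to propagate through the circuit gadgets with only $\poly(\eps)$ error, and you leave it as an open challenge. This framing is not how the paper resolves it, and in fact a naive error-propagation argument through an arbitrary $\{+,-,\ast,\max,\min\}$ circuit would fail: repeated squaring alone makes the circuit exponentially sensitive to its input, so $\poly(\eps)$ stability is hopeless. The paper's resolution sidesteps error propagation entirely via an ``almost implies near'' argument (Lemma~\ref{AIN}): it never asks the circuit simulation to be numerically stable. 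Instead it (i) arranges, via \emph{feedback agents} whose densities are extremely narrow ``Dirac blocks'' placed in the circuit-output intervals, that any \emph{exact} consensus-halving $z^*$ encodes a point $x^*$ with $\norm{F(x^*)}_\infty \leq \delta$; (ii) invokes quantifier-elimination bounds from real algebraic geometry (Section~\ref{SEC:ToolsRAG}) to get an explicit $\delta \geq \eps^{2^{q(|I|)}}$ — doubly exponentially small, \emph{not} $\poly(\eps)$, and constructible by a small circuit via repeated squaring — for which $\norm{F(x^*)}_\infty \leq \delta$ implies $x^*$ is $\eps/2$-close to a true zero; and (iii) observes that a strong $\eps'$-approximation $z$ of $z^*$ changes the label-encoded coordinates by at most $2n\eps'$, so the encoded $x$ is $\eps/2$-close to $x^*$, and the triangle inequality finishes.

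Two further ingredients you do not mention are essential for (i) to work. First, the feedback agents themselves place $n$ uncontrolled cuts that may land anywhere (``stray cuts''), corrupting circuit simulations; the paper copes by running $2n+1$ parallel circuit simulators so that at least $n+1$ are uncorrupted, and a majority-style counting argument on the Dirac blocks then forces $\norm{F(x^*)}_\infty \leq \delta$ in any exact solution. Second, the construction needs a hybrid of \emph{label encoding} in the coordinate region (robust to perturbations of cuts) and \emph{position encoding} inside the circuit simulators (needed to implement multiplication via $xy = ((x+y)^2 - x^2 - y^2)/2$ with separate squaring gadgets on $[-1,0]$ and $[0,1]$), together with a sign-detection step exploiting oddness of $F$ on the boundary. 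Without these pieces — the almost-implies-near lemma with its quantitative bound, the Dirac-block feedback, the $2n+1$ simulator copies, and the two encodings — the hardness proof does not go through, so what you have is a correct identification of the target and the obstacle, not yet a proof.
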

As described we view the consensus halving problem as the real valued
search problem with its domain being either the unit sphere or the
unit ball with respect to the $\ell_1$-norm. The theorem is proved by
reduction from the real valued search problem associated with the
Borsuk-Ulam theorem on the domain being the unit ball with respect to
the $\ell_\infty$-norm, i.e.\ from a defining problem of the class
$\BBU$.

It is of general interest to study the relationship between search
problems given by the Borsuk-Ulam theorem on different
domains from a computational point of view. The reduction establishing
the proof of Theorem~\ref{THM:CH-BUa-complete} gives additional
motivation for this. The domains we consider are unit spheres $S^n_p$
and unit balls $B^n_p$ with respect to the $\ell_p$-norm for $p\geq 1$
or $p=\infty$. It is of course straightforward to construct
homeomorphisms between unit spheres or unit balls with respect to
different norms, and these could be used to define reductions between
the different problems. We would however like that the mapping of
solutions is simple, and in particular we would like to avoid
divisions and root operations. We prove that one may in fact reduce
between domains using SL-reductions.

Deligkas~et~al.\ gave a reduction from the \FIXP-complete problem of
finding a Nash equilibrium to \CH. Combined with membership of $\CH$
in $\BU$, this gives the inclusion $\FIXP \subseteq \BU$. We observe
that a proof due to Volovikov~\cite{AMM:Volovikov08} of the Brouwer
fixed point theorem from the Borsuk-Ulam theorem may be adapted to
give a simple proof of the inclusion $\FIXP \subseteq \BU$.

For the class $\FIXP$ we prove two interesting structural properties
that do not appear to have been observed earlier. While $\FIXP$ is
defined using SL-reductions, we show that $\FIXP$ is closed under
polynomial time reductions where the mapping of solutions is expressed
by \emph{general} algebraic circuits. This in particular supports that
one may reasonably define the classes $\BU$ and $\BBU$ using less
restrictive notions of reductions than SL-reductions. We propose to
have the mapping of solutions be computed by algebraic circuits
involving the operations of addition, multiplication by scalars, as
well as maximization. This means that the mapping of solutions is a
piecewise linear function, and we refer to these as PL-reductions. The
second structural result for $\FIXP$ is a characterization of the
class by very simple Brouwer functions. These are defined on the
unit-hypercube domain $[0,1]^n$ and each coordinate function is simply
one of the operations $\BasisPlusMinusTimesMaxMin$, modified to be
have the output truncated to the interval $[0,1]$.

For the classes $\BU$ and $\BBU$ we prove that they are also closed
under reductions where the mapping of solutions is computed by general
algebraic circuits, but with the additional requirement that this
function must be odd.

For the class $\FIXP$, an interesting consequence of the proof that
finding a Nash equilibrium is complete, is that the class may be
characterized by Brouwer functions computed by algebraic circuits
without the division operation. The proof also shows that the class
$\FIXP$ is unchanged even when allowing root operations as basic
operations. We prove by a simple transformation that the classes $\BU$
and $\BBU$ may be characterized using algebraic circuits without the
division operation. Furthermore, as a consequence of
Theorem~\ref{THM:CH-BUa-complete} the class of strong approximation
problems $\BUA=\BBUA$ is unchanged even when allowing root operations
as basic operations.

\subsection{Comparison to previous work}
As a precursor to the proof of $\PPA$-completeness of computing an
$\eps$-consensus halving, Filos{-}Ratsikas, Frederiksen, Goldberg and
Zhang~\cite{MFCS:Filos-RatsikasFGZ18} proved the problem to be
$\PPAD$-hard. Deligkas~et~al.~\cite{JCSS:DeligkasFMS21} uses ideas
from this proof together with additional new ideas to obtain their
proof of $\FIXP$-hardness for computing an exact consensus halving.

While $\PPAD \subseteq \PPA$, the $\PPAD$-hardness result of
\cite{MFCS:Filos-RatsikasFGZ18} is not implied by the recent proofs of
$\PPA$-completeness. In particular, the work
\cite{MFCS:Filos-RatsikasFGZ18} proves $\PPAD$-hardness even for
\emph{constant} $\eps$, while the work of
\cite{STOC:Filos-RatsikasG19} only proves $\PPA$-hardness for $\eps$
being inverse polynomially small. In the same way, while
$\FIXP \subseteq \BU$, $\FIXP$-hardness of computing an exact
consensus halving is not implied by our reduction, since
Theorem~\ref{THM:CH-BUa-complete} establishes $\BUA$-hardness rather
than $\BU$-hardness. Recently a considerably simpler proof of
$\PPA$-hardness for computing an $\eps$-consensus halving was given by
Filos{-}Ratsikas, Hollender, Sotiraki and
Zampetakis~\cite{EC:Filos-RatsikasHSZ20}, and our reduction is
inspired by this work.

All reductions described above are similar in the sense that one or
more evaluations of a circuit are expressed in the consensus halving
instance. The full interval $A$ is partitioned into subintervals, 
cuts within these subintervals encode values in various ways, and
agents implement the gates of the circuit by placing cuts. A main
difference between the reductions establishing $\PPAD$-hardness and
$\FIXP$-hardness to those establishing $\PPA$-hardness is that in the
former reductions, all cuts are constrained to be placed in distinct
subintervals. This reason this is possible is that the objective is to
find a fixed point of the circuit, which means that inputs and outputs
may be identified.

In the setting of $\PPA$ and $\BBU$ the objective is to find a
``zero'' of the circuit. More precisely, for the setting of $\PPA$ the
objective is to find two adjacent points of a given Tucker labeling
that receive complementary labels, i.e.\ labels of different sign but
same absolute value. For the setting of $\BBU$ the objective is to
find an actual zero point of the circuit. All of the reductions
establishing $\PPA$-hardness of computing an $\eps$-consensus halving
have the property that cuts encoding the input of the circuit are
\emph{free} cuts, meaning that they can in principle be placed
anywhere, and as a result also interfere with the evaluations of the
circuit. This is also the case for our reduction, and this invariably
limits its applicability to the approximation problem.

In the reduction of~\cite{EC:Filos-RatsikasHSZ20}, the interval $A$ is
structured into different regions, a coordinate-encoding region, a
constant-creation region, several circuit-simulation regions, and
finally a feedback region. Our reduction also has a
coordinate-encoding region and several circuit simulation regions, but
the functions performed by the constant-creation region and feedback
regions perform in~\cite{EC:Filos-RatsikasHSZ20} is our reduction
integrated in the individual circuit simulation regions and done
differently.

A novelty of the reduction of~\cite{EC:Filos-RatsikasHSZ20} compared
to previous reductions is in how values are encoded by cuts in
subintervals. In previous reductions, values are encoded by what we
will call \emph{position encoding}. Here it is required that there is
exactly one cut in the subinterval, and the value encoded is
determined by the distance between the cut position and the left
endpoint of the interval. In~\cite{EC:Filos-RatsikasHSZ20} values are
encoded by what we will call \emph{label encoding}. Here there is no
requirement on the number of cuts in the subinterval, and the value
encoded is simply the difference between the Lebesgue measures of the
subsets of the interval receiving label~$+$ and label~$-$. We shall
employ a hybrid approach where the coordinate-encoding region uses
label encoding while the circuit-simulation regions uses position
encoding. The first step performed in a circuit-simulation region is
thus to copy the input from the coordinate-encoding region. Switching
to position encoding allows us in particular to implement a
multiplication gate, similarly to~\cite{JCSS:DeligkasFMS21}. Here
the multiplication $xy$ is computed via the identity
$xy=((x+y)^2-x^2-y^2)/2$. In~\cite{JCSS:DeligkasFMS21} where values
range over~$[0,1]$, the squaring operation may be implemented directly
by agents. In our case values range over the interval~$[-1,1]$, and
the squaring operation is decomposed further, having agents compute it
separately over the intervals~$[-1,0]$ and~$[0,1]$.

In analogy to~\cite{EC:Filos-RatsikasHSZ20} we have feedback agents
that ensures that the circuit evaluates to~0 on the encoded input. The
criteria that the agents check is however different, and for our
purposes it is crucial that we have the same sign pattern in the
position encoding of the output of the circuit as the copy of the
input made by the circuit-simulation region. The actual detection of
an output of~0 is performed by using approximations of the Dirac delta
function. For computing the distribution functions of the feedback
agents, we make use of the fact that these are computed by algebraic
circuits, which enable us to make a strong approximation of the Dirac
delta function via repeated squaring.

\subsection{Organization of Paper}
In Section~\ref{SEC:Preliminaries} we introduce necessary terminology
and we give a detailed account of real valued search problems and
reducibility between these. Our structural results for $\FIXP$ are
given in Section~\ref{SEC:Structural-FIXP} and our structural results
for $\BU$ and $\BBU$ are given in
Section~\ref{SEC:BU-BBU}. Section~\ref{SEC:BU-BBU} also includes the
simple proof of the inclusion $\FIXP \subseteq \BU$. We present our
main result, Theorem~\ref{THM:CH-BUa-complete}, in
Section~\ref{SEC:BBU-to-CH}.

\section{Preliminaries}
\label{SEC:Preliminaries}

\subsection{Algebraic Circuits}
Let $B$ be a finite set of real valued functions, for example
$B=\BasisPlusMinusTimesDivMaxMin$. An \emph{algebraic circuit} $C$
with $n$ inputs and $m$ outputs over the \emph{basis} $B$ is given by
an acyclic graph $G=(V,A)$ as follows. The \emph{size} of $C$ is equal
to the number of nodes of $G$, which are also referred to as
\emph{gates}. The \emph{depth} of $C$ is equal to the length of the
longest path of $G$. Every node of indegree~0 is either an \emph{input
  gate} labeled by a variable from the set $\{x_1,\dots,x_n\}$ or a
\emph{constant gate} labeled by a real valued constant. Every other
node is labeled by an element of $B$ called the \emph{gate
  function}. If a node $v$ is labeled by a gate function
$g \colon A \rightarrow \RR$ with $A \subset \RR^k$ we require that $g$ has
exactly $k$ ingoing arcs with a linear order specifying the order of
arguments to $g$. The output of $C$ is specified by an ordered list of
$m$ (not necessarily distinct) nodes of $G$. The computation of $C$ on
a given input $x \in \RR^n$ is defined in the natural way. Computation
may fail in case a gate of $C$ labeled by a function
$g\colon A \rightarrow \RR$ receives an input outside $A$, and in this case
the output of $C$ is undefined. Otherwise we say that the output is
well defined and denote its value by $C(x)$.  If $D \subseteq \RR^n$
we say that $C$ computes a function $f\colon D \rightarrow \RR^m$ if $C(x)$
is well defined for all $x \in D$.

We shall in this paper just consider algebraic circuits where the
basis consists only of continuous functions. This means in particular
that any algebraic circuits computes a continous function as well. We
shall also only consider consider constant gates labeled with rational
numbers. In this case we are also interested in the \emph{bitsize} of
the encoding of the constants, which is the maximum bitsize of the
numerator or denominator. An important special class of algebraic
circuits are those over the basis $\BasisPlusMinusTimesDiv$ and using
just the constant~1. We refer to these as \emph{arithmetic
  circuits}. An arithmetic circuit with no division gates is called
division-free. Note that any integer of bitsize $\tau$ may be computed
by a division-free arithmetic circuit of size $O(\tau)$.

By using multiplication with the constant $-1$, the functions $-$ and
$\min$ may be simulated using $+$ and $\max$, respectively. In this
way we may convert a circuit over the full basis
$\BasisPlusMinusTimesDivMaxMin$ into an equivalent
$\BasisPlusTimesDivMax$-circuit. We shall also consider circuits where use
of the multiplication operator $\ast$ is \emph{restricted} to having
one of the arguments being a constant gate. We denote this by the
symbol $\MulC$ and use it in particular for defining
$\BasisPlusMulCMax$-circuits.

At times it will convinient to consider gate functions with their
output range truncated to stay within a given interval. If
$g\colon A \rightarrow \RR$ is a gate function and $a\leq b$ defines a real
interval $[a,b]$ we denote by $g_{T[a,b]}$ the gate function defined
by $g_{T[a,b]}(x)=a$ if $g(x)<a$, $g_{T[a,b]}(x)=b$ if $g(x)>b$, and
$g_{T[a,b]}(x)=g(x)$ otherwise. Note that $g_{T[a,b]}$ is continuous
whenever $g$ is continuous.

While we shall not consider circuits with the discontinous sign
function $\sgn$, in the context of approximating functions, it is
sometimes sufficient to use an approximation of $\sgn$ instead.  A
typical use of $\sgn(z)$ is to perform a selection between two values
$x$ and $y$.  We define the $\delta$-approximate selection function to
be the function that based on $\sgn(z)$ outputs values $x$ or $y$
except in the interval of length $\delta$ centered around~$0$
where it instead linearly interpolates between $x$ and $y$.
\begin{definition}
  For given $\delta>0$, the (two-sided) $\delta$-approximate selection
  function $\Sel$ is defined by
  \[
  \deltaSel(x,y,z) =
  \begin{cases}
    x & \text{ if } z \leq -\delta/2 \\  (y-x)z/\delta+(y+x)/2 & \text{
      if } -\delta/2 \leq z \leq \delta/2\\ y & \text{ if } \delta/2 \leq z
  \end{cases}
  \]
\end{definition}
We note that $\deltaSel$ may be computed as
$\deltaSel(x,y,z)=(1-t)/2\cdot x+(1+t)/2\cdot y$, where $t$ defined by
$t=\max(\min(z,\delta/2),-\delta/2)/(\delta/2)$ is the
$\delta$-approximation of $\sgn(z)$. In particular is
$\deltaSel(x,y,z)$ computed by a $\BasisPlusTimesMax$-circuit (or a
$\BasisPlusTimesDivMax$-circuit if we also view $\delta$ as a
variable).

\subsection{Search problems}
A general search problem $\Pi$ is defined by specifying to each input
instance $I$ a \emph{search space} (or \emph{domain}) $D_I$ and a set
$\Sol(I) \subseteq D_I$ of \emph{solutions}. We distinguish between
\emph{discrete} and \emph{real-valued} search problems. For discrete
search problems we assume that $D_I \subseteq \zo^{d_I}$ for an
integer $d_I$ depending on $I$. Analogously, for real-valued search
problems we assume that $D_I \subseteq \RR^{d_I}$ for an integer
depending on $I$. One could likewise distinguish between search problems
with discrete input and real-valued input. We are however mostly
interested in problems where the input is discrete, that is we assume
that instances $I$ are encoded as strings over a given finite alphabet
$\Sigma$ (e.g.\ $\Sigma=\zo$).

A very important class of discrete search problems arise from decision
problems given as languages in $\NP$, thereby forming the class of
$\NP$ search problems. More precisely, these are the discrete search
problems where we assume there are polynomial time algorithms that
(i)~given $I$ compute $d_I$ whose magnitude is polynomial in $\abs{I}$,
(ii)~given $I$ and $x \in \zo^{d_I}$ checks whether $x \in D_I$, and
lastly, (iii)~given $I$ and $x \in D_I$ checks whether $x \in \Sol(I)$. The
corresponding language in $\NP$ is then
$L=\{I \mid \Sol(I) \neq \emptyset\}$. The class of all $\NP$ search
problems is denoted by $\FNP$. The subclass $\TFNP$ of $\FNP$ consists
of the $\NP$ search problems for which $\Sol(I)\neq \emptyset$ for
every input $I$. An $\NP$ search problem $\Pi$ is said to be solvable
in polynomial time if there is a Turing machine running in polynomial
time that on input $I$ gives as output \emph{some} member $y$ of
$\Sol(I)$ in case $\Sol(I)\neq \emptyset$ and rejects otherwise. The
subclass of $\FNP$ consisting of the search problems solvable in
polynomial time is denoted by $\FP$, and it holds that $\FP=\FNP$ if
and only if $\PTIME=\NP$.

Many natural search problems are however defined with a continous
search space. Not all of these may adequately be recast as discrete
search problems, but are more naturally viewed as real-valued search
problems. One approach for studying such problems would be to switch
to the Blum-Shub-Smale model of computation~\cite{BAMS:BlumSS1989}. A
BSS machine resembles a Turing machine, but operates with real numbers
instead of symbols from a finite alphabet. In particular is the input
real-valued, and input instances are therefore encoded as real-valued
vectors. All basic arithmetic operations and comparisons are unit-cost
operations. One may then define real-valued analogues of Turing
machine based classes. In particular, Blum, Shub and Smale defined and
studied the real-valued analogues $\PTIME_\RR$ and $\NP_\RR$ of
$\PTIME$ and $\NP$. A BSS machine may in general make use of
real-valued \emph{machine constants}. If a BSS machine only uses
rational valued machine constants we shall call it
\emph{constant-free}. Real-valued analogoues of the classes $\FP$,
$\FNP$, and $\TFNP$ for the BSS machine model do not appear to be
defined in the literature, but can be defined in a straight-forward
way. Let us just note that the proof that $\PTIME=\NP$ implies
$\FP = \FNP$ does not generalize to the setting of BSS machines, since
it crucially depends on the search space being discrete.

For the classes $\PTIME_\RR$ and $\NP_\RR$, if we simply restrict the
input to be discrete and consider only constant-free BSS machines this
results in complexity classes, denoted by $\BP(\PTIME_\RR^0)$ and
$\BP(\NP_\RR^0)$, that may directly be compared to Turing machine
based complexity classes. Indeed, it was proved by Allender,
Bürgisser, Kjeldgaard{-}Pedersen and
Miltersen~\cite[Proposition~1.1]{SICOMP:AllenderBKM2009} that
$\BP(\PTIME^0_\RR)=\PTIME^\PosSLP$, where $\PosSLP$ is the problem of
deciding whether an integer given by a division free arithmetic
circuit is positive. While the precise complexity of $\PosSLP$ is not
known, Allender~et~al.\ proved that it is contained in the
\emph{counting hierarchy} $\CH$ (not to be confused with the consensus
halving problem whose abbreviation coincides).

The class $\BP(\NP_\RR^0)$ is equal to the class $\cETR$ that was
defined by Schaefer and Štefankovič~\cite{TOCS:SchaeferS2017}
to capture the complexity of the existential theory of the reals
$\ETR$. It is known that $\NP \subseteq \cETR \subseteq \PSPACE$,
where the latter inclusion follows from the decision procedure for
$\ETR$ due to Canny~\cite{STOC:Canny1988}. Schaefer and Štefankovič
also prove $\cETR$-completeness for deciding existence of a
probability-constrained Nash equilibrium in a given 3-player game in
strategic form; later works have extended this to $\cETR$-completeness
for many other decision problems about existence of Nash equilibria
satisfying different properties in 3-player games in strategic
form~\cite{TEAC:GargMVY2018,STACS:BiloM2016,STACS:BiloM17,SAGT:BerthelsenH19}. The
proofs of $\cETR$-hardness makes critical use of the fact that the
input is discrete and it is not known if these problems are also
complete for $\NP_\RR$.

We define the class of $\cETR$ search problems as the following
subclass of all real valued search problems. Instaces $I$ are encoded
as string over a given finite alphabet $\Sigma$ and we assume there is
a polynomial time algorithm that given $I$ computes $d_I$, where
$D_I \subseteq \RR^{d_I}$. We next assume that there are polynomial
time constant free \emph{BSS machines} that given $I$ and $x \in \RR^{d_I}$
checks whether $x \in D_I$, and given $I$ and $x \in D_I$ checks
whether $x \in \Sol(I)$. The corresponding language in $\cETR$ is then
$L = \{I \mid \Sol(I)\neq \emptyset\}$.

\subsection{Solving real-valued search problems}
Let $\Pi$ be a $\cETR$ search problem. In analogy with the case of
$\NP$ search problems, one could consider the task of solving $\Pi$ to
be that of giving as output some member $y$ of $\Sol(I)$ in case
$\Sol(I)\neq \emptyset$. In general each member of $\Sol(I)$ may be
irrational valued which precludes a Turing machine to compute a
solution explicitly. This is in general also the case for a BSS
machine, even when allowing machine constants. Regardless, we shall
restrict our attention to Turing machines below.

On the other hand, when $\Sol(I) \neq \emptyset$ a solution is
guaranteed to exist with coordinates being algebraic numbers, since a
member of $\Sol(I)$ may be defined by an existential first-order
formula over the reals with only rational-valued coefficients. This
means that one could instead compute an indirect description of the
coordinates of a solution, for instance by describing isolated roots
of univariate polynomials. If such a description could be computed in
polynomial time in $\abs{I}$ we could consider that to be a polynomial
time solution of $\Pi$.

Etessami and Yannakakis~\cite{SICOMP:EtessamiY10} suggest several
other computational problems one may alternatively consider in place
of solving a search problems $\Pi$ explicitly or exactly. Our main
interest is in the problem of \emph{approximation}. We shall assume
for simplicity that $D_I \subseteq [-1,1]^{d_I}$. Together with an
instance $I$ of $\Pi$ we are now given as an auxiliary input a
rational number $\eps>0$, and the task is to compute $x \in \QQ^{d_I}$
such that there exist $x^* \in \Sol(I)$ with
$\norm{x^*-x}_\infty \leq \eps$. We shall turn this into a
\emph{discrete} search problem by encoding the coordinates of $x$ as
binary strings. More precisely, to $\Pi$ we shall associate a discrete
search problem $\Pi_a$ where instances are of the form $(I,k)$, where
$I$ is an instance of $\Pi$ and $k$ is a positive integer. We define
$\eps=2^{-k}$ and let the domain of $(I,k)$ be
$D_{I,k} = \zo^{d_{I}(k+3)}$, thereby allowing the specification of a
point $x \in D_I$ with coordinates of the form $x_i = a_i2^{-{k+1}}$,
where $a_i \in \{-2^{k+1},\dots,2^{k+1}\}$. The solution set
$\Sol(I,k)$ is defined from $\Sol(I)$ by approximating each
coordinate. That is, we define
$\Sol(I,k) = \{x \in D_{I,k} \mid \exists x^* \in \Sol(I) :
\norm{x^*-x}_\infty \leq \eps\}$. Note that if we had defined
$\Sol(I,k)$ by instead truncating the coordinates of solutions
$x^* \in \Sol(I)$ to $k$ bits of precision, we would have obtained the
possibly harder problem of \emph{partial computation} which was also
considered by Etessami and Yannakakis~\cite{SICOMP:EtessamiY10}.

We say that $\Pi$ can be approximated in polynomial time if the
approximation problem $\Pi_a$ can be solved in time polynomial
in~$\abs{I}$ and~$k$.

\subsection{Reductions between search problems}

Let $\Pi$ and $\Gamma$ be search problems. A \emph{many-one reduction}
from $\Pi$ to $\Gamma$ consists of a pair of functions $(f,g)$. The
function $f$ is called the instance mapping and the function $g$ the
solution mapping. The instance mapping $f$ maps any instance $I$ of
$\Pi$ to an instance $f(I)$ of $\Gamma$ and for any solution
$y \in \Sol(f(I))$ of $\Gamma$ the solution mapping $g$ maps the pair
$(I,y)$ to a solution $x=g(I,y) \in \Sol(I)$ of $\Pi$. It is required
that $\Sol(f(I)) \neq \emptyset$ whenever $\Sol(I) \neq \emptyset$. We
will only consider many-one reductions, and will refer to these simply
as \emph{reductions}.

If $\Pi_1$ and $\Pi_2$ are discrete search problems a reduction
$(f,g)$ between $\Pi_1$ and $\Pi_2$ is a \emph{polynomial time
  reduction} if both functions $f$ and $g$ are computable in
polynomial time. If $\Pi_1$ and $\Pi_2$ are real-valued search
problems it is less obvious which notion of reduction is most
appropriate and we shall consider several different types of
reductions. For all these we assume that $f$ is computable in
polynomial time. The reduction $(f,g)$ is a \emph{real polynomial time
  reduction} if $g$ is computable in polynomial time by a constant
free BSS machine. We shall generally consider this notion of reduction
too powerful. In particular the definitioon does not guaranteed that
the function $g$ is a continuous function in its second argument
$y$. For this reason we instead consider reductions defined by
algebraic circuits over a given basis $B$ of real-valued basis
functions.

We say that the reduction $(f,g)$ is a \emph{polynomial time
  $B$-circuit reduction} if there is a function computable in polynomial
time thats maps an instance $I$ to a $B$-circuit $C_{I}$ in such a way
that $C_I$ computes a function $C_I\colon D_{f(I)} \rightarrow D_I$ where
$g(I,y)=C_I(y)$ for all $y \in \Sol(f(I))$. Note in particular that
the size of $C_I$ and the bitsize of all constant gates are bounded by
a polynomial in $\abs{I}$. If in addition there exists a constant $h$
such that the depth of $C_I$ is bounded by $h$ for all $I$ we say that
the reduction $(f,g)$ is a \emph{polynomial time constant depth
  $B$-circuit reduction}. Etessami and
Yannakakis~\cite{SICOMP:EtessamiY10} defined the even weaker notion
where the function $f$ is a \emph{separable linear} transformation.
The reduction $(f,g)$ is an SL-reduction if there is a function
$\pi \colon \{1,\dots,d_I\} \rightarrow \{1,\dots,d_{f(I)}\}$ and rational
constants $a_i,b_i$, for $i=1,\dots,d_I$, all computable in polynomial
time from $I$, such that for all $y \in \Sol(f(I))$ it holds that
$x_i=a_i y_{\pi(i)}+b_i$, where $x=g(I,y)$. Thus an SL-reduction is
simply a \emph{projection reduction} together with an individual
affine transformation of each coordinate of the solution.

Functions computed by algebraic circuits over the basis
$\BasisPlusMulCMax$ are piecewise linear. We shall thus call
polynomial time $\BasisPlusMulCMax$-circuit reductions for polynomial
time piecewise linear reductions, or simply PL-reductions.

It is easy to see that all notions of reductions defined above are
transitive, i.e.\ if $\Pi$ reduces to $\Gamma$ and $\Gamma$ reduces to
$\Lambda$, then $\Pi$ reduces to $\Lambda$ as well.

A desirable property of PL-reductions is that the solution mapping $g$
is \emph{polynomially continuous}. By this we mean that for all
rational $\eps>0$ there is a rational $\delta>0$ such that for all
points $x$ and $y$ of the domain, $\norm{x-y}_\infty \leq \delta$
implies $\norm{g(x)-g(y)}_\infty \leq \eps$, and the bitsize of
$\delta$ is bounded by a polynomial in the bitsize of $\eps$ and of
$\abs{I}$. An example of a notion of reductions not guaranteed to be
polynomially continuous would be $\BasisPlusTimesMax$-circuit
reductions, since a circuit might perform repeated squaring. However,
constant depth $\BasisPlusTimesMax$-circuit reductions would still be
polynomially continuous.

\subsection{Total real-valued search problems}
Like in the case of $\TFNP$ where interesting classes of total $\NP$
search problems may be defined in terms of existence theorems for
finite structures~\cite{JCSS:Papadimitriou1994,JCSS:GoldbergP2018}, we
may define classes of total real valued $\cETR$ search problems based
on existence theorems concerning domains $D_I \subseteq
\RR^n$. Typical examples of such domains $D_I$ are spheres and
balls. Suppose $p$ is either a real number $p\geq 1$ or $p=\infty$. By
$S^n_p$ and $B^n_p$ we denote the unit $n$-sphere and unit $n$-ball
with respect to the $\ell_p$-norm defined as
$S^n_p = \{x \in \RR^{n+1} \mid \norm{x}_p=1\}$ and
$B^n_p = \{x \in \RR^n \mid \norm{x}_p \leq 1\}$, respectively. If $p$
is not specified, we simply assume $p=2$.

\subsubsection{The Brouwer fixed point theorem and \FIXP}
We recall here the definition of the class $\FIXP$ by Etessami and
Yannakis~\cite{SICOMP:EtessamiY10}. The class $\FIXP$ is defined by
starting with $\cETR$ search problems given by the Brouwer fixed point
theorem, and afterwards closing the class with respect to
SL-reductions. We shall refer to these defining problems as
\emph{basic} $\FIXP$ problems.
\begin{definition}
  An $\cETR$ search problem $\Pi$ is a \emph{basic} $\FIXP$ problem if
  every instance $I$ describes a nonempty compact convex domain $D_I$
  and a continuous function $F_I \colon D_I \rightarrow D_I$, computed by
  an algebraic circuit $C_I$, and these descriptions must be
  computable in polynomial time. The solution set is
  $\Sol(I) = \{x \in D_I \mid F_I(x)=x\}$.
\end{definition}
The Brouwer fixed point theorem guarantees that every basic $\FIXP$
problem is a total $\cETR$ search problem. To define the class
$\FIXP$, Etessami and Yannakis restrict attention to a concrete class
of basic $\FIXP$ problems.
\begin{definition}
  The class $\FIXP$ consists of all \emph{total} $\cETR$ search problems that are
  SL-reducible to a basic $\FIXP$ problem for which each domain $D_I$ is
  a convex polytope described by a set of linear inequalities with
  rational coefficients and the function $F_I$ is defined by a
  $\BasisPlusMinusTimesDivMaxMin$-circuit $C_I$.
\end{definition}
The class $\FIXPA$ is the class of strong approximation problems
corresponding to $\FIXP$. More precisely, $\FIXPA$ consist of all
discrete search problems polynomial time reducible to the problem
$\Pi_a$ for $\Pi \in \FIXP$.

The definition of $\FIXP$ is quite robust with respect to the choice
of domain and set of basis functions allowed by circuits in the basic
$\FIXP$ problems. Etessami and Yannakis proved that basic $\FIXP$
problems defined by $\BasisPlusMinusTimesDivMaxMinRoots$-circuits are
still in the class $\FIXP$. Likewise, basic $\FIXP$-problems where
$D_I$ is a ball with rational-valued center and diameter, or more
generally an ellipsoid given by a rational center-point and a
positive-definite matrix with rational entries, are still in the class
$\FIXP$~\cite[Lemma~4.1]{SICOMP:EtessamiY10}. The same argument allows
for using as domain the ball $B^d_p$ with respect to the $\ell_p$ norm
for any rational $p\geq 1$ or $p=\infty$, with the coordinates
possibly transformed by individual affine functions.

On the other hand, Etessami and Yannakakis also proved that one may
greatly restrict the class of basic $\FIXP$ problems used to define
$\FIXP$ without changing the class. The domains may be restricted to
be unit hypercubes $[0,1]^{d_I}$ and the circuits may be restricted to
$\BasisPlusTimesMax$-circuits. Both restrictions may in fact be
imposed at the same time. The restriction to
$\BasisPlusTimesMax$-circuits is a consequence of first proving
that the problem of finding a Nash equilibrium in a given finite game
in strategic form is hard for $\FIXP$ with respect to SL-reductions
and then proving $\FIXP$-membership of this problem using
$\BasisPlusTimesMax$-circuits.

Another way to restrict circuits is by limiting their depth. The
function of Nash for expressing Nash equilibrium as Brouwer fixed
points involve divisions but as noted by Etessami and Yannakakis it
may be viewed as a constant depth circuit, if one allows for addition
gates of arbitrary fanin. Thus in the definition of $\FIXP$ one may
restrict circuits to be constant depth $\BasisPlusTimesMax$-circuits,
where the addition gates are allowed to have unbounded fanin.

We show in Proposition~\ref{PROP:DepthOneFIXP} of
Section~\ref{SEC:Structural-FIXP} that one may in fact take this much
further and \emph{completely} flatten the circuits of defining
problems for $\FIXP$ to be depth~$1$ circuits of fanin at most~2,
additionally also without requiring division.  In other words, each
coordinate function becomes just a simple function of at most~2
coordinates of the input. We also show in
Proposition~\ref{PROP:CircuitReductionFIXP} that $\FIXP$ is closed
under \emph{much} more powerful reductions than just the basic
SL-reductions used to define the class $\FIXP$.

\subsubsection{The Borsuk-Ulam theorem and \BU}
A new class $\BU$ of total $\cETR$ search problems based on the Borsuk
Ulam theorem was recently introduced by
Deligkas~et~al.~\cite{JCSS:DeligkasFMS21}. The definition of $\BU$ is
meant to capture the Borsuk-Ulam theorem as stated in formulation~(1)
of Theorem~\ref{THM:Borsuk-Ulam}. Following the definition of $\FIXP$
by Etessami and Yannakakis, Deligkas~et~al.\ first consider a set of
basic search problems and then close the class under
reductions. 
\begin{definition}
\label{DEF:BasicBUgeneral}
  An $\cETR$ search problem $\Pi$ is a basic $\BU$ problem if every
  instance $I$ describes a domain $D_I \subseteq \RR^{d_I}$ which is
  homeomorphic to $S^{d_I-1}$ by an an antipode preserving homeomorphism
  and a continuous function $F_I \colon D_I \rightarrow \RR^{d_I-1}$,
  computed by an algebraic circuit $C_I$, and these descriptions must
  be computable in polynomial time. The solution set is
  $\Sol(I) = \{x\in D_I \mid F_I(x)=F_I(-x)\}$.
\end{definition}
In defining the class Deligkas~et~al.\ restrict their attention to
spheres with respect to the $\ell_1$-norm as domains and functions
computed by $\BasisPlusMinusTimesMaxMin$-circuits. Compared to the
definition of $\FIXP$, division gates are thus excluded. However we
show later in Section~\ref{SEC:BU-BBU} that division gates can always
be eliminated. Having thus fixed the set of basic $\BU$ search
problems what remains in order to define $\BU$ is to settle on a
notion of reductions. In their journal paper,
Deligkas~et~al.~\cite{JCSS:DeligkasFMS21} suggest using reductions
computable by general algebraic circuits including non-continuous
comparison gates, whereas in the preceeding conference
paper~\cite{ICALP:DeligkasFMS19} they did not precisely define a
choice of reductions. We shall revisit the question of choice of
reduction in Section~\ref{SEC:BU-BBU} before proposing our definition
of $\BU$.

\subsection{Consensus Halving}
We give here a formal definition of consensus halving with additive measures as real valued search problems.
\begin{definition}
\label{DEF:CH}
  The problem $\CH$ is defined as follows. An instance $I$ consists of
  a list  of $\BasisPlusMinusTimesDivMaxMin$-circuits $C_1,\dots,C_n$
  computing distribution functions $F_1,\dots,F_n$ defined on the
  interval $A=[0,1]$. The domain is $D_I=S^n_1$ and $\Sol(I)$
  constists of all $x$ for which
  \begin{equation}
    \label{EQ:CH-equation}
    \sum_{j : x_j>0} F_i(t_j)-F_i(t_{j-1}) = \sum_{j : x_j<0} F_i(t_j)-F(t_{j-1}) \enspace ,
  \end{equation}
where $t_0=0$ and $t_j=\sum_{k \leq j} \abs{x_k}$, for $j=1,\dots,n+1$.
\end{definition}
Given $\BasisPlusMinusTimesDivMaxMin$-circuits computing the
distribution functions $F_i$, the function $F$ computing the
left-hand-side of equation~(\ref{EQ:CH-equation}) may clearly be
computed by $\BasisPlusMinusTimesDivMaxMin$-circuits as well. The
result of Deligkas~et~al.\ that $\CH$ is contained in $\BU$ follows.

The existence proof of a consensus halving by Simmons and Su as well
the formulation of a $\cETR$ search problem by Deligkas~et~al.\ match
the Borsuk-Ulam theorem as stated in formulation~(1) of
Theorem~\ref{THM:Borsuk-Ulam}. We shall also define a variation
$\BCH$ of $\CH$ to match formulation~(3) of
Theorem~\ref{THM:Borsuk-Ulam}. A point $y \in B^n_1$ may be lifted
to the point $x=(1-\norm{y}_1,y) \in S^n_1$. This means that we may
view $y \in B^n_1$ as describing a partition of $A$ by the partition
described by $x$. Compared to the representation of partitions of $A$
into $n+1$ intervals given by points of $S^n_1$ we thus restrict the
label of the first interval to be~$+$, in case it has positive length.
\begin{definition}
\label{DEF:BCH}
  The problem $\BCH$ is defined as follows. An instance $I$ consists of
  a list of $\BasisPlusMinusTimesMaxMin$-circuits $C_1,\dots,C_n$
  computing distribution functions $F_1,\dots,F_n$ defined on the
  interval $A=[0,1]$. The domain is $D_I=B^n_1$ and $\Sol(I)$
  constists of all $y$ for which
  \begin{equation}
    \label{EQ:BCH-equation}
    F_i(t_1) + \sum_{j : y_j>0} F_i(t_{j+1})-F_i(t_j) = \sum_{j : y_j<0} F_i(t_{j+1})-F(t_j) \enspace ,
  \end{equation}
  where $t_0=0$, and $t_j=1-\sum_{k\geq j} \abs{y_j}$, for
  $j=1,\dots,n+1$.
\end{definition}

\subsection{Tools from Real Algebraic Geometry}
\label{SEC:ToolsRAG}

For obtaining our results concerning strong approximation we need
concrete bounds on $\delta>0$ as a function of $\eps>0$ witnessing the
truth of ``epsilon-delta'' statements. When such a statement is
expressible in the first-order theory of the reals, such bounds can be
obtained in a generic way using the general machinery of real
algebraic geometry~\cite{Book:BasuPollackRoy-ed2-posted3-2016}. This approach has
been used several times previously for establishing $\FIXPA$
membership of the problem of strong approximation of Nash equilibrium
refinements~\cite{SAGT:EtessamiHMS2014,GEB:Etessami2020,EC:HansenL2018}.

Concretely, suppose that $\Phi(\eps,\delta)$ is a formula with free
variables $\eps$ and $\delta$ of the form
\[
  \Phi(\eps,\delta) = (Q_1 x_1 \in \RR^{n_1}) \cdots (Q_\omega x_\omega \in \RR^{k_\omega}) F(x_1,\dots,x_\omega,\eps,\delta) \enspace ,
\]
where $Q_i\in \{\forall,\exists\}$, and $F$ is a Boolean formula whose
atoms are polynomial equalities and inequalities involving polynomials
of degree at most $d$ and having integer coefficients of bitsize at
most~$\tau$.

We now assume that the statement
$(\forall \eps>0)(\exists \delta>0) \Phi(\eps,\delta)$ is true, and
fix $\eps=2^{-k}$, for a positive integer $k$, resulting in the
formula $(\delta>0) \wedge \Phi(\eps,\delta)$, with $\delta$ as the
only variable. We may now perform \emph{quantifier
  elimination}~\cite[Algorithm~14.21]{Book:BasuPollackRoy-ed2-posted3-2016}
on this to obtain an equivalent quantifier free formula
$\Psi(\delta)$. The formula $\Psi(\delta)$ is simply a Boolean formula
whose atoms involve univariate polynomial equalities and
inequalities. The bounds given by Basu, Pollack and Roy for the result
of quantifier elimination imply that the degree of the univariate
polynomials are bounded by $d^{O(k_1)\dots O(k_\omega)}$ with
coefficients of bitsize at most
$\max(k,\tau)d^{O(k_1)\dots O(k_\omega)}$. We may now appeal to
Theorem~13.17 of~\cite{Book:BasuPollackRoy-ed2-posted3-2016} to
conclude that $\Psi(\delta)$, and hence also $\Phi(\eps,\delta)$ is
true, for some
$\delta \geq 2^{-\max(k,\tau)d^{O(k_1)\dots O(k_\omega)}}$.

In our applications, the formula $\Phi$ is defined from a given
instance $I$. Both $\tau$ and $d$ will be bounded by fixed polynomials
of $\abs{I}$. The number of blocks $\omega$ of quantified variables
will be a fixed constant, and $k_i$ for $1\leq i\leq \omega$ are
bounded by fixed polynomials of $\abs{I}$ as well. In other words
there will be a fixed polynomial $q$ such that the formula
$\Phi(\eps,\delta)$ is true for some
$\delta \geq (1/\eps)^{2^{g(\abs{I})}}$.

The first-order formulas we consider are expressed using also the
evaluation of functions computable by algebraic circuits as a
primitive. We may in a generic way transform such formulas to having
only polynomial inequalities and equalities and required
above. Namely, we may perform a Tseitin-style transformation by
introducing existentially quantified variables for each gate of the
circuit and express using polynomial inequalities and equalities that
each gate is computed correctly, and the variables corresponding to
the output gates may then be used instead in place of the function. As
long as the number of evaluations of functions is constant, this
leaves the number of blocks of quantified variables constant.

\section{Structural Properties of \FIXP}
\label{SEC:Structural-FIXP}
Recall that $\FIXP$ is defined to be the closure of all basic $\FIXP$
problems with respect to the very simple notion of SL-reductions. We
first show that $\FIXP$ is in fact closed under \emph{general} circuit
reductions.
\begin{proposition}
\label{PROP:CircuitReductionFIXP}
Suppose that $\Pi$ is a $\cETR$ search problem defined with unit
hypercube domains and reduces to $\Gamma \in \FIXP$ by a polynomial
time $\BasisPlusMinusTimesDivMaxMinRoots$-circuit reduction. Then
$\Pi$ belongs to $\FIXP$ as well.
\end{proposition}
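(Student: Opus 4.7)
The plan is to absorb the circuit reduction into the Brouwer function of a new basic $\FIXP$ problem, so that $\Pi$ becomes SL-reducible to it. First I would use transitivity: since $\Gamma \in \FIXP$, there is an SL-reduction from $\Gamma$ to some basic $\FIXP$ problem $\Lambda$, and since an SL-reduction is trivially expressible by a (constant-depth) algebraic circuit, composing this with the given $\BasisPlusMinusTimesDivMaxMinRoots$-circuit reduction from $\Pi$ to $\Gamma$ yields a polynomial time $\BasisPlusMinusTimesDivMaxMinRoots$-circuit reduction from $\Pi$ directly to the basic $\FIXP$ problem $\Lambda$. Thus it suffices to treat the case where the target of the reduction is basic, i.e.\ given by a rational convex polytope domain $D_{f(I)}$ and a $\BasisPlusMinusTimesDivMaxMin$-circuit computing a continuous map $F_{f(I)}\colon D_{f(I)} \to D_{f(I)}$.

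Given an instance $I$ of $\Pi$ with unit hypercube domain $D_I=[0,1]^{d_I}$ and a circuit $C_I$ over the extended basis satisfying $g(I,y)=C_I(y)$ for every $y \in \Sol(f(I))$, I would construct a new basic $\FIXP$ instance as follows. Let $\tilde C_I$ be the circuit obtained by coordinate-wise truncation of $C_I$ to the interval $[0,1]$ via $\max$ and $\min$ with the constants $0$ and $1$; this is still continuous, polynomial-time constructible, and agrees with $C_I$ on solutions of $f(I)$ (since $C_I(y)\in D_I$ there). Now take the product domain $D' = D_{f(I)} \times [0,1]^{d_I}$, which is again a rational convex polytope described by the union of the linear inequalities defining each factor, and define the Brouwer function $F'\colon D' \to D'$ by
\[
  F'(y,x) \;=\; \bigl(F_{f(I)}(y),\, \tilde C_I(y)\bigr).
\]
This is computed by a circuit that simply runs $F_{f(I)}$ and $\tilde C_I$ in parallel on the $y$-coordinates and ignores $x$.

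A pair $(y,x) \in D'$ is a fixed point of $F'$ exactly when $F_{f(I)}(y)=y$ and $x=\tilde C_I(y)$, i.e.\ when $y \in \Sol(f(I))$ and $x = C_I(y) = g(I,y) \in \Sol(I)$; moreover such fixed points exist whenever $\Sol(f(I))\neq \emptyset$, and by totality of the reduction this covers the case $\Sol(I)\neq \emptyset$. The mapping $(y,x) \mapsto x$ is then literally a projection reduction, hence an SL-reduction from this new problem to $\Pi$, so $\Pi$ is SL-reducible to it.

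The one remaining issue is that $\tilde C_I$ is over the basis $\BasisPlusMinusTimesDivMaxMinRoots$, whereas the basic $\FIXP$ problems used in the definition of $\FIXP$ use the smaller basis $\BasisPlusMinusTimesDivMaxMin$. The expected obstacle here is purely a basis-change issue, and it is handled by invoking the result of Etessami and Yannakakis, already cited in Section~\ref{SEC:Preliminaries}, that basic $\FIXP$ problems defined by $\BasisPlusMinusTimesDivMaxMinRoots$-circuits still lie in $\FIXP$. Applying this to the constructed problem completes the proof that $\Pi \in \FIXP$.
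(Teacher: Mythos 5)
Your construction is the same as the paper's: form a product domain, run the Brouwer function on the original coordinates and the reduction circuit on those same coordinates to produce the extra coordinates, so that fixed points are exactly pairs $(y, g(I,y))$ with $y$ a solution to $f(I)$, and then project. The paper simply assumes WLOG that the target has unit hypercube domain rather than forming a product with a polytope, and doesn't need your truncation step since the definition of a $B$-circuit reduction already requires $C_I\colon D_{f(I)}\to D_I$; otherwise the argument and the appeal to the Etessami--Yannakakis root-gate robustness result coincide.
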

\begin{proof}
  We may without loss of generality assume the domain of $\Gamma$ is
  also the unit hypercube. Let $(f,g)$ be the assumed reduction from
  $\Pi$ to $\Gamma$. Let $I$ be an instance of $\Pi$. By assumption
  $D_I = [0,1]^m$ and $D_{f(I)} = [0,1]^n$, where $m=d_I$ and
  $n=d_{f(I)}$. From the definition of $(f,g)$ we may given $I$ in
  polynomial time compute $f(I)$ as well as the circuit $C_I$ that
  defines a function $G \colon [0,1]^n \rightarrow [0,1]^m$ such that
  $g(I,x)=G(x)$ for all $x \in \Sol(f(I))$. By assumption on $\Gamma$
  we may in polynomial time compute another circuit $C_{f(I)}$ that
  defines a function $F \colon [0,1]^n \rightarrow [0,1]^n$ such that
  $\Sol(f(I))$ are the fixed points of $F$.

  We now define the function $H \colon [0,1]^{n+m} \rightarrow [0,1]^{n+m}$
  by $H(x,y) = (F(x),G(x))$. Clearly the set of fixed points of $H$ is
  equal to $\{(x,G(x)) \mid x \in \Sol(f(I))\}$, and since $H$ is
  computable by a $\BasisPlusMinusTimesDivMaxMinRoots$-circuit this defines a $\cETR$
  search problem $\Lambda$ in $\FIXP$ with the same set of instances
  as $\Pi$. We note that the projection of a fixed point of $H$ to the
  last $m$ coordinates gives a solution to $\Pi$ from which it follows
  that $\Pi$ in particular SL-reduces to $\Lambda$. Therefore $\Pi$
  belongs to $\FIXP$ as well.
\end{proof}

Our next basic result is based on properties of the basic $\FIXP$ problem
used by Etessami and Yannakakis to show that the division operation is
not necessary to express all of $\FIXP$. We give a brief review of
their construction. An instance $I$ describes a $d$-player game in
strategic form. Player~$i$ has a set $S_i$ of $n_i=\abs{S_i}$ pure
strategies and a utility function
$u_i \colon S_1 \times \dots \times S_d \rightarrow \RR$. Let
$n=n_1+\dots+n_d$ be the total number of strategies. The domain is
given as
$D_I = \Simplex_{n_1-1} \times \dots \times \Simplex_{n_d-1}$, where
the $(n_i-1)$-dimensional unit simplex $\Simplex_{n_i-1}$ is
identified with the set of probability distributions on $S_i$, for
$i=1,\dots,d$. The domain $D_I$ may be viewed as a subset of $\RR^n$
in the natural way. The utility functions define the function
$v \colon D_I \rightarrow \RR^n$ given by
\[
  v(x)_{ia_i} = \sum_{a_{-i} \in S_{-i}} u_i(a_1,\dots,a_d)
  \prod_{j\neq i} x_{ja_j} \enspace ,
\]
where
$S_{-i} = S_1 \times \dots \times S_{i-1} \times S_{i+1} \times \dots
\times S_d$ and
$a_{-i} = (a_1,\dots,a_{i-1},a_{i+1},\dots,a_d) \in S_{-i}$. Define
further the function $h \colon D_I \rightarrow \RR^n$ by $h(x)=x+v(x)$ and
finally let $G_I \colon D_I \rightarrow D_I$ be defined by letting $G_I(x)$
be the projection of $h(x)$ onto $D_I$. For all $i=1,\dots,d$, it
holds that $G_I(x)_{i,a_i} = \max(h_{i,a_i}-t_i,0)$, where $t_i$ is
the unique value satisfying
$\sum_{a_i \in S_i} \max(h_{i,a_i}-t_i,0) = 1$. The fixed points of
$G_I$ are exactly the Nash equilibria of the game described by
$I$~\cite[Lemma~4.5]{SICOMP:EtessamiY10}, and the search problem is
therefore $\FIXP$-complete~\cite[Theorem~4.3]{SICOMP:EtessamiY10}.

The definitions of the functions $v$, $h$, and $G_I$ allows us to
extend their domain from $D_I$ to the $n$-dimensional unit cube
$[0,1]^n$. By definition of $G_I$ this does not change the set of
fixed points of $G_I$. Likewise, applying the same affine
transformation to $u_i(x)$, for $i=1,\dots,d$, does not change the set
of fixed points of $G_I$. We may thus assume that $u_i$ has codomain
$[0,1]$. Making use of a sorting network, Etessami and Yannakakis show
that $G_I$ may be computed by a polynomial size
$\BasisPlusMinusTimesMaxMin$-circuit
$C_I$~\cite[Lemma~4.6]{SICOMP:EtessamiY10}. It is furthermore
straightforward to ensure that all constants used in $C_I$ as well as
values computed by gate functions of $C_I$ belong to the interval
$[0,1]$ for any input $x \in [0,1]^n$ (cf.\
\cite{JCSS:DeligkasFMS21}). We summarize these observations below.
\begin{proposition}
\label{PROP:BasicFIXP-NE}
There is a basic $\FIXP$ problem $\Pi_\NE$, complete for $\FIXP$ under
SL-reductions, such that for any instance $I$ it holds that
$D_I=[0,1]^{d_I}$ and such that $C_I$ is a
$\BasisPlusMinusTimesMaxMin$-circuit that satisfies that all gate
functions of $C_I$ compute values in $[0,1]$ given input
$x \in D_I$.
\end{proposition}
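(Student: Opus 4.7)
The plan is to assemble Proposition~\ref{PROP:BasicFIXP-NE} directly from the construction of Etessami and Yannakakis reviewed in the paragraphs preceding the statement, verifying each of the three required properties in turn. I take $\Pi_\NE$ to be the search problem whose instances $I$ encode $d$-player strategic-form games and whose associated Brouwer function is the map $G_I$ obtained by projecting $h(x)=x+v(x)$ onto the simplex product $\Simplex_{n_1-1}\times\dots\times\Simplex_{n_d-1}$. By \cite[Lemma~4.5 and Theorem~4.3]{SICOMP:EtessamiY10} the fixed points of $G_I$ are precisely the Nash equilibria of the game $I$, and the corresponding search problem is $\FIXP$-complete under SL-reductions; this is the completeness claim.

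Next I would address the domain. Using the coordinate formula $G_I(x)_{i,a_i}=\max(h_{i,a_i}-t_i,0)$ with $t_i$ implicitly determined by a normalization condition, the definition of $G_I$ makes sense for any $x\in[0,1]^n$ since $v(x)$ is just a polynomial expression in the entries of $x$ with coefficients from the $u_i$. Observing that $G_I(x)$ automatically lies in the simplex product (hence in $[0,1]^n$), and that a fixed point must in particular lie in the simplex product, the fixed point set is unchanged when we enlarge the domain to $D_I=[0,1]^{d_I}$ with $d_I=n$. This establishes property~(1).

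For the circuit, recall from~\cite[Lemma~4.6]{SICOMP:EtessamiY10} that the implicit threshold $t_i$ can be computed by a sorting network applied to the values $h_{i,a_i}$ together with $\BasisPlusMinusTimesMaxMin$-arithmetic over the sorted list, bypassing any need for division; this yields a polynomial-size $\BasisPlusMinusTimesMaxMin$-circuit $C_I$ for $G_I$. Before using it, I would apply the standard affine normalization to the utilities so that each $u_i$ has codomain $[0,1]$ (this does not change the fixed point set of $G_I$, only rescales $v$ uniformly across players). This yields property~(2).

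The delicate piece is property~(3): every gate function, evaluated on any $x\in[0,1]^n$, should produce a value in $[0,1]$. The gates producing $v(x)_{i,a_i}$ as a sum of products of coordinates are bounded in $[0,1]$ precisely because $u_i$ has been rescaled to $[0,1]$ and the $\prod_{j\neq i}x_{ja_j}$ factors and the final sum stay in $[0,1]$ when written as a convex combination (using a normalizing factor absorbed into the $u_i$ if necessary, as done in~\cite{JCSS:DeligkasFMS21}); the gates of the sorting network only compute coordinatewise $\min$ and $\max$ of inputs that already lie in $[0,1]$; and the projection step produces values in $[0,1]$ by construction. For any intermediate subtraction that could dip outside $[0,1]$ I would replace the subexpression by a provably equivalent form staying in the unit interval, or, failing that, clamp the output using the truncation construction $g_{T[0,1]}$ introduced in Section~\ref{SEC:Preliminaries}, which preserves the value of the gate on legal inputs and keeps the circuit a $\BasisPlusMinusTimesMaxMin$-circuit since $g_{T[0,1]}$ composed with $+,-,\ast$ can be implemented with two additional $\max,\min$ gates. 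I expect this bookkeeping for the intermediate gates of the sorting-network phase to be the only mildly technical step; everything else is a direct citation of the Etessami--Yannakakis construction with the domain extended and the utilities normalized.
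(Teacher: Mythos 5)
Your proposal is correct and follows essentially the same path as the paper, which does not give a displayed proof but instead records the proposition as a summary of exactly the observations you make: take the Etessami--Yannakakis function $G_I$, note the domain extends to $[0,1]^n$ without changing fixed points, normalize the utilities by an affine transformation, invoke the sorting-network circuit of~\cite[Lemma~4.6]{SICOMP:EtessamiY10}, and cite~\cite{JCSS:DeligkasFMS21} for the routine bookkeeping that all gate values can be kept in $[0,1]$. Your extra remarks about clamping via $g_{T[0,1]}$ and absorbing a normalizing factor into the $u_i$ are the right way to fill in the detail the paper leaves to that citation.
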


From here we may derive a characterization of $\FIXP$ in terms of
depth~1 circuits, where the addition and subtraction operators
(necessarily) are truncated to the interval $[0,1]$. This is simply
done by a Tseitin-style transformation. One may note that a
Tseitin-style transformation is already used in the proof that
$\Pi_{\rm NE}$ is $\FIXP$-hard. This means such a transformation is
applied twice at different points of the proof to yield the statement
below.
\begin{proposition}
  \label{PROP:DepthOneFIXP}
  There is a basic $\FIXP$ problem $\Pi$, complete for $\FIXP$ under
  SL-reductions, such that for any instance $I$ it holds that
  $D_I=[0,1]^{d_I}$ and such that $C_I$ is a depth~1
  $\BasisTPlusTMinusTimesMaxMin$-circuit, using only constants from
  the interval $[0,1]$.
\end{proposition}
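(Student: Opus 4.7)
The plan is to perform a Tseitin-style flattening of the circuit $C_I$ from Proposition~\ref{PROP:BasicFIXP-NE}, introducing one fresh coordinate per gate so that the resulting Brouwer function has depth~$1$. Since every function in $\BasisPlusMinusTimesMaxMin$ is unary or binary we may assume $C_I$ has fanin at most~$2$, and since Proposition~\ref{PROP:BasicFIXP-NE} guarantees that every gate of $C_I$ computes values in $[0,1]$ on any input from $D_I=[0,1]^{d_I}$, in particular every constant gate of $C_I$ carries a value in $[0,1]$, matching the allowed constants in the statement.

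Concretely, let $C_I$ have gates $g_1,\dots,g_N$ in topological order, with the last $d_I$ of them being the designated output gates. I would define a basic $\FIXP$ problem $\Pi$ whose instances are the same as those of $\Pi_\NE$ but with enlarged domain $D_I'=[0,1]^{d_I+N}$ and coordinates $(x_1,\dots,x_{d_I},y_1,\dots,y_N)$. For each gate $g_k$ with gate function $\phi_k$ applied to wire values $w_{k,1},w_{k,2}$ — each being either an input variable $x_j$, a previous gate coordinate $y_\ell$ with $\ell<k$, or a constant in $[0,1]$ — the $y_k$-coordinate function of the new Brouwer map $F$ is $\phi_k(w_{k,1},w_{k,2})$, with $+$ and $-$ replaced by $+_{T[0,1]}$ and $-_{T[0,1]}$ respectively ($\ast,\max,\min$ already preserve $[0,1]$). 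The $x_i$-coordinate function of $F$ is the projection onto $y_{k_i}$ where $g_{k_i}$ is the output gate for coordinate~$i$; this is a depth-$1$ function of a single coordinate, expressible for instance as $\max(y_{k_i},0)$ or $y_{k_i}+_{T[0,1]} 0$. The whole function $F\colon D_I'\rightarrow D_I'$ is thus given coordinate-wise by depth-$1$ $\BasisTPlusTMinusTimesMaxMin$-gates using only constants from $[0,1]$, as required.

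The main verification is that the fixed points of $F$ are in bijection with the fixed points of the original Nash Brouwer map $G_I$. By induction on topological order, at any fixed point $(x,y)$ the coordinate $y_k$ equals the exact value computed at $g_k$ by $C_I$ on input $x$: the inductive hypothesis places all predecessor wire values in $[0,1]$, and Proposition~\ref{PROP:BasicFIXP-NE} then forces $\phi_k$ on these values to stay in $[0,1]$, so the truncation of $+_{T[0,1]}$ or $-_{T[0,1]}$ is inactive and $y_k$ takes the exact gate value. The fixed-point equation on $x_i$ then reads $x_i=y_{k_i}=G_I(x)_i$, exhibiting $x$ as a fixed point of $G_I$; conversely each fixed point of $G_I$ extends uniquely to a fixed point of $F$ by filling in the gate values. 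This gives an SL-reduction from $\Pi_\NE$ to $\Pi$ (instance map identity, solution map projection onto the first $d_I$ coordinates), so $\Pi$ inherits $\FIXP$-hardness under SL-reductions, while membership in $\FIXP$ is immediate from $\Pi$ being a basic $\FIXP$ problem on a hypercube.

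There is no deep obstacle here; the only point requiring care is that the truncation in $+_{T[0,1]}$ and $-_{T[0,1]}$ must not spuriously alter the fixed-point set, and this is exactly what the gates-in-$[0,1]$ property of Proposition~\ref{PROP:BasicFIXP-NE} buys — without it one would either have to rescale the circuit or work with untruncated operations whose range escapes the intended domain.
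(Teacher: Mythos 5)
Your proposal is correct and takes essentially the same route as the paper: both perform a Tseitin-style flattening, introducing one fresh coordinate per gate of the circuit from Proposition~\ref{PROP:BasicFIXP-NE}, replacing $+,-$ with their $T[0,1]$-truncated variants, and observing that at a fixed point the truncation is inactive by the in-$[0,1]$ guarantee of that proposition, so that the first $d_I$ coordinates project back to a Nash fixed point via an SL-reduction.
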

\begin{proof}
  We reduce from the problem $\Pi_{\mathrm NE}$ of
  Proposition~\ref{PROP:BasicFIXP-NE}. The instances of $\Pi$ are the
  same instances of $\Pi_\NE$. Let $I$ be an instance of $\Pi_\NE$ and
  let $D = [0,1]^{d_I}$ and $C_I$ be the corresponding domain and
  $\BasisPlusMinusTimesMaxMin$-circuit as given by
  Proposition~\ref{PROP:BasicFIXP-NE}.  Suppose that $C_I$ has $m_I$
  gates $g_1,\dots,g_{m_I}$. We define the new domain $D'_I$ for $\Pi$
  simply by $D'_I=[0,1]^{d'_I}$, where $d'_I=d_I+m_I$. We next define
  the gates of $C'_I$ which all are output gates of $C'_I$. We may
  consider the input as pairs $(x,y) \in [0,1]^{d_I}\times[0,1]^{m_I}$
  and we may think of the output gates as variables, similarly grouped
  as $(z,w)$ and ranging over $[0,1]^{d_I}\times[0,1]^{m_I}$. If $g_j$
  is an input gate labeled by $x_i$, we let $w_j=x_i$, and if $g_j$ is
  a constant gate labeled by $c \in [0,1]$ we let $w_j=c$. If $g_j$ is
  an addition gate taking as input gates $g_k$ and $g_\ell$ we let
  $w_j = (y_k + y_\ell)_{T[0,1]}$, i.e.\ the addition of $g_k$ and
  $g_\ell$ is simulated by a truncated addition of $y_k$ and
  $y_\ell$. The case of subtraction is analogous. If $g_j$ is a
  multiplication gate taking as input $g_k$ and $g_\ell$ we let
  $w_j = y_k\cdot y_\ell$. The case of maximum and minimum gates are
  analogous. Finally if $g_j$ is the $i$th output gate of $C_I$ we let
  $z_i=y_j$.  By construction $C'_I$ computes a function
  $F'_I \colon D'_I \rightarrow D'_I$ and $F'_I(x,y) = (x,y)$ if and only
  if $g_j$ computes the value $y_j$ on input $x$ for all $j$ and
  $C_I(x)=x$. We thus obtain $x$ such that $C_I(x)=x$ as the
  projection of $(x,y)$ to the first $d_I$ coordinates.
\end{proof}

In case we prefer to construct a normal
$\BasisPlusMinusTimesMaxMin$-circuit without truncated operations we
can clearly simulate the truncated addition and subtraction operations
by depth~$3$ circuits. We can also easily convert the circuits to
constant depth $\BasisPlusTimesMax$ circuits by considering the the
domain $B^{d'_I}_\infty=[-1,1]^{d'_I}$ instead of $[0,1]^{d'_I}$.

\section{Definition and Structural Properties of \BU\ and \BBU}
\label{SEC:BU-BBU}

In this section we define two classes of $\cETR$ search problems $\BU$
and $\BBU$ based on the Borsuk-Ulam theorem corresponding to formulations~(1)
and~(3) of Theorem~\ref{THM:Borsuk-Ulam}. We start by defining basic
$\BU$ and basic $\BBU$ problems. We shall restrict our attention to
the unit $n$-sphere and unit $n$-ball, but with regards to any
$\ell_p$ norm for $p\geq 1$ or $p=\infty$. For the case of $\BU$ this
amounts to specializing Definition~\ref{DEF:BasicBUgeneral}.
\begin{definition} 
  A basic $\BU$ problems is a basic $\ell_p$-$\BU$ problem if for every
  instace $I$ we have $D_I=S^{d_I}_p$.
\end{definition}
Similarly we define the set of basic $\BBU$ problems with respect to
the $\ell_p$-norm.
\begin{definition}
  An $\cETR$ search problem $\Pi$ is a basic $\ell_p$-$\BBU$ problem
  if for every instance $I$ we have $D_I=B^{d_I}_p$ and $I$ describes
  a continuous function $F_I \colon D_I \rightarrow \RR^{d_I}$, which
  is odd on the boundary $\boundary B^{d_I}_p$. The function $F_I$
  must be computed by an algebraic circuit $C_I$ whose description is
  computable in polynomial time. The solution set is
  $\Sol(I) = \{x\in D_I \mid F_I(x)=0\}$.
\end{definition}

The condition that the function $F_I$ is odd on $\boundary B^{d_I}_p$
is a semantic condition. However, typically the function $F_I$ would
be defined from a basic $\ell_p$-BU problem by a transformation done
in a similar way as in the proof of Theorem~\ref{THM:Borsuk-Ulam}, and
thereby $F_I$ would satisfy the condition automatically.

To define the classes $\BU$ and $\BBU$, we restrict our attention to
domains with respect to the $\ell_\infty$-norm.
\begin{definition}
  The class $\BU$ (respectively, $\BBU$) consists of all \emph{total}
  $\cETR$ search problems that are PL-reducible to a basic
  $\ell_\infty$-$\BU$ problem (respectively, basic
  $\ell_\infty$-$\BBU$ problem) for which the function $F_I$ is
  defined by a $\BasisPlusMinusTimesDivMaxMin$-circuit $C_I$.
\end{definition}

While the definition of $\BU$ in \cite{JCSS:DeligkasFMS21} was using
as domain the unit sphere with respect to the $\ell_1$-norm and not
allowing for division gates, we show in this section these changes do
not change the class. We propose choosing PL-reductions for closing
the class under reductions. PL-reductions are sufficient for obtaining
all of our results and they are polynomially continuous. Another
reason for this choice is that if we restrict the circuits defining
the classes $\FIXP$ and $\BU$ to also be piecewise linear, i.e.\ be
$\BasisPlusMulCMax$-circuits, we obtain the classes $\LinearFIXP$ and
$\LinearBU$, that when closed under polynomial-time reductions are
equal to $\PPAD$ and $\PPA$,
respectively~\cite{SICOMP:EtessamiY10,JCSS:DeligkasFMS21}.

\subsection{Elimination of Division Gates}
\label{SEC:EliminateDivision}

In this section, we show how to eliminate division gates from circuits
defining an instance of the $\BU$ or $\BBU$ problems. Let therefore $C$
denote an algebraic circuit defined over the basis 
$\BasisPlusMinusTimesDivMaxMinRoots$. 

\paragraph{Moving Divisions to the Top.}

In the paper ~\cite{SICOMP:EtessamiY10}, it is shown how to move all 
division gates to the top of the circuit by keeping track of the 
numerator and denominator of every gate. For sake of completeness we 
describe this transformation. Every gate $g_i$ is replaced by two gates 
$g_i'$ and $g_i''$ keeping track of the numerator and denominator, that 
is the value of $g_i$ in the original circuit will be equal to the 
value of $g_i' / g_i''$ in the transformed circuit. Firstly, if $g_i$ is 
an input gate or a constant-$c$ gate we put $g_i'=x_j$ for appropriate~$j$
(respectively 
$g_i' = c$) and $g_i ''=1.$ In order to maintain the equality $g_i = 
g_i'/g_i''$, we proceed as follows: if $g_i = g_j\pm g_k$ is an 
addition/subtraction gate in the original circuit, then we update the 
numerator and denominator to $g_i'=g_j'\cdot g_k''\pm g_k'\cdot g_j''$ 
and $g_i''=g_j''\cdot g_k''$; if $g_i = g_j\cdot g_k$, then 
$g_i'=g_j'\cdot g_k'$ and $g_i'' = g_j''\cdot g_k''$; if $g_i = g_j\div g_k$,
then $g_i' = g_j'\cdot g_k''$ and $g_i'' = g_j''\cdot g_k'$. For root 
gates, we note that if $g_j=g_j'/g_j''$ is input to a $\sqrt[k]{}-$gate $g_i$ 
for $k$ even, then $g_j \geq 0$, from which it follows that $\sgn(g_j')=
\sgn(g_j'')$. With this in mind, we see that we may maintain the numerator 
and denominator of $g_i$ by putting $g_i' = \sqrt[k]{g_j'g_j''}$ and 
$g_i''=\sqrt[k]{g_j''\cdot g_j''}$. Finally, for the $\max$-gate we note 
that $\max(ca,cb)=c\max(a,b)$ for $c\geq 0$. Using this we see that if 
$g_i = \max(g_j,g_k)$, then we may maintain the numerator and denominator 
via the formulas $g_i' = \max(g_j'\cdot g_j''\cdot (g_k'')^2,g_k'\cdot 
g_k''\cdot (g_j'')^2)$ and $g_i '' = (g_j'')^2\cdot (g_k'')^2$. We note 
that all this can be done only blowing up the size of the circuit by a 
constant factor. In the aforementioned paper, the authors then have 
division gates at the top outputting $out_i = out_i'/out_i''$. However, for our 
application this is unnecessary and we may completely remove division gates.

\paragraph{Removing Division Gates for $\BBU$.} 

Suppose that $\Pi$ is a $\BBU$ problem. Let $I$ be an instance of $\Pi$ 
and denote by $C_I$ an algebraic circuit computing a continuous function 
$F_I\colon B^{d_I}\rightarrow\RR^{d_I}$ that is odd on $S^{d_I-1}$ such 
that $\Sol(I)=\{x\in B^{d_I}\mid F_{I}(x)=0\}$. As described above, we may 
transform the circuit $C_I$ to a circuit $C_I^{+}$ that maintains the 
numerator and denominator of every gate. In the same way we define a circuit 
$C_I^{-}$ that is exactly like $C_I^{+}$, except it multiplies the input 
by $-1$ at the very beginning. Let $out_i^{n+},out_i^{d+}$ and 
$out_i^{n-},out_i^{d-}$ denote the gates in $C_{I}^{\pm}$ representing the 
numerators and denominators of the output gates of $C_I$. We now define a 
circuit $C_I^{*}$ that on input $x$ feeds this into $C_I^{+}$ and $C_I^{-}$ 
and then outputs the values $out_i^{n+}\cdot out_i^{d-}$ for $i=1,\dots, d_I$. 
If we denote by $F_i=F_i'/F_i''$ the coordinate functions of $F_I$, then 
$C_I^{*}$ is a circuit computing the function $F_I^{*}$ with coordinate 
functions $F_i'(x)F_i''(-x).$ Now, if $x\in S^{d_I-1}$ then 
$F_i'(x)/F_i''(x)=-F_i'(-x)/F_i''(-x),$ so $F_i'(x)F_i''(-x)=-F_i'(-x)
F_i''(-(-x))$, meaning that $F_I^{*}$ is odd on the boundary. In this way 
we have defined a $\BBU$ problem $\Gamma$ with the same instances as $\Pi$. 
Furthermore, given an instance $I$ of $\Pi$ one may in polynomial time compute 
an instance $f(I)$ of $\Gamma$ by computing $C_I^{*}$. We note that for any 
$x\in B^{d_I}$ it holds that $F_I(x)=0$ if and only if $F_I^{*}(x)=0$. We 
conclude that $\Pi$ SL-reduces to the division-free $\BBU -$problem $\Gamma$.

\paragraph{Removing Division Gates for $\BU$.} 

Now let $I$ be an instance of a $\BU -$problem $\Pi$ and denote by $C_I$
an algebraic circuit computing a continuous function $F_I\colon S^{d_I}
\rightarrow\RR^{d_I}$ such that $\Sol(I) = \{x\in S^{d_I}\mid F_I(x)=F_I(-x)\}$.
We make the same reduction as for $\BBU$ defining a circuit $C_I^{*}$ that
computes a function $F_I^{*}\colon S^{d_I}\rightarrow \RR^{d_I}$ whose 
coordinate functions are given by $F_i'(x)F_i''(-x)$ where $F_i'(x)/F_i''(x)$ is the 
$i$th coordinate function of $F_I$. By definition, $x$ is a BU-point of $F_I$ 
if and only if $F_i'(x)/F_i''(x)=F_i'(-x)/F_i''(-x)$ for all $i$. This happens
if and only if $F_i'(x)F_i''(-x)=F_i'(-x)F_i''(-(-x))$ for all $i$, meaning that
$x$ is a BU-point of $F_I^{*}$. Again, we conclude that $\Pi$ SL-reduces to a 
division-free $\BU -$problem. 


In the previous two paragraphs, we have shown the following result. 

\begin{proposition}
The classes $\BU$ and $\BBU$ remain the same even if the circuits 
are restricted to not using division gates. 
\end{proposition}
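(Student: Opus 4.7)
The plan is to follow a two-stage approach: first, push every division gate to the top of the circuit by maintaining, at each gate, two values corresponding to the numerator and denominator; second, eliminate the top-level divisions by forming a suitable product that preserves the semantic properties required by $\BU$ and $\BBU$.

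For the first stage, I would process the circuit bottom-up, replacing each gate $g_i$ by a pair $(g_i', g_i'')$ so that the value of $g_i$ equals $g_i'/g_i''$. For input gates and constants we set $g_i'' = 1$ and $g_i' = x_j$ or $c$ respectively. For an addition/subtraction gate $g_i = g_j \pm g_k$, we set $g_i' = g_j' g_k'' \pm g_k' g_j''$ and $g_i'' = g_j'' g_k''$; for multiplication, $g_i' = g_j' g_k'$ and $g_i'' = g_j'' g_k''$; for division, $g_i' = g_j' g_k''$ and $g_i'' = g_j'' g_k'$. Root gates require a small trick: for $g_i = \sqrt[k]{g_j}$ we exploit that $\sgn(g_j') = \sgn(g_j'')$ whenever $g_j \geq 0$ (needed for even $k$), setting $g_i' = \sqrt[k]{g_j' g_j''}$ and $g_i'' = \sqrt[k]{(g_j'')^2}$. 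The maximum gate is the subtlest point: since $\max(ca, cb) = c\max(a,b)$ only for $c \geq 0$, we multiply through by the squared denominators and put $g_i' = \max(g_j' g_j'' (g_k'')^2, g_k' g_k'' (g_j'')^2)$ and $g_i'' = (g_j'' g_k'')^2$. This causes only a constant-factor blowup in circuit size.

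For the second stage, given an instance $I$ with circuit $C_I$ and coordinate functions $F_i = F_i'/F_i''$, I would construct $C_I^+$ via the above transformation and also $C_I^-$ which prepends a negation to its input. Then define $C_I^*$ by feeding $x$ to both and outputting, in coordinate $i$, the product $F_i'(x) \cdot F_i''(-x)$. For $\BBU$ we verify that this new function $F_I^*$ is odd on $\partial B^{d_I}$: on the boundary, $F_i(x) = -F_i(-x)$ gives $F_i'(x)/F_i''(x) = -F_i'(-x)/F_i''(-x)$, which rearranges to $F_i'(x)F_i''(-x) = -F_i'(-x)F_i''(x)$, exactly the odd property of the new coordinate function under $x \mapsto -x$. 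Moreover $F_I^*(x) = 0$ iff $F_I(x) = 0$ (since denominators are nonzero on the domain by well-definedness of $C_I$), so the solution sets coincide and the identity serves as the SL-reduction. For $\BU$, the same construction works: $F_i(x) = F_i(-x)$ is equivalent to $F_i'(x)F_i''(-x) = F_i'(-x)F_i''(x)$, so the Borsuk-Ulam points of $F_I$ are exactly those of $F_I^*$.

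The main obstacle is ensuring that all the invariants $g_i = g_i'/g_i''$ are preserved globally and that the resulting denominators never vanish on the relevant domain — in particular, establishing the $\sgn(g_j') = \sgn(g_j'')$ identity for the root case and justifying the factor-of-$g_j''$ or squaring trick in the max case. Once these local invariants are verified, the overall reduction is an SL-reduction (in fact the identity on points), giving the equivalence between the general and division-free versions of both $\BU$ and $\BBU$.
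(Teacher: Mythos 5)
Your proposal is correct and follows essentially the same two-stage approach as the paper: first migrate divisions to the top by tracking numerator/denominator pairs (with the same treatment of each gate type, including the $\sgn(g_j')=\sgn(g_j'')$ trick for even roots and the squared-denominator trick for $\max$), then replace the top-level quotient $F_i'(x)/F_i''(x)$ with the product $F_i'(x)F_i''(-x)$ and verify it preserves the oddness-on-boundary and zero/Borsuk-Ulam-point conditions. The only small addition you make beyond the paper is to say explicitly why denominators are nonzero (well-definedness of the circuit on the domain), which the paper leaves implicit.
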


\subsection{Relationship with \FIXP}
As a consequence of their results about consensus halving,
Deligkas~et~al.\ proved that $\FIXP \subseteq \BU$. We observe here
that the direct proof that the Bosuk-Ulam theorem implies the Brouwer
fixed point theorem due to Volovikov~\cite{AMM:Volovikov08} gives a
much simpler way to derive this relationship. For completeness we
present the construction and proof of Volovikov.
\begin{proposition}[Volovikov]
  \label{PROP:Volovikov}
  Let $f \colon B^d_\infty \rightarrow B^d_\infty$ be a continuous
  function. Define the continous function
  $g \colon S^d_\infty \rightarrow \RR^d$ by
  $g(x,t) = (1+t)(tf(x)-x)$. If $g(x,t)=g(-x,-t)$ then $\abs{t}=1$ and $f(tx)=tx$.
\end{proposition}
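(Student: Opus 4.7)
The plan is to expand the equation $g(x,t)=g(-x,-t)$ into an algebraic identity, and then exploit the fact that $f$ maps into the $\ell_\infty$-unit ball together with the constraint defining $S^d_\infty$. First, I would compute
\[
g(-x,-t) = (1-t)\bigl(-t\,f(-x) + x\bigr) = (1-t)\bigl(x - t\,f(-x)\bigr),
\]
and set this equal to $g(x,t) = (1+t)\bigl(t f(x)-x\bigr)$. Moving all $f$ terms to one side and all $x$ terms to the other, the cross term $tx$ cancels nicely, and after collecting I expect the clean identity
\[
t\bigl[(1+t)\,f(x) + (1-t)\,f(-x)\bigr] = 2x.
\]

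The next step is to take the $\ell_\infty$-norm of both sides. Since $f$ maps into $B^d_\infty$ we have $\|f(\pm x)\|_\infty \leq 1$. For any $t$ with $|t|\leq 1$ the coefficients $1+t$ and $1-t$ are nonnegative, so the triangle inequality gives $\|(1+t)f(x)+(1-t)f(-x)\|_\infty \leq (1+t)+(1-t)=2$. Thus the left-hand side has norm at most $2|t|$, while the right-hand side has norm $2\|x\|_\infty$. The condition $(x,t)\in S^d_\infty$ means $\max(\|x\|_\infty,|t|)=1$, so if $|t|<1$ then $\|x\|_\infty=1$, and the inequality $2|t|\geq 2\|x\|_\infty=2$ forces $|t|\geq 1$, a contradiction. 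Hence $|t|=1$.

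Finally, I would substitute $t=1$ and $t=-1$ back into the identity. When $t=1$ the term $(1-t)f(-x)$ drops out and the identity becomes $2f(x)=2x$, so $f(x)=x$ and thus $f(tx)=f(x)=x=tx$. When $t=-1$ the term $(1+t)f(x)$ drops out and the identity becomes $-2f(-x)=2x$, so $f(-x)=-x$ and thus $f(tx)=f(-x)=-x=tx$. I do not anticipate a serious obstacle; the only subtle point is remembering that on $S^d_\infty$ we may have $|t|<1$ only if $\|x\|_\infty=1$, which is exactly what makes the norm argument in the second paragraph work.
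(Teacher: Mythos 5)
Your proof is correct and follows essentially the same route as the paper: you derive the same identity $t\left[(1+t)f(x)+(1-t)f(-x)\right]=2x$ (which the paper packages as $k(x,t)=x$), use the same triangle-inequality norm bound of $2\abs{t}$ against $2\norm{x}_\infty$ to rule out $\abs{t}<1$ on $S^d_\infty$, and conclude $f(tx)=tx$ by substituting $t=\pm1$. The only cosmetic difference is that the paper names the auxiliary map $k$ and treats the two cases $t=\pm1$ in one stroke via $k(x,t)=tf(tx)$.
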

\begin{proof}
  Note first that
  \[
    g(x,t)-g(-x,-t)=t\left[(1+t)f(x)+(1-t)f(-x)\right]-2x \enspace .
  \]
  It follows that $g(x,t)=g(-x,-t)$ if and only if $k(x,t)=x$, where
  \[
    k(x,t)=\frac{t}{2}\left[(1+t)f(x)+(1-t)f(-x)\right] \enspace .
  \]
  If $(x,t) \in S^d_\infty$ and $\abs{t}<1$ it holds that
  $\norm{x}_\infty=1$. Then since
  \[
    \begin{split}
    \norm{k(x,t)}_\infty \leq & \frac{\abs{t}}{2}\left[(1+t)\norm{f(x)}_\infty + (1-t)\norm{f(-x)}_\infty\right] \\ \leq & \frac{\abs{t}}{2}\left[(1+t) + (1-t)\right] = \abs{t} < 1 \enspace ,
  \end{split}
\]
we have $k(x,t)\neq x$. Thus $g(x,t)=g(-x,-t)$ implies that
$\abs{t}=1$. When $\abs{t}=1$ we clearly have $k(x,t)=tf(tx)$. In
conclusion, $g(x,t)=g(-x,-t)$ implies $t f(tx)=x$, or equivalently
that $f(tx)=tx$.
\end{proof}

The above construction immediately give a simple reduction from any
basic $\FIXP$ problem with domains $B^{d_I}_\infty$ to a basic
$\ell_\infty$-$\BU$ problem. The solution mapping of the reduction
must map solutions $(x,t)$ to $tx$. This may be done by simply using
multiplication gates. But since any solution $(x,t)$ has $\abs{t}=1$
the multiplication $tx_i$ may also be expressed as
$\Sel_2(-x_i,x_i,t)$, which means the solution mapping can also be
computed by constant depth $\BasisPlusTimesMax$-circuits.
\begin{proposition}
  Any $\Pi \in \FIXP$ reduces to a basic $\ell_\infty$-$\BU$ problem
  with $\BasisPlusMinusTimesMaxMin$-circuit by polynomial time constant depth
  $B$-circuit reductions, for both $B=\BasisPlusMinusTimes$ and
  $B=\BasisPlusMinusMulCMaxMin$.
\end{proposition}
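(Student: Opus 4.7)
The plan is to apply Volovikov's construction of Proposition~\ref{PROP:Volovikov} directly. First I would reduce $\Pi$ to a basic $\FIXP$ problem whose Brouwer function $f\colon B^{d_I}_\infty \to B^{d_I}_\infty$ is computed by a $\BasisPlusMinusTimesMaxMin$-circuit. This is possible by combining the Nash-based characterization of $\FIXP$ (Proposition~\ref{PROP:BasicFIXP-NE}), which already avoids division gates and produces a hypercube-domain circuit whose gate values stay in $[0,1]$, with the robustness of $\FIXP$ under affine changes of domain, so that the hypercube $[0,1]^{d_I}$ becomes $B^{d_I}_\infty = [-1,1]^{d_I}$. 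Both reductions are SL-reductions, hence in particular constant depth $B$-circuit reductions for either choice of $B$.

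Next I would instantiate a basic $\ell_\infty$-$\BU$ problem with domain $S^{d_I}_\infty$ and function $g(x,t) = (1+t)(tf(x)-x)$. Because $f$ is computed using only $\BasisPlusMinusTimesMaxMin$-gates, so is $g$: we simply append a constant number of addition, subtraction, and multiplication gates to the circuit for $f$. The construction runs in polynomial time. By Proposition~\ref{PROP:Volovikov} every solution $(x,t)\in S^{d_I}_\infty$ of this $\BU$ instance satisfies $|t|=1$ and $f(tx)=tx$, so the vector $tx \in B^{d_I}_\infty$ is the corresponding fixed point and the target of the solution mapping.

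The solution mapping is therefore $(x,t)\mapsto(tx_1,\dots,tx_{d_I})$. For $B=\BasisPlusMinusTimes$ this is realised by one multiplication gate per coordinate, a depth-one circuit. For the restricted basis $B=\BasisPlusMinusMulCMaxMin$, which forbids multiplying two variable quantities, I would exploit the semantic fact that at any solution $t\in\{-1,+1\}$ and $|x_i|\le 1$, and use the piecewise-linear identity
\[
  tx_i \;=\; \max\bigl(x_i + t - 1,\;\; -x_i - t - 1\bigr).
\]
A case split on $t=\pm 1$ together with $x_i\in[-1,1]$ verifies the identity in all four extremal configurations, and monotonicity in $x_i$ extends it to the full range. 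The right-hand side is a constant-depth combination of additions, of multiplications by the constant $-1$, and of a single $\max$.

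The main obstacle is discovering the piecewise-linear identity used in the $\BasisPlusMinusMulCMaxMin$ case, since without general multiplication the product $tx_i$ can only be expressed by exploiting the semantic constraint $|t|=1$; once this formula is in hand, everything else is a routine assembly of Proposition~\ref{PROP:Volovikov} with the already-established structural results for $\FIXP$.
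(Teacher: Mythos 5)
Your proof follows the same route as the paper: SL-reduce $\Pi$ to a ball-domain basic $\FIXP$ problem via the Nash-based characterization (Proposition~\ref{PROP:BasicFIXP-NE}) together with an affine change of variables, apply Volovikov's construction $g(x,t)=(1+t)(tf(x)-x)$ to obtain the target basic $\ell_\infty$-$\BU$ instance, and map a solution $(x,t)$ --- which by Proposition~\ref{PROP:Volovikov} necessarily has $\abs{t}=1$ --- back to the fixed point $tx$. For $B=\BasisPlusMinusTimes$ both you and the paper realize the solution map with one multiplication gate per coordinate. For the piecewise-linear basis $B=\BasisPlusMinusMulCMaxMin$, the paper writes $tx_i=\Sel_2(-x_i,x_i,t)$ and notes this is a constant-depth $\BasisPlusTimesMax$-circuit; but $\Sel_2$ is bilinear in its arguments, so as written this does not literally produce a $\BasisPlusMinusMulCMaxMin$-circuit. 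Your identity $tx_i=\max(x_i+t-1,\,-x_i-t-1)$, valid whenever $\abs{t}=1$ and $\abs{x_i}\le 1$, performs the same $\pm 1$-selection by truncation rather than by bilinear interpolation and therefore lands exactly in $\BasisPlusMinusMulCMaxMin$ as claimed --- a small but genuine tightening of the detail the paper glosses over. The remaining bookkeeping (polynomial-time instance mapping, constant depth, composition with the SL-reduction) is handled identically in both accounts, so the overall strategy is the same.
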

\begin{proof}
  Any $\Pi \in \FIXP$ SL-reduces to a basic $\FIXP$ problem $\Gamma$
  with domains $D_I=B^{d_I}_\infty$ and $\BasisPlusTimesMax$-circuits
  $C_I$. From that, the instance mapping as described by
  Proposition~\ref{PROP:Volovikov} produces a
  $\BasisPlusTimesMax$-circuit and domain $S^{d_I}_\infty$. The
  composition of the SL-reduction and the reduction described above
  then yields the claimed types of reductions.
\end{proof}

\subsection{Change of Domains for $\BU$ and $\BBU$}
In this section we show reduce between different domains for the BBU
and BU problems.

\begin{proposition}
Let $B$ be a set of gates that contains $\{+,-,\ast,\div,\max,\min\}$.
Suppose that $\Pi$ is an $\exists\RR$ search problem whose domains are contained in hypercubes
that reduces to a basic $\ell_p - \BBU$ problem $\Gamma$ by a polynomial time $B$-circuit 
reduction $(f,g)$. Furthermore, 
suppose that for any instance $I$ of $\Pi$ the function $g(I,\cdot)$ mapping solutions
of $f(I)$ to solutions of $I$ is odd and assume that $C_{f(I)}$ is also a $B-$circuit.
\textit{(i)} If $p=\infty$ then $\Pi$ SL-reduces to a basic $\ell_{\infty}-\BBU$ problem 
using gates in $B$. \textit{(ii)} If $1\leq p<\infty$  then $\Pi$ SL-reduces to a basic
$\ell_p -\BBU$ problem using $B\cup \{\sqrt[p]{\cdot}\}-$circuits.
\label{PROP:Search-to-BBU}
\end{proposition}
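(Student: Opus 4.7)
The plan is to mimic the construction used in Proposition~\ref{PROP:CircuitReductionFIXP} for \FIXP, but adapted to the \BBU\ setting. The idea is to absorb both the circuit $C_{f(I)}$ computing $F_{f(I)}$ and the solution-mapping circuit $C_I$ computing $G = g(I,\cdot)$ into the ``Borsuk--Ulam function'' of a single enlarged instance, so that the resulting reduction from $\Pi$ becomes an SL-reduction (the solution map is an affine projection on coordinates). Since $D_I$ is contained in a hypercube, I may first rescale $G$ by a constant to $G_s(x) = G(x)/c$ so that its image sits well inside the relevant unit ball; this preserves oddness and gives an SL-invertible output transformation.

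For case~(i), where $p=\infty$, the ball $B^{n+m}_\infty$ is a product, so the construction is clean. On the enlarged domain $B^{n+m}_\infty$ define
\[
H(x,y) = \bigl(\,F_{f(I)}(x)\cdot\alpha(y),\; y - G_s(x)\,\bigr),
\]
where $\alpha(y) = \max(0,\,1-2\|y\|_\infty)$ is a continuous, even bump function that vanishes once $\|y\|_\infty\geq 1/2$. I would verify the four required properties. Continuity is immediate; $H$ is computed by a $B$-circuit because $\alpha$ needs only $+,-,\max$ and rational constants. For oddness on $\partial B^{n+m}_\infty$: the second block $y - G_s(x)$ is odd everywhere using that $G_s$ is odd; for the first block, if $\|y\|_\infty\geq 1/2$ then $\alpha(y)=\alpha(-y)=0$, otherwise $(x,y)\in\partial B^{n+m}_\infty$ forces $\|x\|_\infty=1$ and the assumed oddness of $F_{f(I)}$ on $S^{n-1}_\infty$ applies (and $\alpha$ is even in $y$). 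Finally, if $G_s$ is rescaled so that $\|G_s(x)\|_\infty<1/2$ for all $x$, then any zero of $H$ has $y = G_s(x)$, which puts $\alpha(y)>0$ and forces $F_{f(I)}(x)=0$; the zero set is therefore $\{(x,G_s(x)) : x\in\Sol(f(I))\}$, and recovering $g(I,x)=cy$ is an affine transformation of a single block of coordinates, which is SL.

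For case~(ii), where $1\le p<\infty$, $B^{n+m}_p$ is not a product of smaller $\ell_p$-balls, so a bump function depending only on $\|y\|_p$ cannot separate the behavior of $F_{f(I)}$ from that of the $y$-block. I would replace the product embedding by the uniform rescaling $(x,y)\mapsto (\lambda x,\lambda y)$ with $\lambda = 2^{-1/p}$, for which $\lambda^p(\|x\|_p^p+\|y\|_p^p)\le 2\lambda^p = 1$; this map requires the $\sqrt[p]{\cdot}$ gate to evaluate $\lambda$ and its powers. The analogue of the previous $H$ uses the rescaled $G$ and an $\ell_p$-adapted bump, built from $\sqrt[p]{\cdot}$ applied to $1-c\|y\|_p^p$, together with a reparametrization that pulls points on $\partial B^{n+m}_p$ back to $S^{n-1}_p$ so that the oddness hypothesis on $F_{f(I)}$ can still be invoked. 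The solution map remains of the form $Y\mapsto (c/\lambda)Y$ on one block, hence SL. The main obstacle is exactly this: on $\partial B^{n+m}_p$ the argument fed to $F_{f(I)}$ is not automatically on $S^{n-1}_p$ (except at the ``poles'' $y=0$), so one must either homogenize $F_{f(I)}(x/r(y))\cdot r(y)^k$ with $r(y)=\sqrt[p]{1-\|y\|_p^p}$ (using boundedness of $F_{f(I)}$ to extend continuously through the locus $r(y)=0$ without actual division-by-zero in the circuit) or carefully truncate the embedding; checking that the resulting circuit is over $B\cup\{\sqrt[p]{\cdot}\}$, is continuous on all of $B^{n+m}_p$, and produces no spurious zeros is the delicate part of the proof.
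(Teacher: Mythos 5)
Your construction for part~(i) is essentially the paper's: you define $H(x,y)$ with a $\max$-based bump in $y$ multiplying $F_{f(I)}(x)$, paired with $y-G_s(x)$, and verify oddness and the structure of the zero set in the same way (the paper uses the slightly simpler factor $1-\norm{y}_\infty$ and rescaling $G/2$, but the argument is identical in substance).

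For part~(ii), however, you have correctly located the difficulty but not resolved it, and this is a genuine gap. You note that on $\boundary B^{n+m}_p$ the $x$-block need not lie on $S^{n-1}_p$, and you propose either homogenizing via $F_{f(I)}(x/r(y))\cdot r(y)^k$ with $r(y)=\sqrt[p]{1-\norm{y}_p^p}$, or a ``careful truncation,'' and you explicitly flag the extension through $r(y)=0$ and the verification of continuity and absence of spurious zeros as unchecked. The homogenization idea is problematic: the circuit model here does not allow division, and even if it did, $x/r(y)$ near $r(y)=0$ is not something a $B\cup\{\sqrt[p]{\cdot}\}$-circuit computes as a continuous function on the ball without further ideas. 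The paper's resolution is a clean trick you are missing: introduce the \emph{odd radial retraction} $h(x)=x/\max(\tfrac12,\norm{x}_p)$, which is computable by a $B\cup\{\sqrt[p]{\cdot}\}$-circuit (the $\max$ with $\tfrac12$ keeps the denominator bounded away from $0$), and feed $h(x)$ into both $F$ and $G$. One then sets
\[
H(x,y)=\Bigl(\max\bigl(0,\tfrac12-\norm{y}_p^p\bigr)\,F(h(x)),\;y-\tfrac{1}{3n}G(h(x))\Bigr).
\]
On $S^{n+m-1}_p$, whenever the bump is nonzero one has $\norm{y}_p^p<\tfrac12$, hence $\norm{x}_p^p>\tfrac12$ so $\norm{x}_p>\tfrac12$ and $h(x)=x/\norm{x}_p\in S^{n-1}_p$; oddness of $h$ and oddness of $F$ on $S^{n-1}_p$ then yield oddness of $H$ on the boundary, with no homogenization or division-by-zero concern. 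Note also that your version keeps $G(x)$ rather than $G(h(x))$; this is a second issue, since at a zero of $H$ the point that belongs to $\Sol(f(I))$ is $h(x)$, not $x$, so the output block must encode $G(h(x))$ for the SL projection to return a solution of $\Pi$.
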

\begin{proof}
\textit{(i)} First assume that the domains of $\Gamma$ are unit hypercubes.
Let $I$ denote an instance  of $\Pi$. By assumption $D_I \subseteq [-1,1]^{m}$ and 
$D_{f(I)} = [-1,1]^{n}$ where $m=d_I$ and $n=d_{f(I)}$. From the definition 
of $(f,g)$ we may given $I$ in polynomial time compute $f(I)$ and a circuit 
$C_I$ computing a function $G\colon [-1,1]^n\rightarrow [-1,1]^m$ such  that 
$G(x)=g(I,x)\in\Sol(I)$ for every $x\in\Sol(f(I)).$ By assumption of $\Gamma$
 we may in polynomial time compute another circuit $C_{f(I)}$ that defines a 
function $F\colon [-1,1]^n\rightarrow\RR^n$ that is odd on the boundary such 
that $\Sol(f(I))$ are the zeroes of $F$. 

Define $H \colon [-1,1]^{n+m}\rightarrow\RR^{n+m}$ by $H(x,y)=
((1-||y||_{\infty})F(x),y- \tfrac{1}{2}G(x)))$. As $G$ is odd and $F$ is odd 
on the boundary, one may verify that $H$ is odd on the boundary of  
$[-1,1]^{m+n}$. As $H$ is polynomial-time computable by a $B-$circuit,
it defines an $\ell_{\infty}- \BBU$ problem $\Lambda$ with the same 
instances as $\Pi$. Furthermore, if $(x,y)$ is a zero of $H$, then $y=G(x)/2$,
so $||y||_{\infty}<1$. The equality $(1-||y||_{\infty})F(x)=0$ from the first
component then implies $F(x)=0$. Therefore, the zeroes of $H$ are
contained in $\{(x,G(x)/2)\mid x\in\Sol(f(I))\}$.  Given a zero of $H$ one may 
recover a solution to $\Pi$ by projecting onto the last $m$ coordinates 
and multiplying by $2$. In particular, $\Pi$ SL-reduces to $\Lambda$. 


\textit{(ii)} Now, suppose that the domains of $\Gamma$ are $p$-balls,
 where $1\leq p<\infty$. Again by assumption we have that $D_I \subseteq 
[-1,1]^m$ and $D_{f(I)}=B_{p}^n$ where $m=d_I$ and $n=d_{f(I)}$, 
and we may given an instance $I$ of $\Pi$ in polynomial time compute a 
circuit $C_I$ defining a function $G\colon B_p^n \rightarrow [-1,1]^m$ such 
that $G(x)=g(I,x)\in\Sol(I)$ for every $x\in\Sol(f(I))$. Furthermore, we 
may in polynomial time compute a circuit $C_{f(I)}$ computing a function 
$F\colon B_{p}^{n}\rightarrow\RR^n$ that is odd on $S_p^{n-1}$ such 
that $\Sol(f(I))$ is the zeroes of $F$. 

Now define an odd function $h\colon B_p^n\rightarrow B_p^n$ by 
$h(x)=x/\max(1/2,||x||_p)$, which may be computed by a circuit
using also $\sqrt[p]{\cdot }$ gates, and define $H\colon B_p^{n+m}
\rightarrow\RR^{n+m}$ by
\begin{align*}
H(x,y)=(\max(0,\tfrac{1}{2}-||y||_p^p)F(h(x)),y-\tfrac{1}{3n}G(h(x)))
\end{align*}
First we remark that $H$ is odd on the boundary of
 $B_p^{n+m}$. Clearly, the second coordinate is always odd, and the 
first coordinate evaluates to $0$ if  $||y||_p^p>1/2$. If $(x,y)\in S_p^{n+m-1}$
 and $||y||_p^p<1/2$, then $||x||_p^p>1/2$ which implies that $||x||_p > 1/2$.
 This then implies that $h(x)=x/||x||_p$ and so $F(h(x))=-F(-h(x))=-F(h(-x))$,
 because $h$ is odd and  $F$ is odd on $S_p^{n-1}$.

Now, if $(x,y)$ is a zero 
of $H$, then $y=\tfrac{1}{3n}G(h(x))$ and so $||y||_p^p\leq 
(n||y||_{\infty})^p\leq \tfrac{1}{3^p}<\tfrac{1}{2}$. From the first first coordinate equality 
$\max(0,1/2-||y||_p^p)F(h(x))=0$ one then obtains that $F(h(x))=0$ so
$h(x)\in \Sol(f(I))$. Thus, the set of zeroes of $H$ are contained in $\{(x,\frac{1}{3n}G(h(x))
\mid h(x)\in\Sol(f(I))\}$. Furthermore, $H$ can be computed by circuit over 
$B\cup\{\sqrt[p]{\cdot }\}$, so this defines a basic $\ell_p -\BBU$ problem 
$\Lambda$ with $(B\cup\{\sqrt[p]{\cdot }\})$-circuits and the same instances as $\Pi$. From
a zero of $H$ we may again recover a solution to $\Pi$ by projecting onto 
the last $m$ coordinates and multiplying the result by $3n$. We conclude that 
$\Pi$ SL-reduces to $\Lambda.$
\end{proof}

\begin{proposition}
Any basic $\ell_{\infty} -\BBU$ problem SL-reduces to a basic $\ell_p 
-\BBU$ problem using gates in $\{+,-,\ast,\div,\max,\min,\sqrt[p]{\cdot}\}$.
\label{PROP:BBU-inf-to-p}
\end{proposition}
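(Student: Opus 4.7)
My plan is to adapt the extra-dimensions construction of Proposition~\ref{PROP:Search-to-BBU}(ii). Let $I$ be an instance of a basic $\ell_\infty$-$\BBU$ problem with function $F_I \colon B^n_\infty \to \RR^n$ (where $n = d_I$), odd on $\partial B^n_\infty$. I build an instance $f(I)$ with domain $B^{2n}_p$, using the extra $n$ coordinates to carry a scaled copy of an $\ell_\infty$-$\BBU$ solution so that a simple projection-plus-scaling produces the SL-reduction. The essential new ingredient is an auxiliary map $\psi \colon B^n_p \to B^n_\infty$ that is continuous, odd, surjective, and pushes $S^{n-1}_p$ into $S^{n-1}_\infty$, so that oddness of $F_I$ on the $\ell_\infty$-boundary transfers through composition. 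I take the coordinate-wise clamp $\psi(u)_i = \min(\max(n^{1/p} u_i, -1), 1)$; the key geometric fact is that every $u \in S^{n-1}_p$ satisfies $\|u\|_\infty \geq n^{-1/p}$ (since $1 = \sum |u_i|^p \leq n\,\|u\|_\infty^p$), so the clamp saturates to $\pm 1$ in at least one coordinate and therefore $\psi(u) \in S^{n-1}_\infty$.

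Combining $\psi$ with the radial rescaling $h(x) = x/\max(1/2, \|x\|_p)$ from Proposition~\ref{PROP:Search-to-BBU}(ii) and the constant $c = (3n)^{-1/p}$, I define
\[
    H(x, y) = \bigl(\max(0,\,\tfrac{1}{2} - \|y\|_p^p)\,F_I(\psi(h(x))),\ y - c\,\psi(h(x))\bigr)
\]
on $B^{2n}_p$. The choice of $c$ guarantees $\|c\,\psi(h(x))\|_p^p \leq c^p\,n = 1/3$, and the only division used inside the circuit is in $h$, whose denominator is bounded below by $1/2$, so $H$ is expressible by a polynomial-size $\{+, -, \ast, \div, \max, \min, \sqrt[p]{\cdot}\,\}$-circuit with no risk of a $0/0$ gate. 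The verification then splits into three standard steps: (i) $H$ is odd on $S^{2n-1}_p$, checked by case analysis on $\|y\|_p^p \gtreqless 1/2$; when $\|y\|_p^p \geq 1/2$ the first component vanishes and the second is odd by oddness of $\psi \circ h$, and when $\|y\|_p^p < 1/2$ we have $\|x\|_p > 2^{-1/p} \geq 1/2$ (using $p\geq 1$), so $h(x) = x/\|x\|_p \in S^{n-1}_p$ and $\psi(h(x)) \in S^{n-1}_\infty$, where oddness of $F_I$ applies; (ii) at any zero of $H$ the second component forces $y = c\,\psi(h(x))$, hence $\|y\|_p^p \leq 1/3 < 1/2$, and the first component then forces $F_I(\psi(h(x))) = 0$; (iii) the solution mapping $(x, y) \mapsto c^{-1} y = \psi(h(x))$ is a projection to the last $n$ coordinates followed by individual scalar multiplications, i.e.\ an SL-reduction.

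The main obstacle is the choice of $\psi$: the textbook radial homeomorphism $u \mapsto (\|u\|_p/\|u\|_\infty)\,u$ from $B^n_p$ to $B^n_\infty$ is the obvious candidate, but it forces a $0/0$ computation at the origin inside the circuit, violating the well-definedness requirement for algebraic circuits. The piecewise-linear clamp sidesteps this entirely while still giving the two properties actually needed for the argument, namely surjectivity onto $B^n_\infty$ (any $x \in B^n_\infty$ is hit by $x/n^{1/p} \in B^n_p$) and mapping $S^{n-1}_p$ into $S^{n-1}_\infty$; injectivity of $\psi$ is never used. Everything else in the proof is routine book-keeping modeled on the earlier proposition.
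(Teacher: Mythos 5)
The one genuine gap is the choice $c = (3n)^{-1/p}$. An SL-reduction requires the per-coordinate affine coefficients $a_i$ to be \emph{rational} numbers computable in polynomial time from $I$, but your recovery map multiplies by $c^{-1} = (3n)^{1/p}$, which is irrational whenever $3n$ is not a perfect $p$th power (already for $n=2$, $p=2$). The remedy is exactly the one the paper uses in Proposition~\ref{PROP:Search-to-BBU}(ii): take $c = 1/(3n)$. Then $\|c\,\psi(h(x))\|_p^p \leq c^p\cdot n = n^{1-p}/3^p \leq 1/3 < 1/2$ for every $p\geq 1$, your zero-analysis and oddness case split go through unchanged, and the recovery map $(x,y)\mapsto 3n\cdot y$ is a genuine SL-reduction with rational coefficient $3n$.

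Beyond that detail, the construction is sound but less modular than the paper's. The paper builds a plain $\{+,-,\ast,\div,\max,\min\}$-circuit reduction $G(x)=F(\pi(x))$ with \emph{no} change of dimension, using $\pi(x)=x/\max(1/(2n),\|x\|_\infty)$; this sends $S^{n-1}_p$ into $S^{n-1}_\infty$ because $\|x\|_p=1$ forces $\|x\|_\infty\geq n^{-1/p}\geq 1/(2n)$. It then observes that the solution map $\pi$ is odd and invokes Proposition~\ref{PROP:Search-to-BBU}(ii) as a black box to supply the dimension-doubling. Your concern about a $0/0$ singularity at the origin applies to the un-floored radial map $u\mapsto (\|u\|_p/\|u\|_\infty)u$, but not to $\pi$: the $\max$ with $1/(2n)$ removes it in the same spirit as your coordinate-wise clamp. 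Your clamp $\psi$ is a correct alternative transfer map, and the observation that only surjectivity onto $B^n_\infty$ and $\psi(S^{n-1}_p)\subseteq S^{n-1}_\infty$ matter (injectivity is unused) is accurate; but the composition $\psi\circ h$ still performs the division inside $h$, so it saves no gates over $\pi$, and reproducing the $2n$-dimensional construction inline duplicates the bookkeeping that Proposition~\ref{PROP:Search-to-BBU}(ii) already packages.
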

\begin{proof}
Let $\Pi$ be a basic $\ell_{\infty} -\BBU$ problem. By the previous proposition it 
suffices to argue that $\Pi$ polynomial time $\{+,-,\ast,\div,\max,\min\}
-$reduces to a a basic $\ell_p -\BBU$ problem. Given an instance
$I$ of $\Pi$, compute in polynomial time a circuit $C_I$ defining a
function $F\colon B_{\infty}^{n}\rightarrow\RR^{n}$ that is odd on $S_p ^{n-1}$ such that 
$\Sol(I)$ are the zeroes of $F$. Also, define the map $\pi\colon B_{\infty}^n\rightarrow
B_{\infty}^n$ by $\pi(x) = x/\max(\tfrac{1}{2n},||x||_{\infty})$. Now we may in polynomial 
time compute $\{+,-,\ast,\div,\max,\min\}-$circuit computing the function 
$G\colon B_p ^n\rightarrow \RR^n$ given by $G(x)=F(\pi(x))$. If $||x||_p = 1$, 
then $||x||_{\infty}\geq 1/(2n)$, so $||\pi(x)||_{\infty} = 1$, implying that $G(x)=G(-x)$. 
Thus we have defined a map $f$ taking instances $I$ of $\Pi$ to instances $f(I)$ of a 
basic $\ell_p-\BBU$ problem $\Gamma$. We note that $f$ is computable in polynomial time.
Furthermore, from $x\in\Sol(f(I))$ one may recover a solution by $g(I,x)=\pi(x)$ to $I$. 
As the function $g(I,\cdot)$ is odd and computable by a $\{+,-,\ast,\div,\max,\min\}-$circuit 
we conclude that $(f,g)$ satisfies the requirements of the previous proposition. We conclude 
that  $\Pi$ SL-reduces to a $\ell_p -\BBU$ using gates from $\{+,-,\ast,\div,\max,
\min,\sqrt[p]{\cdot}\}$. 
\end{proof}

\begin{proposition}
Any basic $\ell_p -\BBU$ problem $\Pi$ SL-reduces to a basic $\ell_{\infty} 
-\BBU$ problem where the circuits are allowed to use $\sqrt[p]{\cdot}$ gates.
\label{PROP:BBU-p-to-inf}
\end{proposition}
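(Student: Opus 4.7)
The plan is to exhibit a polynomial-time circuit reduction from $\Pi$ to a basic $\ell_\infty$-$\BBU$ problem whose solution mapping is odd, and then to invoke Proposition~\ref{PROP:Search-to-BBU}(i) with $B = \BasisPlusMinusTimesDivMaxMin \cup \{\sqrt[p]{\cdot}\}$ to upgrade the reduction to an SL-reduction. This is precisely the reverse pattern of Proposition~\ref{PROP:BBU-inf-to-p}, where a radial map was used to move from $\ell_p$-balls into $\ell_\infty$-balls; here we go the other direction.

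Let $I$ be an instance of $\Pi$ with associated function $F \colon B^{n}_p \to \RR^{n}$ that is odd on $S^{n-1}_p$ and has $\Sol(I) = F^{-1}(0)$. I would first introduce a radial rescaling $\psi \colon B^{n}_\infty \to B^{n}_p$ defined by
\[
  \psi(x) = \frac{\norm{x}_\infty}{\max(1/2,\,\norm{x}_p)}\,x \enspace ,
\]
and verify its relevant properties. Using $\norm{x}_\infty \leq \norm{x}_p$, one obtains $\norm{\psi(x)}_p = \norm{x}_\infty \cdot \norm{x}_p/\max(1/2,\norm{x}_p) \leq \norm{x}_\infty \leq 1$, so $\psi$ indeed lands in $B^{n}_p$; moreover for $x \in S^{n-1}_\infty$ we have $\norm{x}_p \geq 1 > 1/2$, hence $\psi(x) = x/\norm{x}_p \in S^{n-1}_p$. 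The cutoff at $1/2$ is a standard device to avoid a division by zero at the origin while leaving the boundary behavior unchanged; the resulting $\psi$ is continuous (continuity at $0$ follows from $\norm{\psi(x)}_\infty \leq \norm{x}_\infty$), manifestly odd, and computable by a $\BasisPlusMinusTimesDivMaxMin \cup \{\sqrt[p]{\cdot}\}$-circuit, with the root gate required only to form $\norm{x}_p = \sqrt[p]{\sum_i \abs{x_i}^p}$.

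Next set $G = F \circ \psi \colon B^{n}_\infty \to \RR^{n}$. Then $G$ is continuous and odd on $S^{n-1}_\infty$, because $\psi$ is odd and $F$ is odd on $\psi(S^{n-1}_\infty) \subseteq S^{n-1}_p$. A polynomial-time computable circuit for $G$ over the enlarged basis gives an instance $f(I)$ of a basic $\ell_\infty$-$\BBU$ problem, and the solution mapping $g(I,y) = \psi(y) \in \Sol(I)$ is odd and already computed by the same subcircuit present in $f(I)$. Since the domains $B^{n}_p$ of $\Pi$ lie in the hypercube $[-1,1]^n$, the pair $(f,g)$ meets every hypothesis of Proposition~\ref{PROP:Search-to-BBU}(i) with $B$ as above, and a direct application yields the required SL-reduction of $\Pi$ to a basic $\ell_\infty$-$\BBU$ problem using gates from $\BasisPlusMinusTimesDivMaxMin \cup \{\sqrt[p]{\cdot}\}$. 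The one point that needs care is the construction of $\psi$, specifically continuity at the origin and the preservation of oddness and of the boundary mapping across the cutoff; once that is settled, Proposition~\ref{PROP:Search-to-BBU}(i) does the remaining work.
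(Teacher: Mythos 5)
Your proof is correct and follows essentially the same route as the paper: precompose $F$ with an odd radial transition map from $B^n_\infty$ to $B^n_p$ sending the boundary to the boundary, observe that the composite is odd on $\boundary B^n_\infty$ and has the same zeros up to the transition map, and then apply Proposition~\ref{PROP:Search-to-BBU}(i). The only (immaterial) difference is the formula: the paper uses $h(x)=x/\max(1/2,\norm{x}_p)$, whereas you use $\psi(x)=\norm{x}_\infty\,x/\max(1/2,\norm{x}_p)$, which carries an extra $\norm{x}_\infty$ factor; on $S^{n-1}_\infty$ the two maps coincide and both land in $B^n_p$, so either works, and the simpler $h$ makes the continuity discussion even more immediate since the denominator is always at least $1/2$ and no special treatment of the origin is needed.
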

\begin{proof}
Let $\Pi$ be a basic $\ell_p-\BBU$ problem and let $I$ denote an instance of $\Pi$.
We may compute a circuit $C_I$ defining a function $F\colon B_p^n\rightarrow\RR^n$
that is odd on the boundary of $B_p^n$ such that $\Sol(I)$ is the set of zeroes of $F$. 
Now, define a function $h\colon\RR^n\rightarrow\RR^n$ by $h(x) = x/\max(1/2,||x||_p)$. 
We may now in
polynomial time compute a $\{+,-,\ast,\div,\max,\max,\sqrt[p]{\cdot}\}$-circuit computing 
the function $H\colon B_{\infty}^n\rightarrow \RR^n$ given by $H(x)=F(h(x))$. 

If $x\in \partial B_{\infty}^n$, then $||x||_p\geq ||x||_{\infty} = 1$ which
shows that $h(x)=x/||x||_p$ so $||h(x)||_p = 1$. As $h$ is odd and $F$ is odd on 
the boundary, it follows that $H(x)= F(h(x))=F(-h(-x))=-F(h(-x))=-H(-x)$ showing
that $H$ is odd on the boundary of $B_{\infty}^n$, so it defines an instance of an
$\ell_{\infty}-\BBU$ problem $\Gamma$. Mapping back solutions amounts to computing
$h(x)$ which can be done by a $\{+,-,\ast,\div,\max,\max,\sqrt[p]{\cdot}\}$-circuit. 
The result now follows from part~(i) of Proposition~\ref{PROP:Search-to-BBU}.
\end{proof}

Now we proceed with showing the reductions between basic $\ell_p - \BU$
problems.

\begin{proposition}
Suppose that $\Pi$ is an $\exists\RR$ search problem whose domains are contained in 
hypercubes that reduces to a basic $\ell_p -\BU$ problem by a $B-$circuit reduction
$(f,g)$, where $\{+,-,\ast,\div,\max,\min\}\subseteq B$. Assume also that for
every instance $I$ of $\Pi$, and that $g(I,\cdot)$ is an odd map 
$\RR^{d_{f(I)}}\rightarrow [-1,1]^{d_I}$. \textit{(i)} If $p=\infty$
then $\Pi$ SL-reduces to a basic $\ell_{\infty}-\BU$ problem with circuits over $B$. 
\textit{(ii)} If $1\leq p<\infty$ then $\Pi$ SL-reduces to a basic $\ell_p-\BU$ 
problem with circuits over $B\cup\{\sqrt[p]{\cdot}\}$. 
\label{PROP:search-to-BU}
\end{proposition}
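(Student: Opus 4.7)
The plan is to follow the template of Proposition~\ref{PROP:Search-to-BBU} closely, adjusting the construction to build a basic BU problem (sphere domain, BU-condition $H(z)=H(-z)$) rather than a basic BBU problem. For part~(i), let $I$ be an instance of $\Pi$, let $n+1=d_{f(I)}$ so that $D_{f(I)}=S^n_\infty\subseteq\RR^{n+1}$, and let $m=d_I$. From $(f,g)$ I would compute in polynomial time the instance $f(I)$, a $B$-circuit for $F\colon S^n_\infty\to\RR^n$, and a $B$-circuit for the odd map $G=g(I,\cdot)\colon\RR^{n+1}\to[-1,1]^m$. I would then define a basic $\ell_\infty$-$\BU$ instance $\Lambda$ on $S^{n+m}_\infty\subseteq\RR^{n+m+1}$, writing its points as $(x,y)\in\RR^{n+1}\times\RR^m$, with
\[
  H(x,y) = \bigl(F(\pi(x))-F(-\pi(x)),\; 2y-G(x)\bigr),
\]
where $\pi(x)=x/\max(\tfrac{1}{2},\|x\|_\infty)$ is an odd projection onto (a neighborhood of) $S^n_\infty$. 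Both coordinate blocks of $H$ are odd in $(x,y)$, so $H$ is odd on all of $S^{n+m}_\infty$, and consequently the BU-points of $H$ coincide with its zeros.

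At any zero $(x,y)\in S^{n+m}_\infty$ of $H$, the second component gives $y=G(x)/2$ and hence $\|y\|_\infty\leq 1/2$; the sphere constraint then forces $\|x\|_\infty=1$, so $\pi(x)=x\in S^n_\infty$; and the first component reduces to $F(x)=F(-x)$, i.e.\ $x\in\Sol(f(I))$. The solution mapping $(x,y)\mapsto 2y=G(x)=g(I,x)\in\Sol(I)$ is a coordinate projection composed with scaling by~$2$, which is SL. For part~(ii) with $1\leq p<\infty$, the construction is the same on $S^{n+m}_p\subseteq\RR^{n+m+1}$, with $\pi$ implemented using a $\sqrt[p]{\cdot}$ gate to compute $\|x\|_p$. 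The scalar $\tfrac{1}{2}$ in $H_2$ is replaced by a factor of order $1/n$, paralleling the $\tfrac{1}{3n}$ used in the proof of Proposition~\ref{PROP:Search-to-BBU}(ii), so that at any zero one has $\|y\|_p^p<1/2$, whence the sphere constraint gives $\|x\|_p^p>1/2$ and $\pi(x)=x/\|x\|_p\in S^n_p$.

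The main technical obstacle is that $F$ is a priori only assumed to be defined on $S^n_p$, yet our circuit evaluates $F\circ\pi$ at every point $(x,y)\in S^{n+m}_p$; in particular, when $\|x\|_p$ is very small the projected point $\pi(x)$ fails to lie on $S^n_p$. All such $(x,y)$ have $\|y\|_p$ close to~$1$ and hence cannot be zeros of $H$, but the definition of a basic BU problem still requires $H$ to be a well-defined function on the entire domain. The standard resolution, already implicit in the proofs of Propositions~\ref{PROP:BBU-inf-to-p} and~\ref{PROP:BBU-p-to-inf}, is to observe that a $\{+,-,\ast,\div,\max,\min,\sqrt[p]{\cdot}\}$-circuit computing $F$ on the compact sphere $S^n_p$ extends continuously to an open neighborhood of $S^n_p$, and to choose the cutoff in $\pi$ (here $\tfrac{1}{2}$ or $\tfrac{1}{2n}$) small enough that $\pi$ maps the ambient sphere into this neighborhood. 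Once this technicality is dispatched, the verification that $I\mapsto\Lambda$-instance together with $(x,y)\mapsto 2y$ defines a valid SL-reduction goes through exactly as in the proof of Proposition~\ref{PROP:Search-to-BBU}.
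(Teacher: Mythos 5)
Your overall strategy matches the paper's: embed the problem in a higher-dimensional basic $\ell_p$-$\BU$ instance whose $\BU$-points are forced, by the sphere constraint and the boundedness of $G$, to have $x$ on (or projecting onto) $S^n_p$, after which the first block certifies $x$ as a BU-point of $F$ and the second block stores $g(I,x)$ for retrieval by an SL map. But there is a genuine gap in the way you dispatch the well-definedness of $F(\pi(x))$, and it is exactly the step where you and the paper part ways.

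You claim that a circuit computing $F$ on the compact set $S^n_p$ extends continuously to a neighborhood of $S^n_p$, and that the cutoff in $\pi(x)=x/\max(\eta,\|x\|_p)$ can be chosen so that $\pi$ sends the ambient sphere into this neighborhood. The first half is fine, but the second half fails. On $S^{n+m}_p$ (with $p<\infty$) the coordinate $\|x\|_p$ ranges over all of $[0,1]$, and even in the $\ell_\infty$ case $\|x\|_\infty$ is unconstrained once $\|y\|_\infty=1$. When $\|x\|_p<\eta$ one has $\pi(x)=x/\eta$, whose norm can be arbitrarily small; no choice of $\eta$ prevents $\pi$ from covering the entire ball $B^n_p$, so $\pi(x)$ lands far from $S^n_p$ and a mere neighborhood extension of $F$ is not enough. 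The paper sidesteps this by first invoking the division-elimination result of Section~\ref{SEC:EliminateDivision}: once $F$ is computed by a $\BasisPlusMinusTimesMaxMin$-circuit (plus $\sqrt[p]{\cdot}$ if needed), it is defined on \emph{all} of $\RR^{n+1}$, and there is nothing to extend. Your proof needs that ingredient, not a compactness neighborhood.

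Two smaller remarks. In part~(i) the projection $\pi$ is unnecessary: since $\|y\|_\infty\leq 1/2<1$ at any BU-point, the $\ell_\infty$ sphere constraint already forces $\|x\|_\infty=1$, so the paper simply uses $F(x)$ directly. Your choice to antisymmetrize the first block as $F(\pi(x))-F(-\pi(x))$ is a valid variant (it makes $H$ odd so BU-points coincide with zeros), but it buys nothing here and adds the well-definedness burden discussed above. On the other hand, your insistence on a dimension-dependent factor in the second block for $p<\infty$ is well taken: with $G(\cdot)\in[-1,1]^m$ one only gets $\|y\|_p\leq m^{1/p}/2$ from the factor $1/2$, so a factor of order $1/m$ is indeed needed to conclude $\|y\|_p<1/2$; the paper's own write-up of part~(ii) glosses over this.
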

\begin{proof}
\textit{(i)} Let $I$ denote an instance of $\Pi$ and let $m=d_I$.
By assumption of $(f,g)$
we may in polynomial time compute a circuit defining a function $F\colon 
S_{\infty}^{n}\rightarrow\RR^{n}$ such that $\Sol(f(I))$ consists of the
$x\in S_{\infty}^{n}$ such that $F(x)=F(-x).$ By the result in Section~\ref{SEC:EliminateDivision}
we may assume that the circuit computing $F$ is division-free, and so we may
extend the domain of $F$ to be $\RR^{n+1}$. Also, we may in polynomial
time compute a circuit defining a function $G\colon \RR^{n+1}\rightarrow
[-1,1]^{m}$ mapping $\Sol(f(I))$ to $\Sol(I)$. Define $H\colon
S_{\infty}^{m+n}\rightarrow\RR^{m+n}$ by $H(x,y) = (F(x),y-\tfrac{1}{2}G(x))$.
We note that $H$ may be 
computed by a circuit over $B$, so it defines a basic $\ell_{\infty}-\BU$
problem $\Lambda$ with $B-$circuits and the same instances as $\Pi$. 
If $(x,y)\in S_{\infty}^{n+m}$ has $H(x,y)=H(-x,-y)$ then $y-\tfrac{1}{2}G(x)
=-y+\tfrac{1}{2}G(x)$, as $G$ is odd. Therefore $||y||_{\infty}=
||G(x)/2||_{\infty}<1$ and so $||x||_{\infty}=1$. Also, the first coordinate shows that 
$F(x)=F(-x).$ This says that $x\in\Sol(f(I))$, and so $G(x)\in \Sol(I)$. As
$2y=G(x)$ we see that $\Pi$ SL-reduces to $\Lambda.$


\textit{(ii)} Again let $I$ denote an instance of $\Pi$ with $m=d_I$. From $f(I)$ we may
in polynomial time compute a circuit computing a map $F\colon S_p^n
\rightarrow\RR^n$ such that $\Sol(f(I))=\{x\in S_p^n\mid F(x)=F(-x)\}$
and a $B-$circuit computing a map $G\colon \RR^{n+1}\rightarrow [-1,1]^{m}$
sending $\Sol(f(I))$ to $\Sol(I)$. Again, we may extend the domain of $F$.
 Define a map $h\colon\RR^{n+1}
\rightarrow \RR^{n+1}$ by $h(x) = x/\max(1/2,||x||_p)$ and $H\colon S_{p}^{m+n}
\rightarrow\RR^{m+n}$ by
\begin{align*}
H(x,y) = (F(h(x)), y-\tfrac{1}{2}G(h(x)))
\end{align*}
In this way, we have defined a basic $\ell_p -\BU$
problem $\Lambda$ with $B\cup \{\sqrt[p]{\cdot}\}-$circuits and the same 
instances as $\Pi$. If $(x,y)\in S_p^{n+m}$ has $H(x,y)=H(-x,-y)$ we find that 
$y = \tfrac{1}{2}G(h(x))$ so $||y||_p\leq 1/2$. This implies that $||x||_p \geq 1/2$,
and so $h(x)=x/||x||_p\in S_p^n$. Also, the first component shows that $F(h(x))=F(h(-x))
=F(-h(x))$ where we use that $h$ is odd. Therefore, $h(x) \in \Sol(f(I))$, 
and so $G(h(x))\in\Sol(I)$. As $2y=G(h(x))$,
we conclude that $\Pi$ SL-reduces to $\Lambda.$ 
\end{proof}

\begin{proposition}
Let $B =\BasisPlusMinusTimesDivMaxMin$. \textit{(i)} A basic $\ell_p -\BU$ 
problem $\Pi$ with $B$-circuits SL-reduces to a basic  $\ell_{\infty}-\BU$ 
problem with $B\cup\{\sqrt[p]{\cdot}\}$-circuits. \textit{(ii)} A basic 
$\ell_{\infty} -\BU$ problem $\Pi$ with $B$-circuits SL-reduces to a basic 
$\ell_{p}-\BU$ problem using $B$-circuits.
\label{PROP:BU-p-to-inf}
\end{proposition}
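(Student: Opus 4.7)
The plan is to mirror the strategy used in the proofs of Propositions~\ref{PROP:BBU-p-to-inf} and~\ref{PROP:BBU-inf-to-p} for the $\BBU$ case. For each of the two directions, I would first construct a direct polynomial time circuit reduction between the two domain variants by composing the defining function with a suitable antipode-preserving normalization map, and then invoke Proposition~\ref{PROP:search-to-BU} to promote the resulting circuit reduction into an SL-reduction.

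For part~(i), given a basic $\ell_p$-$\BU$ instance $I$ with $B$-circuit $F\colon S_p^n \to \RR^n$, I would define the normalization $h(x) = x/\max(\tfrac{1}{2}, \|x\|_p)$, computable by a $B \cup \{\sqrt[p]{\cdot}\}$-circuit, and let $H(x) = F(h(x))$ serve as the candidate function on $S_\infty^n$. On $S_\infty^n$ one has $\|x\|_p \geq \|x\|_\infty = 1 > 1/2$, so $h(x) = x/\|x\|_p \in S_p^n$, which combined with the oddness of $h$ on this set gives $H(x) = H(-x) \iff F(h(x)) = F(-h(x)) \iff h(x) \in \Sol(I)$. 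Since the solution map $y \mapsto h(y)$ is odd and computable by a $B\cup\{\sqrt[p]{\cdot}\}$-circuit, applying Proposition~\ref{PROP:search-to-BU}(i) with the enlarged basis $B\cup\{\sqrt[p]{\cdot}\}$ yields the claimed SL-reduction.

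For part~(ii), the construction is the dual: given a basic $\ell_\infty$-$\BU$ instance with $B$-circuit $F\colon S_\infty^n \to \RR^n$, I would instead use the normalization $\pi(x) = x/\max(\tfrac{1}{2(n+1)}, \|x\|_\infty)$, which already lives in $B$ since no $p$-th root is required. Setting $G(x) = F(\pi(x))$ viewed as a function $S_p^n \to \RR^n$, the key norm inequality $\|x\|_\infty \geq (n+1)^{-1/p} \geq 1/(n+1)$ on $S_p^n$ guarantees $\pi(x) = x/\|x\|_\infty \in S_\infty^n$, so that $G(x) = G(-x) \iff \pi(x) \in \Sol(I)$. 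The solution map $\pi$ is odd and $B$-computable, and Proposition~\ref{PROP:search-to-BU}(ii) upgrades this circuit reduction to the desired SL-reduction.

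The main technical obstacle is the bookkeeping needed to verify that the domain and oddness hypotheses of Proposition~\ref{PROP:search-to-BU} are satisfied after composing with $h$ or $\pi$, and to confirm that the basis of gates used by the composed circuit matches what is claimed. In particular one must check that the cutoff constants $1/2$ and $1/(2(n+1))$ are large enough that the arguments of $F$ always land on the correct sphere, and that no extraneous root gates are introduced in case~(ii). Any residual division gates arising in either construction can afterwards be eliminated as described in Section~\ref{SEC:EliminateDivision}.
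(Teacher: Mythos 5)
Your proposal is correct and follows essentially the same route as the paper: compose the given $\BU$ function with the antipode-preserving normalization $x \mapsto x/\max(c,\|x\|)$ (with $\sqrt[p]{\cdot}$ needed only when normalizing to the $\ell_p$-sphere), verify the resulting map is well-defined on the target sphere, and invoke Proposition~\ref{PROP:search-to-BU} to upgrade the circuit reduction to an SL-reduction. The only differences are cosmetic — your cutoff constant $1/(2(n+1))$ in part~(ii) versus the paper's $1/(n+1)$ (both suffice), and you place the remark about eliminating division gates (needed to extend $F$'s domain beyond the original sphere) at the end rather than up front.
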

\begin{proof}
\textit{(i)} Let $\Pi$ denote a basic $\ell_p -\BU$ problem. Suppose an instance $I$
is defined by some continuous function $F\colon S_p^n\rightarrow\RR^n$. By Section~\ref{SEC:EliminateDivision} we may assume that the circuit computing $F$ is division-free and so extend
$F$ to be defined in all of $\RR^{n+1}$.  Define
a function $g\colon\RR^{n+1}\rightarrow [-1,1]^{n+1}$ by $g(x)=x/\max(1/2,||x||_p)$
and $H\colon S_{\infty}^n\rightarrow\RR^n$ by $H(x)=\overline{F}(g(x))$. Let $f$ denote
the map sending the instance $I$ to the instance $f(I)$ given by $H$ of a basic $\ell_{\infty}
-\BU$ problem $\Gamma.$ One may verify that $(f,g)$ is a reduction satisfying the properties
of Proposition~\ref{PROP:search-to-BU}, so by part~(i) of Proposition~\ref{PROP:search-to-BU} we have that $\Pi$ SL-reduces to the basic
$\ell_{\infty}-\BU$.


\textit{(ii)} Let $\Pi$ denote a basic $\ell_p -\BU$ problem. Suppose an instance $I$
is defined by some continuous function $F\colon S_{\infty}^n\rightarrow\RR^n$. Again,
we may extend $F$. Similarly to the case above, we define
a function $g\colon\RR^{n+1}\rightarrow [-1,1]^{n+1}$ by $g(x)=x/\max(1/(n+1),||x||_p)$
and $H\colon S_{p}^n\rightarrow\RR^n$ by $H(x)=\overline{F}(g(x))$. Let $f$ denote
the map sending the instance $I$ to the instance $f(I)$ given by $H$ of a basic $\ell_{\infty}
-\BU$ problem $\Gamma.$

First, the map $g$ satisfies the condition of Proposition~\ref{PROP:search-to-BU}. If $x\Sol(f(I))$ then it holds that
$x\in S_p^n$, and so $1=||x||_p\leq (n+1)||x||_{\infty}$, implying that $||x||_{\infty}\geq 1/(n+1)$.
From this it follows that $g(x)=x/||x||_{\infty}$ by definition. Furthermore, using that $g$ is odd
we find that $F(g(x))=H(x)=H(-x)=F(g(-x))=F(-g(x))$. We conclude that $g(x)\in\Sol(I)$. In
conclusion, $(f,g)$ is a reduction from $\Pi$ to $\Gamma$ satisfying the properties of Proposition~\ref{PROP:search-to-BU}.
By part~(ii) of Proposition~\ref{PROP:search-to-BU} we conclude that $\Pi$ SL-reduces to a basic $\ell_{\infty}-\BU$ problem.
\end{proof}

\section{Relation between $\ell_p -\BU$ and $\ell_p -\BBU$}

Let $B$ be some finite set of gates containing $\BasisPlusMinusTimesDivMaxMin$.
In this section we study reductions between $\ell_p -\BU$ problems and
$\ell_p -\BBU$ problems. Suppose we are given a basic $\ell_p -\BU$ problem $\Pi$ 
with circuits defined over $B$. In order to show that $\Pi$ reduces to a basic $\ell_p -
\BBU$ problem we follow the proof of Theorem~\ref{THM:Borsuk-Ulam}. Given an instance of $\Pi$ we may in 
polynomial time compute the dimension $n=d_I$ and a circuit over $B$ defining a map 
$F_I\colon S_p^n \rightarrow\RR^{n}$ such that $\Sol(I) = \{x\in S_p^n\mid F_I(x)=
F_I(-x)\}$. Define also the map $\pi\colon B_p ^n\rightarrow S_p^n$ by
\begin{align*}
\pi(x) = 
  \begin{cases}
    (x,(1-||x||_p^p)^{1/p}) & \text{ if } 1\leq p<\infty\\  (x/t, 2\cdot (1-t)) & \text{ if }p=\infty
  \end{cases}
\end{align*}
where $t=\max(1/2,||x||_{\infty})$. Define a map $H\colon B_p^n\rightarrow\RR^n$
by $H(x) = F_I(\pi(x))-F(-\pi(x))$. If $x\in S_p^{n-1}$ then the last coordinate of $\pi(x)$
vanishes and so $\pi(x)=-\pi(-x)$ implying that $H(x)=-H(-x)$, so $H$ is odd on the boundary.
As $H$ is computable by a $(B\cup\{\sqrt[p]{\cdot}\})-$circuit if $p<\infty$ (and $B-$circuit if
$p = \infty$) this defines an $\ell_p -\BBU$ problem $\Gamma$ with the same instances as
$\Pi$. Furthermore, the set of $\BU$-points of $H$ is exactly $\{x\in B_p^n\mid F_I(\pi(x))
= F_I(-\pi(x))\}$, so mapping solutions $x$ of $\Gamma$ to solutions of $\Pi$ amounts to 
computing  $\pi(x)$ which can be done by a circuit over $B\cup\{\sqrt[p]{\cdot}\}$ if $p<
\infty$ (and over $B$ if $p = \infty$). 

However, when $p\neq 1$ these reductions make use of $\sqrt[p]{}$-gates for $p<\infty$
or divison gates for $p = \infty$. We can remedy this by applying Propositions~\ref{PROP:BBU-inf-to-p},~\ref{PROP:BBU-p-to-inf},~and~\ref{PROP:BU-p-to-inf} which
give that we may go back and forth between different domains for $\BBU$ and $\BU$ by SL-reductions.
Specifically, for any $\ell_p -\BU$ problem we may SL-reduce to a $\ell_1-\BU$ problem (that also
uses $\sqrt[p]{}$ gates if $p<\infty$). Then we may apply the above $\{+,\MulC\}$-reduction
from $\ell_1 -\BU$ to $\ell_1 -\BBU$. And from there we may again SL-reduce to an $\ell_p -\BBU$ 
problem. In conclusion we obtain the following result.

\begin{proposition}
Any basic $\ell_p -\BU$ problem $\{+,\MulC\}$-reduces
to a basic $\ell_p -\BBU$ problem.
\end{proposition}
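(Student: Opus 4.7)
The plan is to follow the three-step route outlined in the discussion immediately preceding the proposition. Given a basic $\ell_p$-$\BU$ problem, I would first SL-reduce it to a basic $\ell_1$-$\BU$ problem using Proposition~\ref{PROP:BU-p-to-inf} (composing its two parts through $\ell_\infty$ as needed), then apply a direct piecewise-linear reduction from $\ell_1$-$\BU$ to $\ell_1$-$\BBU$, and finally SL-reduce the resulting $\ell_1$-$\BBU$ problem back to an $\ell_p$-$\BBU$ problem via Propositions~\ref{PROP:BBU-p-to-inf} and~\ref{PROP:BBU-inf-to-p}. Since SL-reductions embed into the $\{+,\MulC\}$-reduction class and this class is closed under composition, the overall reduction lies in the required class.

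The heart of the argument is the middle step. Given the $\ell_1$-$\BU$ function $F\colon S_1^n \to \RR^n$, I would define the lift $\pi\colon B_1^n \to S_1^n$ by $\pi(x) = (x,\,1-\|x\|_1)$ and set $H(x) = F(\pi(x)) - F(-\pi(x))$. The function $H$ is computable by an algebraic circuit of the same class as $F$ post-composed with the piecewise-linear lift $\pi$. On $\partial B_1^n = S_1^{n-1}$ the last coordinate of $\pi(x)$ vanishes, forcing $\pi(-x) = -\pi(x)$ and therefore $H(-x) = -H(x)$, so $H$ is odd on the boundary and defines a basic $\ell_1$-$\BBU$ instance $\Gamma'$. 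The solution mapping $x \mapsto \pi(x)$ sends a zero of $H$ to a BU-point of $F$. The crucial point is that for $p=1$ the lift $\pi$ is expressible in the piecewise-linear basis, avoiding both $\sqrt[p]{\cdot}$ and $\div$ gates that would otherwise be needed for general $p$.

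In steps one and three, the invoked SL-reductions may introduce $\sqrt[p]{\cdot}$ or $\div$ gates into the BU or BBU function of the intermediate problem, but these gates are absorbed inside the function specification of the basic problem and never appear in the solution-mapping circuit of the reduction. Consequently the three reductions compose to yield the desired $\{+,\MulC\}$-reduction from the original $\ell_p$-$\BU$ problem to an $\ell_p$-$\BBU$ problem.

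The principal obstacle is bookkeeping: at each composition I must check that the auxiliary gates introduced by the domain-change propositions remain strictly inside the BU/BBU function of the basic problem and do not leak into the solution mapping, and that the oddness hypotheses required by Propositions~\ref{PROP:search-to-BU} and~\ref{PROP:Search-to-BBU} (and implicitly by the $\ell_p \leftrightarrow \ell_\infty$ propositions) are maintained all the way along the chain. For the explicit lifts used above these conditions are straightforward, but the verification needs to be carried out carefully to ensure the composed reduction genuinely has its solution mapping in the piecewise-linear basis.
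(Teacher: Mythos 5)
Your proposal matches the paper's own argument: the same three-step route (SL-reduce to $\ell_1$-$\BU$ via the domain-change propositions, apply the direct lift $\pi(x)=(x,1-\|x\|_1)$ with $H(x)=F(\pi(x))-F(-\pi(x))$ to pass to $\ell_1$-$\BBU$, then SL-reduce back to $\ell_p$-$\BBU$), with the same oddness check on the boundary and the same observation that the $\sqrt[p]{\cdot}$ and $\div$ gates introduced by domain changes stay inside the circuit defining the basic problem rather than the solution mapping. One small shared imprecision worth noting: computing $\|x\|_1$ in the lift requires absolute values and hence $\max$ gates, so the middle step is really a PL-reduction ($\{+,\MulC,\max\}$), which is what both you and the paper actually use downstream.
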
 

Note that the reductions of this proposition are a special case of
PL-reductions. Because these are polynomially continuous, we
automatically also get the following result.

\begin{proposition}
Any basic $\ell_p -\BU_a$ problem polynomial time reduces
to a basic $\ell_p -\BBU_a$ problem.
\label{PROP:BasicBUaToBasicBBUa}
\end{proposition}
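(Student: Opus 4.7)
The plan is to lift the PL-reduction produced by the previous proposition to the level of strong approximation problems. Let $\Pi$ be a basic $\ell_p$-$\BU$ problem and let $(f,g)$ be the $\{+,\MulC\}$-reduction given by the previous proposition from $\Pi$ to a basic $\ell_p$-$\BBU$ problem $\Gamma$. Given an instance $(I,k)$ of $\Pi_a$, with target accuracy $\eps=2^{-k}$, the strategy is to (i) compute $f(I)$ in polynomial time, (ii) determine a sufficiently small accuracy parameter $\delta>0$ for $\Gamma_a$ so that any $\delta$-approximate solution $y$ of $\Gamma$ at instance $f(I)$ yields, via $g(I,\cdot)$, an $\eps$-approximate solution of $\Pi$ at $I$, (iii) query the $\Gamma_a$ oracle on $(f(I),k')$ where $2^{-k'}\leq \delta$ to obtain such a $y$, and finally (iv) output a rational vector obtained by rounding $g(I,y)$ to the required precision.

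The key enabling property is that the solution mapping $g$ is polynomially continuous and polynomial-time computable. Concretely, the reduction $g(I,\cdot)$ is computed by a $\{+,\MulC\}$-circuit $C_I$ whose size and coefficient bitsize are polynomial in $\abs{I}$. Because $C_I$ has only addition and multiplication-by-constant gates, its output is a piecewise linear function whose Lipschitz constant with respect to $\Norm{\cdot}_\infty$ is bounded by $2^{\poly(\abs{I})}$. Thus one may take $\delta=2^{-k-\poly(\abs{I})}$, so that $\Norm{y-y^*}_\infty\leq \delta$ implies $\Norm{g(I,y)-g(I,y^*)}_\infty\leq \eps/2$, for any exact solution $y^*\in\Sol(f(I))$. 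Setting $k'$ to be the smallest integer with $2^{-k'}\leq \delta$ yields $k'$ polynomial in $k$ and $\abs{I}$.

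The algorithm then calls the $\Gamma_a$-oracle on $(f(I),k')$ and obtains some $y\in\Sol(f(I),k')$, i.e.\ a rational vector within $\ell_\infty$-distance $2^{-k'}\leq \delta$ of some $y^*\in\Sol(f(I))$. Compute $g(I,y)$ exactly by evaluating $C_I$; the result is a rational vector whose bitsize is polynomial in $k$ and $\abs{I}$. By the choice of $\delta$ this vector is at $\ell_\infty$-distance at most $\eps/2$ from $g(I,y^*)\in\Sol(I)$. Finally, rounding each coordinate to the grid of resolution $2^{-k-1}$ yields a member of $\Sol(I,k)$, completing the reduction. All steps run in polynomial time in $\abs{I}$ and $k$.

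The only point requiring care is the explicit polynomial continuity bound for $g$, but this is immediate from the restricted gate set $\{+,\MulC\}$: since no repeated multiplication of variable quantities is possible, the coordinates of $g(I,y)$ are affine functions of $y$ with coefficients whose magnitude is at most the product of the absolute values of the scalar constants occurring along any input-to-output path in $C_I$, hence bounded by $2^{\poly(\abs{I})}$. This is exactly why PL-reductions, as observed in Section~\ref{SEC:Preliminaries}, automatically lift to polynomial-time reductions between the corresponding strong approximation problems, and this yields the proposition.
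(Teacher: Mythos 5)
Your proof is correct and takes essentially the same route as the paper: the paper derives this proposition immediately from the preceding $\{+,\MulC\}$-reduction by noting that such reductions are a special case of PL-reductions, which are polynomially continuous, and that polynomial continuity together with polynomial-time computability of the solution map automatically lifts a reduction to a polynomial-time reduction between the corresponding strong approximation problems (an observation already made in Section~\ref{SEC:Preliminaries}). You have simply unfolded that one-line argument into its explicit $\delta$--$\eps$ and rounding bookkeeping, which is accurate.
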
 

For reductions in the other direction, consider an instance $H\colon B_{\infty}^n\rightarrow\RR^n$
of a basic $\ell_{\infty}-\BBU$ problem. Given this instance we define an instance of a basic 
$\ell_{\infty}-\BU$ problem given by $F\colon S_{\infty}^n\rightarrow\RR^n$ where
\begin{align*}
F(x) = \Sel_2(-H(-\pi(x)),H(\pi(x)),x_{n+1})
\end{align*}
and $\pi\colon\RR^{n+1}\rightarrow\RR^n$ is the projection $\pi(x_1,\dots, x_{n+1})
=(x_1,\dots, x_n)$. Now suppose that $x\in S_{\infty}^n$ satisfies $F(x)=F(-x)$. If
$x_{n+1}=1$ this implies that
\begin{align*}
H(\pi(x))=F(x)=F(-x)=-H(-\pi(-x))=-H(\pi(x))
\end{align*}
showing that $H(\pi(x))=0$, so $\pi(x)$ is a solution to the original problem. Similarly, if 
$x_{n+1}=-1$, then $H(-\pi(x))=0$, so $-\pi(x)$ is a solution to the original problem. In the case
where $|x_{n+1}|<1$ we have that $||\pi(x)||_{\infty}=1$ and so $H(\pi(x))=-H(-\pi(x))$,
because $H$ is odd on the boundary. By definition of the selection-function $\Sel$ this implies that
\begin{align*}
F(x)=\Sel_2(-H(-\pi(x)),H(\pi(x),x_{n+1})=\Sel_2(H(\pi(x)),H(\pi(x)),x_{n+1})=H(\pi(x))
\end{align*}
and similarly $F(-x)=-H(\pi(x))$. The equality $F(x)=F(-x)$ then implies that $H(\pi(x))
=H(-\pi(x))=0$, so both $\pi(x)$ and $-\pi(x)$ is a solution to the original instance in this case\footnote{We are grateful to Alexandros Hollender for noting that the reduction is possible without introducing approximation error.}. 
In conclusion, if we could recover the sign of $x_{n+1}$ then we could define a solution
map sending $x$ to $\sgn(x_{n+1})\pi(x)$, but we do not allow this. However, in the
approximate version, we may do this.

\begin{proposition}
Any basic $\ell_p-\BBUA$ problem polynomial time reduces to a basic basic $\ell_p -\BUA$ problem. 
\label{PROP:BasicBBUaToBasicBBUa}
\end{proposition}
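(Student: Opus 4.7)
The plan is to promote the exact reduction $f$ sketched in the paragraph preceding the statement into a polynomial-time reduction of discrete search problems, exploiting the fact that in the approximate setting we are allowed to compute $\sgn(x_{n+1})$ from its bit-representation. I will first handle $p=\infty$. The instance mapping is as in the preceding paragraph: given an instance $I$ of $\ell_\infty$-$\BBU$ with function $H\colon B^n_\infty\to\RR^n$, produce the $\ell_\infty$-$\BU$ instance $f(I)$ with function $F(x)=\Sel_2(-H(-\pi(x)),H(\pi(x)),x_{n+1})$ on $S^n_\infty$, where $\pi$ is the orthogonal projection onto the first $n$ coordinates. For the solution mapping, given a bit-string $x\in\QQ^{n+1}$ promised to be a strong $\eps$-approximation to some genuine BU-point $x^*\in S^n_\infty$ of $F$, I will output $y=\sgn(x_{n+1})\cdot\pi(x)$, with an arbitrary convention for $\sgn(0)$. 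This map is clearly computable in polynomial time on binary inputs.

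The correctness argument will proceed by case analysis on whether $\sgn(x_{n+1})$ agrees with $\sgn(x^*_{n+1})$. If the signs agree, the exact analysis from the preceding paragraph already shows that $\sgn(x^*_{n+1})\pi(x^*)$ is a BBU solution, and since $\pi$ is $1$-Lipschitz in $\|\cdot\|_\infty$, we get $\|y-\sgn(x^*_{n+1})\pi(x^*)\|_\infty\leq\|x-x^*\|_\infty\leq\eps$. If the signs disagree, then $x_{n+1}$ and $x^*_{n+1}$ have opposite signs, so $|x^*_{n+1}|\leq|x^*_{n+1}-x_{n+1}|\leq\eps$. Assuming $\eps<1$ (which we may), this forces $|x^*_{n+1}|<1$ and hence $\|\pi(x^*)\|_\infty=1$, placing $\pi(x^*)$ on the boundary $\boundary B^n_\infty$. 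The oddness of $H$ on the boundary then gives $H(-\pi(x^*))=-H(\pi(x^*))=0$, so both $\pi(x^*)$ and $-\pi(x^*)$ are BBU solutions, and in particular $\sgn(x_{n+1})\pi(x^*)$ is one that lies within $\eps$ of $y$.

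For general $p$ I will compose three reductions: first SL-reduce the input $\ell_p$-$\BBU$ instance to an $\ell_\infty$-$\BBU$ instance via Proposition~\ref{PROP:BBU-p-to-inf}, then apply the reduction constructed above, and finally SL-reduce the resulting $\ell_\infty$-$\BU$ instance back into an $\ell_p$-$\BU$ instance via Proposition~\ref{PROP:BU-p-to-inf}(ii). SL-reductions are a special case of PL-reductions, hence polynomially continuous, so each step induces a polynomial-time reduction on the discrete approximate versions; composing the three yields the claim.

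The only real conceptual point, and the thing that makes the previous paragraph's exact reduction fail, is that $\sgn(x_{n+1})$ cannot be recovered by a continuous solution map. This vanishes in the approximate setting because $\sgn$ is trivially computable on bit-strings; the only residual worry is that our sign guess might disagree with the true $\sgn(x^*_{n+1})$, but I expect this worry to be answered exactly by the boundary antipode pairing of $\BBU$ solutions, since a disagreement forces $x^*$ to be within $\eps$ of the equator $\{x_{n+1}=0\}$ and hence $\pi(x^*)$ onto the boundary where both signs work. This is the only delicate step, and everything else is routine.
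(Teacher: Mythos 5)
Your proposal is correct and follows essentially the same approach as the paper: the same instance map $F(x)=\Sel_2(-H(-\pi(x)),H(\pi(x)),x_{n+1})$, the same discontinuous solution map $x\mapsto\sgn(x_{n+1})\pi(x)$, and the same use of the boundary antipode pairing of $\BBU$ solutions to neutralize a wrong sign guess. The paper splits cases on $\abs{x_{n+1}}\geq 1/2$ versus $\abs{x_{n+1}}<1/2$ (setting $\eps'=\min(\eps,1/2)$) rather than on sign agreement as you do, but these splits are interchangeable and lead to the same two subarguments; your handling of the general $p$ via the domain-change propositions is also what the paper does, just stated more explicitly.
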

\begin{proof}
After changing domain we may assume that $p=\infty$. Given an instance $(H,\epsilon)$ of a 
basic $\ell_{\infty}-\BBUA$ problem we apply the above construction and the map $f$ outputs
the instance $(F,\epsilon')$ of a basic $\ell_{\infty}-\BUA$ problem where $\epsilon' = \min
(\epsilon,1/2).$ Now suppose that $x$ is a solution to the problem $(F,\epsilon')$ . This means
there exists some $x^{*}$ with $||x-x^*||_{\infty}\leq\epsilon'$ and $F(x^*)=F(-x^*).$
We now claim that we may map back the solution $x$ of $(F,\epsilon')$ to a solution of $(H,\epsilon)$
by the map $g(x)=\sgn(x_{n+1})\pi(x)$. 

If $|x_{n+1}|\geq 1/2$ then we have that $\sgn(x_{n+1})=\sgn(x_{n+1}^{*})$ as $\epsilon'
\leq 1/2$. Therefore
\begin{align*}
||\sgn(x)\pi(x)-\sgn(x^{*})\pi(x^*)||_{\infty}=||\pi(x)-\pi(x^{*})||_{\infty}\leq ||x-x^{*}||\leq\epsilon'\leq\epsilon
\end{align*}
Also $\sgn(x^{*})\pi(x^{*})$ is a zero of $H$ by the discussion above the proposition. In the case
where $|x_{n+1}|< 1/2$ we have that $|x_{n+1}^*|<1$ and so both of $\pm\pi(x^{*})$ is a zero
of $H$. As $\sgn(x)\pi(x)$ is $\epsilon$-close to $-\pi(x^{*})$ or $\pi(x^{*})$, we conclude
that $g(x)$ is a solution to the problem $(H,\epsilon)$.  
\end{proof} 

Combining Proposition~\ref{PROP:BasicBUaToBasicBBUa} and Proposition~\ref{PROP:BasicBBUaToBasicBBUa} we obtain the following result.

\begin{theorem}
$\BUA=\BBUA$
\end{theorem}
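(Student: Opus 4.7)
The plan is to derive the equality directly by combining Proposition~\ref{PROP:BasicBUaToBasicBBUa} and Proposition~\ref{PROP:BasicBBUaToBasicBBUa}. Both classes are defined as the closure under polynomial time reductions of the approximation problems associated to their respective basic search problems, so it suffices to show that each basic approximation problem of one class reduces in polynomial time to a basic approximation problem of the other class, and then invoke transitivity.

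For the inclusion $\BUA \subseteq \BBUA$: any problem in $\BUA$ polynomial time reduces to $\Pi_a$ for some $\Pi \in \BU$, which by definition is PL-reducible to a basic $\ell_\infty$-$\BU$ problem $\Pi'$. Because PL-reductions are polynomially continuous and polynomial time computable, this reduction lifts to the approximation setting, so $\Pi_a$ polynomial time reduces to $\Pi'_a$. By Proposition~\ref{PROP:BasicBUaToBasicBBUa}, $\Pi'_a$ in turn polynomial time reduces to a basic $\ell_p$-$\BBU_a$ problem, which lies in $\BBUA$. Transitivity of polynomial time reductions yields $\BUA \subseteq \BBUA$. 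The reverse inclusion $\BBUA \subseteq \BUA$ is entirely symmetric, substituting Proposition~\ref{PROP:BasicBBUaToBasicBBUa} for Proposition~\ref{PROP:BasicBUaToBasicBBUa}.

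The main conceptual work has already been absorbed into the two cited propositions, in particular the asymmetric direction from $\BBU$ to $\BU$ where the ``lifted'' Borsuk-Ulam instance on the sphere needs to recover, from a point $x \in S^n_\infty$, which hemisphere the preimage lives in via $\sgn(x_{n+1})$. As noted just before Proposition~\ref{PROP:BasicBBUaToBasicBBUa}, this sign is not recoverable in the exact setting, which is why no corresponding inclusion $\BBU \subseteq \BU$ is claimed; but in the approximation setting the problem disappears, because zeros of the ball-formulation living on the boundary come in antipodal pairs, so mapping an equator-close approximate solution to either hemisphere still lands close to a genuine solution. Once this is packaged into the two propositions, the statement $\BUA = \BBUA$ follows without further work, and I would therefore present the proof as a short one-line composition of the two reductions together with a remark noting transitivity and polynomial continuity of PL-reductions as the glue.
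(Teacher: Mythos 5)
Your proposal matches the paper's proof, which is exactly the one-line composition of Proposition~\ref{PROP:BasicBUaToBasicBBUa} and Proposition~\ref{PROP:BasicBBUaToBasicBBUa}. Your extra remarks about transitivity and the polynomial continuity of PL-reductions (which is what lets the closure under PL-reductions at the exact level descend to polynomial-time reductions at the approximation level) are correct and make explicit what the paper leaves implicit.
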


\section{Consensus Halving}
\label{SEC:BBU-to-CH}
In this section we present the proof of our main result
Theorem~\ref{THM:CH-BUa-complete}. This result enables an additional
structural result, given in Section~\ref{SEC:RemoveRoots} about the
class of strong approximation problems $\BUA=\BBUA$, showing that the
class is unchanged even when allowing root operations as basic
operations.

Suppose we are given a basic $\ell_{\infty}-\BBUA$ problem $\Pi_a$ with circuits
over the basis $\{+,-,\ast,\max,\min\}$. Let $(I,k)$ denote an instance of $\Pi_a$ 
and put $\eps = 2^{-k}$. We may in polynomial time compute a circuit $C$ defining
a function $F\colon B_{\infty}^{n}\rightarrow \RR^n$ that is odd on the boundary $S_{\infty}^{n-1}$
such that $\Sol(I) = \{x\in B_{\infty}^{n}\mid F(x)=0\}$. We now provide a 
reduction from $\Pi_a$ to a $\CHA$-problem. In the reduction we will make use
of the "almost implies near" paradigm.

\begin{lemma}\label{AIN}
Let $F\colon B_{\infty}^n\rightarrow\RR^n$ be a continuous map. For any $\eps > 0$
there is a $\delta > 0$ such that if $||F(x)||_{\infty} \leq\delta$ then there is an
$x^{*}\in B_{\infty}^n$ such that $||x-x^{*}||_{\infty}\leq\eps$ and $F(x^{*})=0$.
\end{lemma}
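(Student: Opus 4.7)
The plan is to prove this by a standard compactness argument; it is a textbook instance of the "almost implies near" principle applied to the compact domain $B_\infty^n$. I would argue by contradiction, negating the quantifier structure to produce a sequence of witnesses and then extracting a convergent subsequence whose limit contradicts the assumed failure.

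More concretely, suppose the conclusion fails. Then there exists $\eps>0$ such that for every integer $k\geq 1$ (taking $\delta=1/k$) one can find a point $x_k\in B_\infty^n$ with $\Norm{F(x_k)}_\infty\leq 1/k$ but with $\Norm{x_k-x^*}_\infty>\eps$ for every zero $x^*$ of $F$. Since $B_\infty^n$ is compact in the $\ell_\infty$-norm, the sequence $(x_k)$ has a subsequence $(x_{k_j})$ converging to some point $x^\infty\in B_\infty^n$. By continuity of $F$, one has $F(x^\infty)=\lim_{j\to\infty}F(x_{k_j})=0$, so $x^\infty$ is a zero of $F$. But then $\Norm{x_{k_j}-x^\infty}_\infty\to 0$, which contradicts the defining property $\Norm{x_{k_j}-x^\infty}_\infty>\eps$ of the $x_{k_j}$.

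The only minor subtlety is the degenerate case where $F$ has no zero in $B_\infty^n$. In that case the continuous function $\Norm{F(\cdot)}_\infty$ attains a strictly positive minimum $m>0$ on the compact set $B_\infty^n$, and choosing any $\delta<m$ makes the hypothesis $\Norm{F(x)}_\infty\leq\delta$ vacuously false, so the implication is trivially true. (The compactness argument above actually handles this case uniformly, since the extracted limit $x^\infty$ would be a zero of $F$, ruling out the possibility that there are none.)

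There is no real obstacle here; the statement is a pure topology fact about continuous maps on a compact metric space and the proof is a two-line compactness argument. Note that this lemma is purely existential and gives no effective bound on $\delta$ as a function of $\eps$; quantitative bounds suitable for the complexity-theoretic reduction in the sequel will presumably have to be obtained separately, e.g.\ via the quantifier-elimination machinery reviewed in Section~\ref{SEC:ToolsRAG}.
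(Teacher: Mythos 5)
Your proof is correct and is essentially the same compactness/Bolzano--Weierstrass argument by contradiction as the paper's proof. The extra remarks about the no-zero degenerate case and the ineffectiveness of the bound (with the pointer to the quantifier-elimination machinery of Section~\ref{SEC:ToolsRAG}) are accurate but do not change the substance.
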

\begin{proof}
Let $F$ and $\eps>0$ be given. Suppose the claim is false. Then for
any $n\in\NN$ there is an $x_n$ such that $||F(x)||_n\leq 1/n$ and if $x^{*}\in B_{\infty}^n$
has $||x_n-x^{*}||_{\infty}\leq \eps$ then $F(x^{*})\neq 0$.  By compactness the 
Bolzano-Weierstrass theorem implies the existence of a subsequence $\{x_{n_i}\}$ converging 
to some $x^{*}\in B_{\infty}^n$. By continuity of $F$ and $||\cdot||_{\infty}$ we get that
$||F(x^{*})||_{\infty}=\lim_{i\to\infty}||F(x_{n_i})||_{\infty}= 0$,
showing that $F(x^{*}) = 0$. However, for sufficiently large $i\in\NN$ it holds that
$||x_{n_i}-x^{*}||\leq\eps$ contradicting the choice of the $x_n$. 
\end{proof}

This lemma says that for any $\eps > 0$, if $||F(x)||_{\infty}$ is sufficiently close to being zero, then $x$ is $\eps$-close
to a real zero of $F$. When $F$ is computed by an algebraic circuit of polynomial size, it follows by the results in Section~\ref{SEC:ToolsRAG} that there exists some fixed polynomial $q$ with 
integer coefficients such that the above lemma holds true for some $\delta\geq (\eps)^{2^{q(|I|)}}.$
The lemma then holds true for $\delta = (\eps)^{2^{q(|I|)}}$, and we may construct
this number using a circuit of polynomial size by repeatedly squaring the number $\eps$ exactly
$q(|I|)$ times. This number will be used by the feedback agents in our $\CHA$ instance
in order to ensure that any solution gives a solution to the $\ell_{\infty} -\BBUA$ instance.

\subsection{Overview of the Reduction} 

\paragraph{Overview.}

As in previous works, we describe a consensus halving instance on an
interval $A=[0,M]$, where $M$ is bounded by a polynomial in $\abs{I}$,
rather than the interval~$[0,1]$. This instance may then be translated
to an instance on the interval~$[0,1]$ by simple
scaling. Like~\cite{EC:Filos-RatsikasHSZ20}, in the leftmost end of
the instance we place the \textit{Coordinate-Encoding} region
consisting of $n$ intervals. In a solution $S$, these intervals will
encode a value $x\in [-1,1]^n$. A \textit{circuit simulator} $C$ will
simulate the circuit of $F$ on this value $x$. The circuit simulators
will consist of a number of agents each implementing one gate of the
circuit. However, such a circuit simulator may fail in simulating $F$
properly, so we will use a polynomial number of circuit simulators
$C_1,\dots, C_{p(n)}$. Each of these circuit simulators will output
$n$ values $[C_j(x)]_1,\dots, [C_j(x)]_n$ into intervals
$I_{1j}, \dots, I_{nj}$ immediately after the simulation. Finally, we
introduce the so-called \textit{feedback agents} $f_1,\dots, f_n$. The
agent $f_i$ will have some very thin \textit{Dirac blocks} centered in
each of the intervals $I_{ij}$ where $j\in [p(n)].$ These agents will
ensure that if $z$ is an exact solution to the CH instance, then the
encoded value $x$ satisfies that $||F(x)||_{\infty}$ is sufficiently
small that we may conclude that $x$ is $\eps$-close to a zero $x^{*}$
of $F$.

\paragraph{Label Encoding.} 

For a unit interval $I$ we let $I^{\pm}$ denote the subsets of $I$ assigned the
corresponding label. We define the \textit{label encoding} of $I$ to be a value in
$[-1,1]$ given by the formula $v_{l}(I):=\lambda(I^{+})-\lambda(I^{-})$, where
$\lambda$ denotes the Lebesgue measure on the real line $\RR$. This makes sense
as $I^{\pm}$ is measurable, because they are the union of a finite number of intervals.

\paragraph{Coordinate-Encoding Region.}

The interval $[0,n]$ is called the \textit{Coordinate-Encoding} region.
For every $i\in [n],$ the subinterval $[i-1,i]$ of the Coordinate-Encoding region 
encodes a value $x_i := v_l([i-1,i])$ via the label encoding. 

\paragraph{Position Encoding.} 

For an an interval $I$ which contains only a single cut, thus dividing $I$ into two
subintervals $I=I_a\cup I_b,$ we define the \textit{position encoding} of $I$ to
be the value $v_p(I):=\lambda(I_1)-\lambda(I_2)$. We note that $v_p(I)=v_l(I)$
if the labeling sequence is $-/+$, and $v_p(I)=-v_l(I)$ in the case the labeling
sequence is $+/-$. 

\paragraph{From Label to Position.} 

Before a circuit simulator there is a \textit{sign detection} interval $I_s$
which detects the labeling sequence. Unless it contains a stray cut, this interval
will encode a sign $s=\pm 1$ (to be precise $1$ if the label is $+$ and $-1$ is
the label is $-$). By placing agents that flip the label as indicated below, we may now obtain
position encodings of the values $sx_1,\dots, sx_n$. These values will be read-in
as inputs to the subsequent circuit simulator.

\begin{center}
\begin{tikzpicture}[scale = 2.5]
\draw (1,0) -- (5,0);
\filldraw [black] (1,0) circle (0.5pt) node[anchor = south east]{};
\filldraw [black] (1.25,0) circle (0.5pt) node[anchor = north]{};
\filldraw [black] (1.5,0) circle (0.5pt) node[anchor = north]{};
\draw (1.75,0) node[anchor = north]{$\cdots$};
\filldraw [black] (2,0) circle (0.5pt) node[anchor = north]{};
\filldraw [black] (2.25,0) circle (0.5pt) node[anchor = south]{};

\draw (1.125,0) -- (1.125,0) node[anchor = north]{$x_1$};
\draw (1.375,0) -- (1.375,0) node[anchor = north]{$x_2$};
\draw (2.125,0) -- (2.125,0) node[anchor = north]{$x_n$};

\draw (2.55,0.15) -- (2.55,0.15) node[anchor = north]{$\cdots$};

\filldraw [black] (2.85,0) circle (0.5pt) node[anchor = south]{};
\filldraw [black] (3.1,0) circle (0.5pt) node[anchor = south]{};
\draw (2.975,0) -- (2.975,0) node[anchor = north]{$s$};

\draw [black] (3.2,0) rectangle (3.3,0.2);
\draw [dashed] (3.25,-0.1) -- (3.25,0.3);

\filldraw [black] (3.5,0) circle (0.5pt) node[anchor = south]{};
\filldraw [black] (3.75,0) circle (0.5pt) node[anchor = south]{};

\draw [black] (3.975,0) rectangle (4.075,0.2);
\draw [dashed] (4.025,-0.1) -- (4.025,0.3);

\filldraw [black] (4.3,0) circle (0.5pt) node[anchor = south]{};
\filldraw [black] (4.55,0) circle (0.5pt) node[anchor = south]{};

\draw [blue] (3.45,0) rectangle (3.5,0.25);
\draw [black] (3.5,0) rectangle (3.75,0.2);
\draw [red] (3.75,0) rectangle (3.8,0.25);


\draw [black] (1,0) rectangle (1.25,0.2);

\draw (4.8,0.05) -- (4.8,0.05) node[anchor = south]{$\cdots$};

\draw (3.625,0) -- (3.625,0) node[anchor = north]{$s x_1$};
\draw (4.425,0) -- (4.425,0) node[anchor = north]{$s x_2$};
\end{tikzpicture} 
\end{center}

\paragraph{Circuit Simulators.} 

As mentioned above, the circuit simulator $C_j$ will read-in the values
$s_j x_1,\dots, s_j x_n$ and simulate the circuit computing $F$ on this
input. They then output their values into $n$ intervals immediately
after the simulation.

\paragraph{Feedback Agents.} 
By the discussion after the proof of Lemma~\ref{AIN} we may by repeated squaring construct a circuit
of polynomial size in $|I|$ computing a tiny number $\delta>0$
such that if $||F(x)||_{\infty}\leq\delta$ then $x$ is $(\eps /2)$-close to a
zero of $F$. Now fix $i\in [n]$ and let $c_{ij}$ denote the centre of the feedback
interval $I_{ij}$ outputs the value $[C_j(s_j\cdot x)]_i$. We then
define the $i$th feedback agent to have constant density $1/\delta$ in the
intervals $[c_{ij}-\delta/2,c_{ij}+\delta/2].$

The reason for having the feedback agents have these very narrow
Dirac blocks is that if $F_i(x)>\delta$ for some $i$, then in any
of the "uncorrupted" circuits (i.e. circuits outputting the correct
values) all the density of the $i$th agent will contribute to the
same label. Moreover, we will show using the boundary condition
of $F$ that the contribution is to the same label in all the
uncorrupted circuit simulators. This will contradict that the
feedback agents should value $I^{+}$ and $I^{-}$ equally. That
is the feedback agents ensure that $||F(x)||_{\infty} \leq\delta$ if $x$ is the value encoded by
an exact solution to the consensus halving instance we construct.

\paragraph{Stray Cuts.} Any of the agents implementing one of the
gates in a circuit simulator will force a cut to be placed in an
interval in that same circuit simulator. The only agents whose cuts
we have no control over are the $n$ feedback agents.The expectation
is that these agents should make cuts in the Coordinate-Encoding region
that flip the label.
If they do not do this we will call it a \textit{stray cut}. If a circuit
simulator contains a stray cut, we will say nothing about its value. 

\begin{observation}
If it is not the case that every unit interval encoding a coordinate $x_i$ in the Coordinate-Encoding
region contains a cut that flips the label, then the encoded point $x\in B_{\infty}^n$ will lie on the
boundary $S_{\infty}^n$.  With this in mind we may ensure that $x\in S_{\infty}^n$ or
$s_1=s_2=\cdots =s_{p(n)}=\pm 1$ where the sign is the same as the label of the first interval. 
This can be done by, if necessary, placing one single-block agent after the Coordinate-Encoding region and each
of the circuit simulators (if placing such an agent is necessary depends on, respectively, the number of variables
$n$ and the size of the circuits). 
\label{OBS:SignFlips}
\end{observation}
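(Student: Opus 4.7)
The first claim is a direct consequence of the label encoding. If $[i-1,i]$ contains no label-flipping cut, then the labels on either side of every cut in $[i-1,i]$ agree, so the label is constant throughout $[i-1,i]$; hence either $I_i^+ = [i-1,i]$ or $I_i^- = [i-1,i]$, giving $x_i = v_l([i-1,i]) = \pm 1$. Therefore $\norm{x}_\infty = 1$ and $x$ lies on the boundary $\boundary B^n_\infty$.

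For the second claim, the plan is to count cuts. A consensus halving instance with $N$ agents admits at most $N$ cuts, and in our reduction these split into (a)~cuts forced by gate agents, each confined to a sub-interval of its own circuit simulator, (b)~the $n$ feedback cuts, which are unconstrained, and (c)~cuts forced by auxiliary single-block agents, each confined to the narrow support block of its agent. Suppose $x \notin \boundary B^n_\infty$: by the first claim every coordinate interval contains a label-flipping cut, and since gate-agent and single-block-agent cuts cannot lie in the coordinate-encoding region, these $n$ cuts are precisely the feedback cuts, leaving no stray cuts anywhere. Now the label is constant between consecutive cuts and flips across each cut, and since sign-detection intervals contain no cuts, $s_j$ equals the label at the right endpoint of the region immediately preceding the $j$-th sign-detection interval. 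The label at the right end of the coordinate-encoding region agrees with the label at its left end iff the number of cuts inside has even parity; placing (when $n$ is odd) a single-block agent just to the right of the coordinate-encoding region adds exactly one cut inside its block, flipping the parity so that $s_1$ matches the sign of the first interval's label. Inductively, the same parity-adjustment inside each circuit simulator $C_j$ gives $s_{j+1} = s_j$ for all $j$.

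The main obstacle is verifying that each single-block agent forces precisely one cut inside its block. This is arranged by placing the block in a buffer region disjoint from other agents' supports and giving it dominant mass there: the consensus-halving condition for this agent can then be satisfied only by a single cut interior to the block. The analogous count of gate-agent cuts per circuit simulator is fixed and known at reduction time, so the parity test determining whether to insert a single-block agent after $C_j$ is a polynomial-time computation performed during the reduction, and the resulting instance has the claimed property.
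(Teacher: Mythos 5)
The paper states this as an Observation with no accompanying proof, so there is no in-paper argument to compare against; your proposal supplies the justification. Your argument is correct and follows the path the construction was designed to support.

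For the first claim your argument is exactly right: if no cut in $[i-1,i]$ flips the label, the label is constant there, so $\abs{x_i}=1$ and hence $x\in\boundary B^n_\infty$. For the second claim, the cut-counting scheme (forced gate cuts and single-block cuts confined to disjoint sub-regions, plus exactly $n$ free cuts) correctly yields that when $x$ is strictly interior, the $n$ free cuts all lie in the coordinate-encoding region and there are no stray cuts, and your parity bookkeeping on the number of cuts between successive sign-detection intervals correctly explains when a single-block agent is needed and how the choice is computed at reduction time.

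One step you assert without justification deserves a remark: you write that the label "flips across each cut," but in a general consensus-halving solution $x\in S^N_1$, adjacent intervals may carry the same sign, so a cut need not be a label flip. In the coordinate-encoding region this is saved by your own counting: exactly one cut per unit interval, and that cut must flip (otherwise $\abs{x_i}=1$). Inside a circuit simulator it is a property of the gate agent constructions (the blue/red barrier blocks and the balance equation force the output cut to sit strictly inside the output block and to separate $+$ from $-$), not an automatic fact, and ideally it should be stated and checked gate-by-gate rather than assumed globally. Similarly, "giving it dominant mass" for the single-block agent should be replaced by the cleaner statement that the agent's support is entirely inside its block and disjoint from every other agent's support, so a label change — hence a cut — inside the block is forced, and the global count of at most $N$ cuts prevents a second one. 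These are the same points the paper leaves implicit, so your proof is at the same level of rigor as the surrounding text; tightening them would strengthen the write-up but does not change the conclusion.
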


\subsection{Construction of Gates}
In this section we describe how to construct Consensus-Halving agents implementing
the required gates $\{+,-,\ast,\max,\min\}$. First, we show that we may transform the
circuit such that all gates only take values in the interval $[-1,1]$ on input
from $B_{\infty}^n$. 

\paragraph{Transforming the Circuit.}

By propagating every gate to the top of the circuit we may assume that
the circuit is layered. Let $C'$ denote the resulting circuit. By repeated squaring we may  
maintain a gate with value $1/2^{2^d}$ in the $d$th layer. Suppose $g=\alpha(g_1,g_2)$
is a gate with inputs $g_1,g_2$ in layer $d$. We modify the gates as follows: if $\alpha\in
\{+,-,\max,\min\}$ then we multiply $g_i$ by $1/2^{2^d}$ before applying $\alpha$;
if $\alpha =\ast,$ then we multiply the input by $1$ before applying $\alpha.$ Finally,
we transform $C'$ into the circuit $C''$ as follows: on input $x$, the circuit $C''$
multiplies the input by $1/2$ and then evaluates $C'$ on input $x/2$. Inductively, one may
show that if $g$ is a gate in layer $d$ in the circuit $C'$, then the corresponding gate in
in the circuit $C''$ has value $g/2^{2^d}$. As all the gates are among $\{+,-,\ast,\max,
\min\}$, this ensures that all the gates in $C''$ take values in $[-1,1]$. 

\paragraph{Addition Gate [$G_{+}].$} 

We may construct an addition gate using two agents. The first agent has two unit input intervals that
we assume contain one cut each. This then forces a cut in the long output interval that has length 3. 
The second agent then truncates this value.

\begin{center}
\begin{tikzpicture}[scale = 0.5]
\draw (0,0) -- (11,0);
\draw (1,0) rectangle (2,1);
\filldraw [black] (1,0) circle (2pt) node[anchor = south]{};
\filldraw [black] (2,0) circle (2pt) node[anchor = south]{};

\draw (3,0) rectangle (4,1);
\filldraw [black] (3,0) circle (2pt) node[anchor = south]{};
\filldraw [black] (4,0) circle (2pt) node[anchor = south]{};

\draw (5,0) rectangle (8,1);
\filldraw [black] (5,0) circle (2pt) node[anchor = south]{};
\filldraw [black] (8,0) circle (2pt) node[anchor = south]{};

\draw (0,-2) -- (11,-2);

\draw (6,-2) rectangle (7,-1);
\draw (9,-2) rectangle (10,-1);
\filldraw [black] (9,-2) circle (2pt) node[anchor = south]{};
\filldraw [black] (10,-2) circle (2pt) node[anchor = south]{};
\end{tikzpicture} 
\end{center}

\paragraph{Constant Gate [$G_{\zeta}].$} Let $\zeta\in [-1,1]\cap\QQ$ be a rational
constant. The agent will have a block of unit height in the sign interval and a block of
width $\zeta/2$ and height $2/\zeta$ centered in another interval.

\begin{center}
\begin{tikzpicture}[scale = 0.5]
\draw (0,0) -- (6,0);
\draw (1,0) rectangle (2,1);
\filldraw [black] (1,0) circle (2pt) node[anchor = south]{};
\filldraw [black] (2,0) circle (2pt) node[anchor = south]{};

\filldraw [black] (4,0) circle (2pt) node[anchor = south]{};
\filldraw [black] (5,0) circle (2pt) node[anchor = south]{};
\draw (4.25,0) rectangle (4.75,2);
\draw [blue] (4.125,0) rectangle (4.25,2);
\draw [red] (4.75,0) rectangle (4.875,2);

\end{tikzpicture} 
\end{center}


Before proceeding with the remaining gates, we construct a general function gate,
an agent that implements any decreasing function. 

\paragraph{Function Gate [$G_h$].} Let $-1\leq a<b\leq 1$ and $-1\leq c<d\leq 1$
be rational numbers and consider a continuously differentiable map $h\colon
[a,b]\rightarrow [c,d]$ satisfying $h(a)=d$ and $h(c) = c$. Let $\overline{h}$
denote the extension of $h$ that is constant on $[-1,a]$ and $[b,1]$. We now construct 
an agent with input interval $I$ and output interval $O$ computing this map, that is the
agent should force a cut in the output interval such that $\overline{h}(v_p(I)) = v_p(O).$ 

The agent that we construct has a block of height $2/(d-c)$ in the sub-interval
$[(c+1)/2,(d+1)/2]$ of the output interval and density $f(z):=-2h'(2z-1)/(d-c)$ in the
sub-interval $((a+1)/2,(b+1)/2)$ of the input interval. We note that $f$ is positive in this
interval as $h$ is assumed to be a decreasing map, so it makes sense for the agent 
to have density $f.$ One may verify that the agent values the input interval and output
interval equally. We further add two rectangles to the output interval colored blue and
red in the sketch below. These will ensure that if the cut in the input interval is placed
at $z\leq (a+1)/2$ such that $v_p(I)\leq a$, then the cut in the output interval must be
placed at $z^{*}=(d+1)/2$, meaning that $v_p(O)=d.$ Similarly, if $v_p(I)\geq b$ then
$v_p(O)=c$.

\begin{center}
\begin{tikzpicture}[scale = 2]
\draw (0.7,0) -- (4.3,0);
\filldraw [black] (1,0) circle (0.5pt) node[anchor = north]{0};
\filldraw [black] (2,0) circle (0.5pt) node[anchor = north]{1};

\filldraw [black] (1.3,0) circle (0.5pt) node[anchor = north]{$\tfrac{a+1}{2}$};
\filldraw [black] (1.7,0) circle (0.5pt) node[anchor = north]{$\tfrac{b+1}{2}$};

\filldraw [black] (3,0) circle (0.5pt) node[anchor = north]{0};
\filldraw [black] (4,0) circle (0.5pt) node[anchor = north]{1};
\filldraw [black] (3.3,0) circle (0.5pt) node[anchor = north]{$\tfrac{c+1}{2}$};
\filldraw [black] (3.7,0) circle (0.5pt) node[anchor = north]{$\tfrac{d+1}{2}$};

\draw (3.3,0) rectangle (3.7,5/4);
\draw[thick,blue] (3.2,0) rectangle (3.3,1.25);
\draw[thick,red] (3.7,0) rectangle (3.8,1.25);

\draw[thin, dashed] (1.45,-0.5) node[anchor = west]{$z$}  -- (1.45,1.5);
\draw[thin, dashed] (3.55,-0.5) node[anchor = west]{$z^{*}$}  -- (3.55,1.5);

\draw (1.2,1.4) node{$+$};
\draw (1.7,1.4) node{$-$};

\draw (3.3,1.4) node{$+$};
\draw (3.8,1.4) node{$-$};

\draw (2.5,0.6) node{$\cdots$};
\end{tikzpicture} 
\end{center}
Suppose cuts are placed in $z$ in the input interval and in $z^{*}$ in the output
interval. As the agent must value the parts with positive and negative label equally,
we get the equality
\begin{align*}
1 = \int _{(a+1)/2}^{z}\tfrac{-2h'(2t-1)}{d-c}\, dt + \big(z^{*}-\tfrac{c+1}{2}\big) \tfrac{2}{d-c}
\end{align*}
From this we obtain that
\begin{align*}
d-c &=-\int_{a}^{2z-1}h'(u)\, du+2z^{*}-c-1\\
&=-h(2z-1)+d+2z^{*}-c-1
\end{align*}
where we use that $h(a)=d$ by assumption. We conclude that $h(2z-1)=2z^{*}-1$,
that is we obtain the equality $h(v_p(I))=v_p(O).$  


Using this general function gate, we may now build up the remaining
gates required by the circuit.

\paragraph{Multiplication By -1 Gate $[G_{-(\cdot)}]$.}  

In order to realise this gate, we consider the function $h\colon [-1,1]\rightarrow [-1,1]$
given by $x\mapsto -x$. The agent's density function in the input interval is then
given by $f(z)=1$.
\begin{center}
\begin{tikzpicture}[scale = 0.5]
\draw (0,0) -- (7,0);

\filldraw [black] (1,0) circle (1pt) node[anchor = north]{0};
\filldraw [black] (2,0) circle (1pt) node[anchor = north]{1};

\filldraw [black] (5,0) circle (1pt) node[anchor = north]{0};
\filldraw [black] (6,0) circle (1pt) node[anchor = north]{1};

\draw (1,0) rectangle (2,1);
\draw (5,0) rectangle (6,1);
\draw[thick, blue] (4.8,0) rectangle (5,1);
\draw[thick, red] (6,0) rectangle (6.2,1);

\draw[dashed] (1.7,-0.5) node[anchor=north] {$z$} -- (1.7,2.5);
\draw[dashed] (5.3,-0.5) node[anchor=north] {$z^{*}$} -- (5.3,2.5);

\draw (1.2,2) node{$+$};
\draw (2.2,2) node{$-$};

\draw (3.5,1) node{$\cdots$};

\draw (4.8,2) node{$+$};
\draw (5.7,2) node{$-$};
\end{tikzpicture} 
\end{center}

\paragraph{Subtraction Gate [$G_{-}$].}

We may build this using the gates $G_{-(\cdot)}$ and $G_{+}$. 

\paragraph{Multiplication by $\zeta\in [-1,1]$ [$G_{\cdot\zeta}$].}

If $\zeta <0$ we mahy construct $G_{\cdot\zeta}$ as a function gate
using the function $h\colon [-1,1]\rightarrow [\zeta,-\zeta]$. If $\zeta >0$
we construct using $-\zeta$ and a minus gate, i.e. $G_{\cdot\zeta} = -G_{\cdot (-\zeta)}$.

\paragraph{Maximum Gate [$G_{\max}]$.} 

First we show how to construct a gate computing the absolute value
of the input. We may construct gates $G_1,G_2$ such that $G_1(x)=
-\max(x,0)$ and $G_2(x)=\max(-x,0)$ as function gates by using the 
functions $h_1\colon [0,1]\rightarrow [-1,0]$ given by $x\mapsto -x$
and $h_2\colon [-1,0]\rightarrow [0,1]$ given by $x\mapsto -x$. Now,
we may constrcuct the absolute value gate as $G_{|\cdot|} = -G_1+G_2$.
We may now construct $G_{\max}$ by using the formula $\max(x,y)
=(x+y+|x-y|)/2$.

\paragraph{Minimum Gate [$G_{\min}]$.}

We may build this using $\min(x,y)=x+y-\max(x,y).$


\paragraph{Multiplication Gate [$G_{*}$].} 

We start off by constructing a gate squaring the input. First we construct
$G_1$ and $G_2$ as function gates with respect to $h_1\colon [-1,0]
\rightarrow [0,1]$ given by $x\mapsto x^2$ and $h_2\colon [0,1]
\rightarrow [-1,0]$ given by $x\mapsto -x^2$. Then we may construct
the squaring gate as $G_{(\cdot)^2}=G_1-G_2$. 
Now we may use the previously constructed gates to make a multiplication
gate via the identity $xy=((x+y)^2-x^2-y^2)/2$.

\subsection{Describing valuation functions as circuits.} 

In the description above, we described the valuations
of the agents by providing formulas for their densities. However, an instance of $\CH$
actually consists of a list of algebraic circuits computing the distribution functions of the agents.
In order to construct gates, it is
sufficient for agents to have densities that are piece-wise polynomial. Therefore, consider
an agent with polynomial densities $f_i$ in the intervals $[a_i,b_i)$ for $i = 1,\dots, s$,
and let $F_i$ denote the indefinite integral of $f_i$. We note that $F_i$ is a polynomial
so it may be computed by an algebraic circuit.  Now we claim that the distribution function
of this agent may be computed by an algebraic circuit via the formula
\begin{align}
F(x) = \sum_{i = 1}^{s}[F_i(\max(a_i,\min(x,b_i)))-F_i(a_i)]
\end{align}
This is the case, because the summands will be equal to $F_i(a_i)-F_i(a_i) = 0$ if $x<a_i,$
to $F_i(x)-F_i(a)$ if $a_i\leq x\leq b_i$ and to $F_i(b)-F_i(a)$ if $x>b_i$, meaning that this
formula does indeed calculate the valuation of the agent in the interval $[0,x].$

\subsection{Reduction and Correctness}

Recall that we are given an instance $(F,\eps)$ of the $\BBUA$ problem and
that we have to construct an instance of the $\CHA$ problem. The reduction now
outputs an instance of the $\CHA$ problem where the consensus halving 
instance is constructed as above with $p(n)=2n+1$ circuit simulators and 
the approximation parameter is given by $\eps ' = \eps/(4n)$. Let $z$ denote a solution
to this $\CHA$ instance. By definition, there exists an exact solution $z^{*}$ to the
consensus-halving problem such that $\norm{z-z^{*}}_{\infty}\leq\eps '$. 

Let $x$ and $x^{*}$ denote the values encoded by respectively $z$ and $z^{*}$
in the Coordinate-Encoding region. Suppose, generally, we are given an interval $I$
with a number of cut points $t_1,\dots, t_s$. Moving a cut point by a distance $\leq\eps'$ 
we create a new interval $I'$. This changes the label encoding by at most $2\eps'$, that is
$|v_l(I)-v_l(I')|\leq 2\eps'$. Succesively, if we move all the cuts by a distance
$\leq\eps'$, then we get an interval $I^{*}$ such that $|v_l(I)-v_l(I^{*})|
\leq 2s\eps' $. As $\norm{z-z^{*}}_{\infty}\leq\eps'$ and any of the subintervals 
in the Coordinate-encoding region can contain at most $n$ cuts, we conclude
that $\norm{x-x^{*}}_{\infty}\leq 2n\eps '=2n(\eps /(4n))=\eps /2$. In order to show that
$x$ is $\eps$-close to a zero of $F$, it now suffices by the triangle inequality to show that
$x^{*}$ is $(\eps/2)$-close to a zero of $F$. This will follow from the two following
lemmas.

\begin{lemma}
If there are no stray cuts in the exact solution $z^{*}$, then the associated value
$x^{*}$ encoded in the Coordinate-encoding region satisfies $F(x^{*})=0$. 
\end{lemma}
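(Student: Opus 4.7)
Assume the exact consensus halving $z^{*}$ has no stray cuts. The first step is to observe that, under this assumption, every sign-detection interval and every input interval of every circuit simulator contains exactly the cuts demanded by the construction. Hence each circuit simulator $C_{j}$ correctly computes $F(s_{j}\cdot x^{*})$, where $s_{j}\in\{+1,-1\}$ is the sign detected by the $j$-th sign-detection block, and deposits the entry $[F(s_{j}\cdot x^{*})]_{i}$ into the feedback interval $I_{ij}$ via its position encoding. By Observation~\ref{OBS:SignFlips}, either all the signs $s_{1},\dots,s_{p(n)}$ coincide, or $x^{*}\in S_{\infty}^{n-1}$. In the latter case the boundary oddness of $F$ gives $F(s_{j}\cdot x^{*})=s_{j}F(x^{*})$, while in the former all outputs coincide. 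In either case, for every coordinate $i$ the value stored in $I_{ij}$ has magnitude $|F_{i}(x^{*})|$ and sign controlled by $s_{j}$.

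The next step is to exploit the feedback agent $f_{i}$. Its total mass equals $p(n)=2n+1$, split into unit-mass Dirac blocks of width $\delta$ centred at $c_{ij}$. Since $z^{*}$ is an exact consensus halving, $f_{i}$ must value $A^{+}$ and $A^{-}$ equally, so each side receives mass $p(n)/2$, a half-integer. If $|F_{i}(x^{*})|>\delta$, then every cut in $I_{ij}$ lies strictly outside its Dirac block, each block therefore contributes either $0$ or $1$ to $A^{+}$, and the total contribution is an integer — contradicting the half-integer requirement. Consequently $|F_{i}(x^{*})|\leq\delta$, and every cut in the feedback intervals lies inside its Dirac block.

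Inside a Dirac block, a cut at signed offset $t_{j}$ from $c_{ij}$ gives a contribution of $\tfrac{1}{2}+t_{j}/\delta$ to $A^{+}$. Since the offset is determined by the position-encoded value, and that value equals $\pm F_{i}(x^{*})$ with a sign $\varepsilon_{j}\in\{+1,-1\}$ depending on the label orientation of $I_{ij}$ together with $s_{j}$, summing the contributions and equating with $p(n)/2$ yields
\[
\Bigl(\sum_{j=1}^{p(n)}\varepsilon_{j}\Bigr)\,F_{i}(x^{*})\;=\;0.
\]
Because $p(n)=2n+1$ is odd, the sum of $p(n)$ terms in $\{+1,-1\}$ has odd parity and is therefore non-zero. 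Hence $F_{i}(x^{*})=0$ for every $i\in[n]$, i.e.\ $F(x^{*})=0$, as claimed.

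\textbf{Expected obstacle.} The delicate point is the bookkeeping of sign factors in the boundary case: the sign $\varepsilon_{j}$ compounds the label orientation of $I_{ij}$ with the sign $s_{j}$ read off from the preceding sign-detection interval, and one must verify that this product lies in $\{+1,-1\}$ in every configuration, so that the odd parity of $p(n)=2n+1$ can be invoked to conclude that the scalar multiplier of $F_{i}(x^{*})$ is non-zero. The use of an \emph{odd} number of circuit simulators is what drives both the half-integer parity obstruction (ruling out cuts outside the Dirac blocks) and the non-vanishing of the coefficient in the linear equation above.
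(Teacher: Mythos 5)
Your proof is correct but follows a noticeably different route from the paper's, which is a one-step ``no cancellation'' argument: with no stray cuts, all circuit simulators read the same sign $s$ and therefore deposit the \emph{same} value with the \emph{same} label orientation into every feedback interval, so each Dirac block of $f_i$ is split between $A^+$ and $A^-$ in the identical proportion; balance then forces the cut through the centre of each block, i.e.\ $F_i(sx^*)=0$. You instead argue in two stages: first a parity step, using that $p(n)=2n+1$ is odd, to force $|F_i|\leq\delta$ (otherwise each block contributes $0$ or $1$ to $A^+$, giving an integer total where a half-integer is required); and then a linear relation $\bigl(\sum_j\varepsilon_j\bigr)F_i=0$ with $\varepsilon_j\in\{\pm1\}$, invoking odd parity again to get $\sum_j\varepsilon_j\neq0$. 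Both stages are sound, but the parity machinery is heavier than the situation warrants: the paper's proof of the companion stray-cut lemma makes explicit that a simulator with sign $s_j$ produces labeling orientation $s_j$ in the feedback interval, so $\varepsilon_j=s_j\cdot s_j=+1$ for every $j$ and $\sum_j\varepsilon_j=p(n)$ without any parity appeal. It is also worth flagging that the choice $p(n)=2n+1$ is motivated by the \emph{other} lemma (the $n+1$ uncorrupted simulators must outnumber the at most $n$ corrupted ones), so your parity argument is enabled by that choice but rests on an incidental feature of it. Your explicit treatment of the boundary case with mixed signs via Observation~\ref{OBS:SignFlips} is careful and does no harm, though under ``no stray cuts'' the paper simply takes all $s_j$ equal. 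Finally, both your proof and the paper's in fact establish $F(sx^*)=0$ for the common sign $s$, not literally $F(x^*)=0$; this matches the lemma statement's own slight imprecision and is repaired later by the solution map returning $sx$.
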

\begin{proof}
We recall that if the solution $z^{*}$ contain no stray cuts, then the signs of all
the circuit simulators are equal $s_1=\dots = s_{2n+1}=s$ where $s=\pm 1$. Furthermore, all the
circuit simulators will output the same values $F_1(s x^{*}),\dots, F_n(s x^{*})$
into the feedback intervals. Thus, there can be no cancellation, so in order for the
feedback agents to value the positive and negative part equally it must be the 
case that $F(s x^{*})=0$. 
\end{proof}

\begin{lemma}
If there is a stray  cut in the exact solution $z^{*}$, then the associated value
$x^{*}$ encoded in the Encoding-region satisfies the inequality 
$||F(x^{*})||_{\infty}\leq\delta$. 
\end{lemma}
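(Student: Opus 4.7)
My plan is to proceed by contradiction: assume that for some coordinate $i\in[n]$ we have $|F_i(x^*)|>\delta$, and derive an imbalance that contradicts the fact that the feedback agent $f_i$ is exactly halved in the solution $z^*$.

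The first ingredient is a counting argument for uncorrupted circuit simulators. Every gate agent placed inside a circuit simulator forces one cut inside that simulator, so the only cuts not a priori pinned down to a specific region are those placed by the $n$ feedback agents; each of them contributes exactly one cut, because the total number of cuts equals the total number of agents. Hence at most $n$ stray cuts appear in $z^*$, so at most $n$ of the $2n+1$ circuit simulators are corrupted, leaving at least $n+1$ uncorrupted.

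The next step is to show that in every uncorrupted simulator $C_j$ the Dirac block of $f_i$ inside the feedback interval $I_{ij}$ contributes its entire mass of $1$ to the same labeled side. For uncorrupted $C_j$, the position encoding of $I_{ij}$ equals $F_i(s_j x^*)$. By Observation~\ref{OBS:SignFlips}, we are either in the case where all signs $s_j$ coincide (so all uncorrupted position encodings equal $F_i(s x^*)$), or $x^*$ lies on $S_\infty^{n-1}$, where $F$ is odd and so $F_i(s_j x^*)=s_j F_i(x^*)$. With the labeling convention of $I_{ij}$ flipped in accordance with $s_j$ by the construction, the cut in $I_{ij}$ sits on the same side of the Dirac block for every uncorrupted $j$. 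Because $|F_i(x^*)|>\delta$ and the block has width $\delta$, the cut actually lies strictly outside the block, so the block's full mass lands on one fixed label, say $+$.

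The contradiction is now a mass count. The total mass of $f_i$ is $2n+1$, so balance requires $(2n+1)/2$ mass on each labeled side. The at least $n+1$ uncorrupted blocks together contribute at least $n+1$ units to the $+$ side, while the at most $n$ blocks inside corrupted simulators contribute at most $n$ units in total, no matter how they are cut. Therefore the $+$-mass is at least $n+1>(2n+1)/2$, contradicting balance, and we conclude that $\|F(x^*)\|_\infty\leq \delta$. I expect the main obstacle to be rigorously verifying the \emph{synchronized sign} claim of the previous paragraph: one has to confirm, in both cases of Observation~\ref{OBS:SignFlips}, that the sign-detection intervals and the oddness of $F$ on $\partial B_\infty^n$ conspire to make all uncorrupted Dirac-block contributions point to the same label side; the remainder of the argument is the elementary counting above.
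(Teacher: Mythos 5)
Your strategy — contradiction plus counting plus a mass-balance imbalance — is exactly the one the paper uses, and the counting ($\leq n$ stray cuts, $\geq n+1$ of the $2n+1$ simulators uncorrupted) and the final inequality $n+1 > (2n+1)/2$ are both right. However, there is a genuine gap in the ``synchronized sign'' step, which you yourself flag as the main obstacle but do not actually close.

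For an uncorrupted simulator $C_j$, the position encoding of $I_{ij}$ is $F_i(s_j x^*)$, not $F_i(x^*)$. To conclude that the cut lies strictly outside the Dirac block you need $|F_i(s_j x^*)| > \delta$. You invoke Observation~\ref{OBS:SignFlips} to split into the two cases (all $s_j$ coincide with some $s$, or $x^*\in S_\infty^{n-1}$), and in the boundary case oddness gives $|F_i(s_j x^*)| = |s_j F_i(x^*)| = |F_i(x^*)| > \delta$, so that branch is fine. But in the ``all signs coincide'' branch with $s=-1$, the position encodings are $F_i(-x^*)$, and the contradiction hypothesis $|F_i(x^*)|>\delta$ tells you nothing about $|F_i(-x^*)|$ unless you already know $x^*$ is on the boundary. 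So as written, your argument could fail to produce the required strictly-outside cut in that branch. The missing observation — which is precisely what the paper's proof supplies first — is that the stray-cut hypothesis \emph{forces} $x^*\in S_\infty^{n-1}$: with a stray cut, at most $n-1$ of the $n$ unit coordinate intervals receive a label-flipping cut, so at least one coordinate is encoded as $\pm 1$. Once you have $x^*\in S_\infty^{n-1}$, oddness of $F$ on the boundary applies unconditionally (there is no need for the case split via Observation~\ref{OBS:SignFlips} at all), and the rest of your argument goes through cleanly.
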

\begin{proof}
Suppose toward contradiction that $|F(x)_i|>\delta$ for some $i$. Without loss
of generality we assume that $F(x)_i>\delta$.  As there is a stray cut, the Coordinate-Encoding
region can contain at most $n-1$ cuts. Thus, at least one of the coordinates
$x^{*}_i$ must be $\pm 1$ showing that $x^{*}\in S^{n-1}$. From this and the 
boundary condition we conclude that $F(x^{*})=-F(-x^{*})$. Furthermore, there is 
at most $n$ stray cuts, so at most $n$ circuit simulators can become corrupted. This
means that $n+1$ circuit simulators work correctly. Now suppose that the circuit
simulator $C_j$ is uncorrupted. If the label is $s_j = +1$, then $C_j$ will output $F(x)$ into
the feedback region and the labeling sequence will be $+/-$; if the label is $s_j = -1$
then $C_j$ will output $F(-x)=-F(x)$ into the feedback region and the labeling sequence
will be $-/+$. This is indicated below:
\begin{center}
\begin{tikzpicture}[scale = 2.5]
\draw (1,0) -- (2.5,0);

\draw (3,0) -- (4.5,0);

\draw (1.1,0) -- (1.1,0) node[anchor = south]{$+$};
\draw (3.1,0) -- (3.1,0) node[anchor = south]{$-$};

\filldraw [black] (1.8,0) circle (0.5pt) node[anchor = south east]{};
\filldraw [black] (2.3,0) circle (0.5pt) node[anchor = south east]{};

\filldraw [black] (3.8,0) circle (0.5pt) node[anchor = south east]{};
\filldraw [black] (4.3,0) circle (0.5pt) node[anchor = south east]{};

\draw (2.04,0) rectangle (2.06,0.3);
\draw (4.04,0) rectangle (4.06,0.3);
\draw (1.9,0.3) -- (1.9,0.3) node[anchor = north]{$-$};
\draw (2.25,0.3) -- (2.25,0.3) node[anchor = north]{$+$};
\draw (2.25,-0.1) -- (2.25,-0.1) node[anchor = north]{$F(x)_i>\delta$};

\draw [dashed] (2.15,-0.1) -- (2.15, 0.4);
\draw [dashed] (3.95,-0.1) -- (3.95, 0.4);
\draw (3.85,0.3) -- (3.85,0.3) node[anchor = north]{$+$};
\draw (4.2,0.3) -- (4.2,0.3) node[anchor = north]{$-$};
\draw (3.85,-0.1) -- (3.85,-0.1) node[anchor = north]{$F(-x)_i=-F(x)_i<-\delta$};
\end{tikzpicture} 
\end{center}
From this we conclude that the $n+1$ uncorrupted circuit simulators
altogether contribute $(n+1)\delta$ to the part with negative label. However,
the $n$ corrupted circuit simulators can contribute at most $n\delta$ to the 
part with positive label. This implies that $f_i$ cannot value the negative
and positive part equally. This contradicts the assumption that $z^{*}$ is an 
exact consensus-halving. We conclude that $||F(x^{*})||_{\infty}\leq \delta$.
\end{proof} 

By the two lemmas above, it follows that the value $x^{*}$ encoded by the exact
consensus-halving $z^{*}$ satisfies the inequality $||F(x^{*})||_{\infty}\leq\delta$. 
By choice of $\delta$, this implies that there exists some $x^{**}$ such that
$\norm{x^{*}-x^{**}}_{\infty}\leq\eps /2$ and $F(x^{**})=0$. From the discussion before
the two lemmas, it follows that $x$ is $\eps$-close to a zero of $F$ and is thus a
solution to the $\BBUA$ instance $(F,\eps)$. 

\paragraph{Mapping back a Solution.}  

What remains is to show that we may recover a solution $x$ to the $\BBUA$ instance
from the solution $z$ to the $\CHA$ instance. Recall that in a solution $z = (z_1,\dots, z_N)$
to the consensus-halving problem $|z_i|$ and $\sgn(z_i)$ represents the length and label
of the $i$th interval. For $i\leq n$ and $j\leq n+1$ we introduce
\begin{align*}
& t_j = \sum_{k=1}^{j-1}|z_k|\\
& x_{ij}^{+} = \max(0,\min(t_{j-1}+z_j,i)-\max(t_{j-1},i-1))\\
& x_{ij}^{-} = \max(0,\min(t_{j-1}-z_j,i)-\max(t_{j-1},i-1))
\end{align*}
These numbers may be computed efficiently by a circuit over $\{+,-,\max,\min\}$. We notice
that if $z_j>0$ then $x_{ij}^{-}=0$ (and if $z_j<0$ then $x_{ij}^{+}=0$). Furthermore, by 
checking a couple of cases, one finds that if $z_j>0$ (respectively $z_j <0$) then $x_{ij}^{+}$
(respectively $x_{ij}^{-}$) is the length of the $j$th interval that is contained in $[i-1,i]$. As the
coordinate-encoding region can contain at most $n$ cuts (corresponding to at most $n+1$ intervals),
we deduce from the above that the values encoded can be computed as
\begin{align*}
x_i = \sum_{j = 1}^{n+1}x_{ij}^{+}-x_{ij}^{-}
\end{align*}
for every $i\leq n$. If there is a stray cut then both $x$ and $-x$ are valid solutions by the
boundary condition of $F$. If there is no stray cut, then $s_1=s_2=\cdots = s_{p(n)}=s=\sgn(z_1)$
by Observation~\ref{OBS:SignFlips} and in this case we may recover a solution as $sx$.

\subsection{Removing Root Gates.}
\label{SEC:RemoveRoots}

In this subsection, we argue by going through $\CHA$ that the strong approximation problems 
$\BUA = \BBUA$ do not change even if we allow the circuits to use root-operations as basic
operations. 

\begin{proposition}
The class $\ell_{\infty}-\BBUA$ remains unchanged even if we allow the circuits to use root-gates.  
\end{proposition}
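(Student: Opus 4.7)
The plan is to go through $\CHA$: reduce any $\BBUA$-with-roots instance to a $\CHA$ instance (whose distribution functions are $\BasisPlusMinusTimesDivMaxMin$-circuits, so no roots appear in the output), and then invoke Theorem~\ref{THM:CH-BUa-complete} together with $\BUA=\BBUA$ to conclude membership in the ordinary, root-free $\BBUA$. The key technical step is to extend the reduction of Section~\ref{SEC:BBU-to-CH} so that it accepts $\sqrt[k]{\cdot}$-gates in the input circuit.

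The extension proceeds by preprocessing. Given an instance $(I,k)$ with $\eps=2^{-k}$ whose circuit uses root gates, first invoke Lemma~\ref{AIN} together with the quantifier-elimination bounds of Section~\ref{SEC:ToolsRAG} to obtain a threshold $\delta \geq \eps^{2^{q(|I|)}}$ such that $\Norm{F(x)}_{\infty}\leq \delta$ forces $x$ to be $(\eps/2)$-close to an exact zero of $F$. Then replace each root gate by an $\eta$-precise Newton's-iteration sub-circuit over $\BasisPlusMinusTimesDiv$, where $\eta$ is polynomially smaller than $\delta$; the doubly exponential convergence of Newton's method means $O(\operatorname{poly}(|I|)+\log k)$ iterations suffice, so the modified circuit $C_1$ has polynomial size in $|I|$ and $k$. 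Finally apply the division-elimination procedure of Section~\ref{SEC:EliminateDivision} to obtain a $\BasisPlusMinusTimesMaxMin$-circuit $C_2$ defining an ordinary $\BBUA$ instance, and feed $C_2$ into the reduction of Theorem~\ref{THM:CH-BUa-complete}.

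For correctness, one must verify that the oddness of $F$ on the boundary survives the approximation. Newton's formula $y_{n+1}=\tfrac{1}{k}((k-1)y_n + x/y_n^{k-1})$ applied to an odd-in-input initial guess preserves oddness of all iterates when $k$ is odd, so those root gates remain compatible with oddness on the boundary. For even $k$, well-definedness of the original circuit forces the input to the root gate to be non-negative, so the approximation can be carried out without sign issues; any residual oddness defect can be killed by explicit symmetrization at a constant-factor cost in $\eta$. Propagating the per-gate error through the polynomial-size circuit yields $\Norm{F_1-F}_{\infty}\leq \delta/2$, and the division elimination of Section~\ref{SEC:EliminateDivision} preserves zeros on the boundary and scales interior zeros by non-vanishing denominators. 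An $\eps'$-approximate solution of the resulting $\CHA$ instance therefore maps back to a point $y$ with $\Norm{F(y)}_{\infty}\leq \delta$, which by the choice of $\delta$ is $\eps$-close to an actual zero of $F$.

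The main obstacle is the error-propagation bookkeeping across the three approximation layers (Newton replacement, division elimination, consensus-halving reduction): $\eta$ must be small enough that their composition is absorbed by $\delta$, yet the Newton sub-circuits must remain polynomial-size. This balance is precisely enabled by the quadratic convergence of Newton's method, which reduces the required depth from $\log(1/\eta)$ to $\log\log(1/\eta)=\operatorname{poly}(|I|)+\log k$; a slower root-approximation scheme would not yield a polynomial-time reduction, and this is why the proposition sits naturally inside the approximation framework rather than the exact one.
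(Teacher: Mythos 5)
Your overall strategy (go through $\CHA$ and invoke $\BUA=\BBUA$) matches the paper's, but the crucial technical step differs in a way that introduces a genuine gap. The paper handles $\sqrt[k]{\cdot}$-gates \emph{exactly} at the consensus-halving level: the general function-gate construction already implements the power map $x\mapsto x^k$ by an agent with a polynomial density, and the agent implementing the root gate is obtained simply by interchanging the input and output intervals of the power-gate agent. This yields an exact, division-free $\CH$ instance with no accumulated approximation error, and the rest is just chasing the already-established domain changes ($\CHA\to\ell_1$-$\BUA\to\ell_1$-$\BBUA\to\ell_\infty$-$\BBUA$). Your plan replaces each root gate by a numerical Newton-iteration sub-circuit, which is a fundamentally different route.

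The gap in your route is the claim that $O(\operatorname{poly}(\lvert I\rvert)+\log k)$ Newton iterations suffice. Quadratic convergence of the iteration $y\mapsto \frac{1}{k}\bigl((k-1)y + x\,y^{1-k}\bigr)$ holds only in a neighbourhood of the root $x^{1/k}$; for inputs $x$ close to $0$ (which can certainly arise at internal gates) and a fixed initial guess, the iterates contract only at the linear rate $(k-1)/k$ until they reach the right order of magnitude, which costs $\Theta\bigl(k\log(1/x)\bigr)$ steps. Since the per-gate accuracy $\eta$ you need is governed by $\delta\approx\eps^{2^{q(\lvert I\rvert)}}$, the relevant range of $x$ forces $\Theta(2^{q(\lvert I\rvert)}k)$ iterations --- exponential, not polynomial. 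This is not an artifact of Newton's method: the map $x\mapsto x^{1/k}$ on $[0,1]$ has unbounded derivative at $0$, so any function computed by a polynomial-size, polynomial-bitsize algebraic circuit (whose Lipschitz constant on $[0,1]$ is at most singly-exponential in the circuit size) cannot approximate it to the doubly-exponentially small accuracy that $\delta$ demands. A secondary, lesser issue is that after division elimination the function changes from $F$ to $F^{\ast}$ with coordinates $F_i'(x)F_i''(-x)$; your $\delta$ was extracted for $F$, and the ``almost implies near'' threshold must be re-derived for $F^{\ast}$ before the final $\CHA$ step is sound. The paper's swap-the-intervals trick circumvents all of this because the root constraint is enforced by the agent without ever being evaluated by an arithmetic circuit.
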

\begin{proof}
Let $\Pi_a$ be a basic $\ell_{\infty}-\BBUA$ problem where the circuits are allowed to use
gates from the basis $\{+,-,\ast,\max,\min,\sqrt[k]{}\}$. In the previous section, we constructed a
polynomial time reduction 
from $\Pi_a$ to a $\CHA$ problem $\Gamma_a$ in such a way that the circuits computing the 
distribution functions of the agents are defined over $\{+,-,\ast,\max,\min\}$. Namely, 
the root gates can be implemented by first noting that the power-gate $(\cdot)^k$ can be 
implemented by an agent with polynomial densities by using the general function gate 
construction. Then, in order to construct an agent implementing the root gate we simply 
interchange the input interval and output interval of the power-gate. By the proof of 
the result of Deligkas et al. that $\CH$ is contained in $\BU$, the problem $\Gamma_a$ 
polynomial time reduces to a $\ell_1 -\BUA$ problem $\Lambda$ that only uses gates
from $\{+,-,\ast,\max,\min\}$. By Proposition~\ref{PROP:BasicBUaToBasicBBUa}, $\Lambda$ reduces
to a basic $\ell_1 -\BBUA$ problem $\Xi$ which again uses only gates from $\{+,-,\ast,\max,\min\}$. 
Finally, by Proposition~\ref{PROP:BBU-p-to-inf}, $\Xi$ reduces to a basic $\ell_\infty -\BBUA$ problem, again
using only gates from $\{+,-,\ast,\max,\min\}$. Altogether, we see that $\Pi_a$ polynomial time reduces to
a $\ell_{\infty}-\BBUA$ without root-gates.
\end{proof}

\printbibliography

\end{document}